\let\@twosidetrue\@twosidefalse
\let\@mparswitchtrue\@mparswitchfalse
\newcounter{lpnumber} \setcounter{lpnumber}{0}
\newtheorem{observation}{Observation}
\newtheorem{new-claim}{Claim}
\newcommand{\vote}{\mathsf{vote}}
\newenvironment{linearprogram}[1]
{ \stepcounter{lpnumber}
  \begin{gather} #1 \tag{LP\arabic{lpnumber}} \\[-5ex] \notag
  \end{gather}
  \hspace{1.5cm} subject to \\[-3ex]
  \align }
{ \endalign }
\newcommand{\minimize}[1]{\text{minimize} \ #1}
\newcommand{\maximize}[1]{\text{maximize} \ #1}
\definecolor{MyPurple}{RGB}{197,0,205}
\pgfmathsetmacro{\d}{1.7}
\pgfmathsetmacro{\b}{2.5}
\tikzstyle{vertex} = [circle, draw=black, fill=black, inner sep=0pt,  minimum size=5pt]
	\tikzstyle{edgelabel} = [circle, fill=white, inner sep=0pt,  minimum size=12pt]
\tikzstyle{arrow} = [line width=0.8mm,-implies,double, double distance=0.8mm]
\tikzstyle{dashedpointedline} = [line width=0.2mm,dashed,dash pattern=on 2mm off 1mm,
\tikzstyle{pointedline} = [line width=0.3mm,
\tikzstyle{pointedline2} = [line width=0.3mm,
\newcommand{\Xomit}[1]{}
\newcommand{\new}[1]{\textcolor{black}{#1}}
\begin{document}
\title{Popular Matchings in Complete Graphs\thanks{A preliminary version of this work appeared in FSTTCS 2018~\cite{CK18}.}}
\author{\'Agnes Cseh\inst{1}\thanks{Supported by  the Federal
Ministry of Education and Research of Germany in the framework of KI-LAB-ITSE
(project number 01IS19066), the Hungarian Academy of Sciences under its Momentum Programme (LP2016-3/2020), OTKA grant K128611, and  COST Action CA16228 European Network for Game Theory.} \and Telikepalli Kavitha\inst{2}\thanks{This work was done while visiting the Hungarian Academy of Sciences, Budapest.}}
\institute{Centre for Economic and Regional Studies, Institute of Economics, Budapest;\\Hasso Plattner Institute, University of Potsdam, Potsdam; \email{agnes.cseh@hpi.de} \and Tata Institute of Fundamental Research, Mumbai; \email{kavitha@tifr.res.in}}
\maketitle
\pagestyle{plain}

\begin{abstract}
Our input is a complete graph $G$ on $n$ vertices where each vertex has a strict ranking of all other vertices in $G$. The goal is to construct a matching in $G$ that is {\em popular}. A matching $M$ is popular if $M$ does not lose a head-to-head election against any matching $M'$: here each vertex casts a vote for the matching in $\{M,M'\}$ in which it gets a better assignment. Popular matchings need not exist in the  given instance $G$ and the popular matching problem is to decide whether one exists or not. The popular matching problem in $G$ is easy to solve for odd~$n$. Surprisingly, the problem becomes $\NP$-complete for even $n$, as we show here. This is one of the few graph theoretic problems efficiently solvable when $n$ has one parity and $\NP$-complete when $n$ has the other parity.
\end{abstract}

\section{Introduction}
\label{sec:intro}
Consider a complete graph $G = (V,E)$ on $n$ vertices where each vertex ranks all other vertices in a strict order of preference.
Such a graph is called a {\em roommates} instance with complete preferences. The problem of computing a stable matching in $G$ is
classical and well-studied. Recall that a matching $M$ is stable if there is no {\em blocking pair} with respect to $M$, i.e.,
a pair $(u,v)$ where both $u$ and $v$ prefer each other to their respective assignments in $M$.

Stable matchings need not always exist in a roommates instance. For example, the instance
given in Fig.~\ref{fig:nosm} on 4 vertices $d_0,d_1,d_2,d_3$ has no stable matching. (Here $d_0$'s top choice is $d_1$, second choice is $d_2$, and last choice is $d_3$, and similarly for the other vertices.)

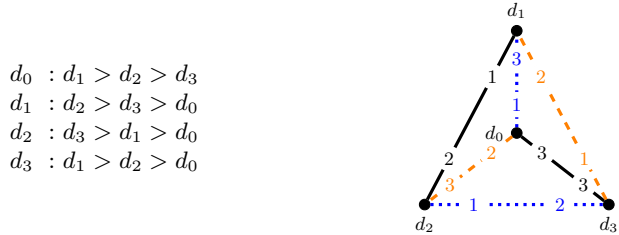
\begin{figure}[h!]
	\centering	
	\begin{minipage}{0.4\textwidth}
		\[
		\begin{array}{ll}
		d_0 \ : & d_1 > d_2 > d_3\\
		d_1 \ : & d_2 > d_3 > d_0\\
		d_2 \ : & d_3 > d_1 > d_0\\
		d_3 \ : & d_1 > d_2 > d_0\\
		\end{array}
		\]
	\end{minipage}\hspace{10mm}\begin{minipage}{0.4\textwidth}
	\begin{tikzpicture}[scale=0.9, transform shape]
	\node[vertex, label=left:$d_0$] (d0) at (0,0) {};
	\node[vertex, label=above:$d_1$] (d1) at (0,\d) {};
	\node[vertex, label=below:$d_2$] (d2) at (-0.9*\d,-0.7*\d) {};
	\node[vertex, label=below:$d_3$] (d3) at (0.9*\d,-0.7*\d) {};
	
	\draw [very thick, blue, dotted] (d0) -- node[edgelabel, near start] {1} node[edgelabel, near end] {3} (d1);
	\draw [very thick, dashed, orange] (d0) -- node[edgelabel, near start] {2} node[edgelabel, near end] {3} (d2);
	\draw [very thick] (d0) -- node[edgelabel, near start] {3} node[edgelabel, near end] {3} (d3);
	
	\draw [very thick] (d1) -- node[edgelabel, near start] {1} node[edgelabel, near end] {2} (d2);
	\draw [very thick, blue, dotted] (d2) -- node[edgelabel, near start] {1} node[edgelabel, near end] {2} (d3);
	\draw [very thick, dashed, orange] (d3) -- node[edgelabel, near start] {1} node[edgelabel, near end] {2} (d1);

	\end{tikzpicture}
\end{minipage}
\caption{An instance that admits two popular matchings---marked by dotted blue and dashed orange edges---but no stable matching. The preference list of each vertex is illustrated by the numbers on its edges: a lower number indicates a more preferred neighbor.}
\label{fig:nosm}
\end{figure}

Irving~\cite{Irv85} gave an efficient algorithm to decide if $G$ admits 
a stable matching or not.
In this paper we consider a notion that is more relaxed than stability: this is the notion of {\em popularity}. 
For any vertex $u$, a ranking over neighbors can be extended naturally to a ranking over matchings as follows:
$u$ prefers matching $M$ to matching $M'$ if (i)~$u$ is matched in $M$ and unmatched in $M'$ or
(ii)~$u$ is matched in both and it prefers its partner in $M$ to its partner in $M'$.
For any two matchings $M$ and $M'$, let $\phi(M,M')$ be the number of vertices that prefer $M$ to $M'$.

\begin{definition}
\label{pop-def}
Let $M$ be any matching in $G$. $M$ is {\em popular} if  $\phi(M,M') \ge  \phi(M',M)$ for every matching $M'$ in $G$.
\end{definition}

Suppose an election is held between $M$ and $M'$ where each vertex casts a vote for the matching that it prefers. So $\phi(M,M')$ (similarly, $\phi(M',M)$) is the number of votes for $M$ (resp., $M'$).  A popular matching $M$ never loses an election to another matching $M'$ since $\phi(M,M') \ge  \phi(M',M)$: thus it is a weak {\em Condorcet winner}~\cite{Con85,wiki-condorcet} in the corresponding
voting instance. So popularity captures collective decision by the vertex set and it can be considered as a natural relaxation
of stability.

The notion of popularity was first introduced in bipartite graphs in 1975 by G\"ardenfors---popular matchings always exist in bipartite graphs since stable matchings always exist here~\cite{GS62} and every stable matching is popular~\cite{Gar75}. The proof that every stable matching is popular holds in non-bipartite graphs as well~\cite{Chu00}; in fact, it is easy to show that every stable matching is a min-size popular matching~\cite{HK11}. Relaxing the constraint of stability to popularity allows us to find feasible matchings that may exist in instances that do not admit stable matchings; moreover, even when stable matchings exist, there may be popular matchings that achieve more ``social good'' (such as larger size), which might be relevant in many applications.

Observe that the instance in Fig.~\ref{fig:nosm} has two popular matchings: $M_1 = \{(d_0,d_1),(d_2,d_3)\}$ and 
$M_2 = \{(d_0,d_2),(d_1,d_3)\}$. However as was the case with stable matchings, popular matchings also need not always exist in the given 
instance $G$. Just take, for example, the same instance as in Fig.~\ref{fig:nosm}, but without vertex~$d_0$. A complete graph $G$
on an even number of vertices that has no popular matching is also easy to describe: take {\em two} copies of this instance on
3 vertices, i.e., $d_1,d_2,d_3$ with preferences as given in Fig.~\ref{fig:nosm} (without vertex~$d_0$) and three more vertices
$d'_1,d'_2,d'_3$ whose preferences are analogous to $d_1,d_2,d_3$, respectively. Since the instance has to be complete, add $d_1,d_2,d_3$
(similarly, $d'_1,d'_2,d'_3$) at the tail of preference lists of $d'_1,d'_2,d'_3$ (resp., $d_1,d_2,d_3$) in some arbitrary order. It is easy
to check that this instance on 6 vertices has no popular matching.

The {\em popular roommates problem} is to decide if $G$ admits a popular
matching or not. When the graph is not complete, it is known that the popular roommates problem is $\NP$-complete~\cite{FKPZ18,GMSZ18}. Here 
we are interested in the complexity of the popular roommates problem when the input instance is complete.

Interestingly, several popular matching problems that are intractable in bipartite graphs become tractable in {\em complete 
bipartite} graphs. The min-cost popular matching problem in bipartite graphs is such a problem---this is $\NP$-hard in a bipartite graph with incomplete lists~\cite{FKPZ18}, however it can be solved in polynomial time in a bipartite graph with complete lists~\cite{CK16}. The difference is due to the fact that while there is no 
compact extended formulation of the convex hull of edge incidence vectors of all popular matchings in a general bipartite graph~\cite{FK20}, this polytope has a compact extended formulation in a complete bipartite graph.

It is a simple observation (see Section~\ref{sec:prelims}) that when $n$ is {\em odd}, a matching in a complete graph $G$ on $n$ vertices is popular only if it is stable. Since there is an efficient algorithm to decide if $G$ admits a stable matching or not, the popular roommates problem in a complete graph $G$ can be efficiently solved when $n$ is odd.
%
%
We show the following result here.

\begin{theorem}
  \label{thm:main}
 Let $G$ be a complete graph on $n$ vertices, where $n$ is even. The problem of deciding whether $G$ admits a popular matching or not is $\NP$-complete.
\end{theorem}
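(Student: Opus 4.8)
The plan is to establish NP-hardness by reduction from the \emph{popular roommates problem} on graphs with \emph{incomplete} preference lists, which is already known to be NP-hard~\cite{FKPZ18,GMSZ18}. Given an arbitrary incomplete roommates instance $H=(V_H,E_H)$, I would build a complete graph $G$ on an \emph{even} number of vertices together with an efficiently checkable correspondence between popular matchings of $G$ and popular matchings of $H$. The whole difficulty is that in a complete graph every non-edge of $H$ becomes a genuine edge, so the reduction must force any popular matching of $G$ to behave, on $V_H$, exactly as a popular matching of $H$---in particular, never to use a pair that is a non-edge of $H$.

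The construction I would use keeps the vertices of $V_H$ and, for each $v\in V_H$, lets $v$ rank its $H$-neighbours first, in their original relative order. Below these ``real'' neighbours I would attach, for each forbidden pair and for each vertex that may be left unmatched in a popular matching of $H$, a block of auxiliary \emph{penalty} vertices, arranged so that (i)~the total number of vertices is even, and (ii)~each penalty vertex is far more attractive to its intended partner than any forbidden real pair. The purpose of the penalty gadgets is to give every vertex a guaranteed ``good'' fallback inside the gadget, so that being matched along a non-edge of $H$ is always dominated. A convenient building block is the four-vertex configuration of Fig.~\ref{fig:nosm}, which admits popular but no stable matchings; attaching such configurations lets me simulate, on an even vertex set, the ``unstable-yet-popular'' behaviour that is impossible when $n$ is odd (where popularity collapses to stability).

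With the construction in hand I would argue the two directions. For the forward direction, given a popular matching $M_H$ of $H$, I would extend it to $G$ by matching every gadget internally along its designed popular configuration and pairing each $H$-unmatched vertex with its reserved penalty partner; a vote-counting check then shows the resulting matching $M$ is popular in $G$. For the reverse direction, given a popular matching $M$ of $G$, I would show that $M$ must match every penalty gadget internally and must avoid all non-edges of $H$, so that $M\cap E_H$ is a popular matching of $H$, completing the equivalence.

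The hard part is precisely this reverse direction: proving that no popular matching of the \emph{complete} graph ever uses a forbidden pair or breaks a gadget. The natural tool is an exchange/voting argument---assuming $M$ uses a forbidden edge or mismatches a gadget, I would exhibit an alternating structure (a cycle or path through the gadget) along which rematching strictly increases the number of favourable votes, producing a matching $M'$ with $\phi(M',M)>\phi(M,M')$ and contradicting popularity. Making this watertight is delicate because in a complete graph every one of the $n$ vertices has a definite opinion in the election between $M$ and $M'$, so the gadgets must be engineered so that the votes gained along the alternating structure provably dominate every possible counter-vote elsewhere. Getting the penalty gadgets' preferences and their multiplicities right so that this domination always holds, while simultaneously keeping $n$ even, is the crux of the proof.
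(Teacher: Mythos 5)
Your proposal is a reduction plan, not a proof, and the plan's central step is exactly the part you leave open. You reduce from the incomplete popular roommates problem and posit ``penalty'' gadgets with two unproven properties: (i)~every popular matching of $H$ extends to a popular matching of $G$, and (ii)~every popular matching of $G$ projects to one of $H$. Both directions hide serious obstructions that your sketch does not resolve. For (ii), the most dangerous alternative is not a matching that uses a single forbidden pair but the ``mass retreat'' matching in which many (or all) real vertices abandon their $H$-neighbours for their penalty partners; since you must attach a penalty block to \emph{every} vertex (you cannot know in advance which vertices a popular matching of $H$ leaves unmatched---that is itself an intractable question, closely related to the exclusive popular set problem of \cite{FKPZ18}), such retreat matchings exist in $G$ and could be popular even when $H$ has no popular matching at all, which would break the reduction outright. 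For (i), popularity is not preserved under extension: completing the graph introduces new alternating structures through the gadgets and through non-edges of $H$, and the election between the extended $M$ and an adversary matching now includes the votes of all penalty vertices; ``a vote-counting check'' is precisely the global accounting that needs a certificate, and you give none. Note also that in $G$ every popular matching must be perfect (even $n$, complete graph), so the correspondence between possibly-non-perfect popular matchings of $H$ and perfect popular matchings of $G$ has to be engineered, not assumed.

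The paper sidesteps all of this by \emph{not} reducing from the incomplete roommates problem. It reduces directly from 1-in-3 SAT (Section~\ref{sec:tech}), building tailor-made gadgets whose intended popular matchings are perfect on each gadget, so no analogue of the unmatched-vertex/retreat issue ever arises. One direction (Theorem~\ref{thm1}) is proved via the forbidden alternating path/cycle characterization of Theorem~\ref{thm:char-popular}, through a chain of lemmas (e.g.\ Lemma~\ref{lem:better-than-d} and Lemma~\ref{lem:no-pop-general}) showing that popular matchings use only intra-gadget edges, which forces a 1-in-3 assignment. The converse (Theorem~\ref{thm2}) does not rely on an exchange argument at all: popularity of the constructed matching is certified by an explicit dual solution $\vec{\alpha}$ to the LP \ref{LP1}, whose values are chosen edge-by-edge to cover all inter-gadget edges---this is the ``domination'' bookkeeping your sketch acknowledges as the crux but never supplies. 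Unless you can exhibit concrete penalty gadgets together with such a certificate for the forward direction and a structural argument excluding retreat matchings for the reverse direction, your approach does not yield a proof.
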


So the popular roommates problem with complete preference lists is $\NP$-complete for even $n$ while it is easy to solve for odd $n$. Some problems possess an inherently different nature depending on the parity of some characteristic input parameter, such as Latin squares~\cite{Jan95} or various problems in voting~\cite{FHS10}. Popular matchings do not belong to this set of problems---note that the popular roommates problem is non-trivial for every $n \ge 5$, i.e., there are both ``yes instances'' and ``no instances'' of size~$n$. It is rare and unusual for a natural decision problem in combinatorial optimization to be efficiently solvable when $n$ has one parity and become $\NP$-complete when $n$ has the other parity. We are not aware of any natural optimization problem on graphs that is non-trivially tractable when the cardinality of the vertex set has one parity, which becomes intractable for the other parity.

\subsection{Background and related work}
The first polynomial time algorithm for the stable roommates problem was given by Irving~\cite{Irv85} in 1985. 
Roommates instances that admit stable matchings were characterized in \cite{Tan91}.
New polynomial time algorithms for the stable roommates problem were given in \cite{Sub94,TS98}.

Algorithmic questions for popular matchings in bipartite graphs  have been well-studied in the last decade~\cite{BIM10,CK16,HK11,HK17,KMN09,Kav12,Kav16}. Not much was known on popular matchings in non-bipartite graphs. Bir\'o et al.~\cite{BIM10} proved that validating whether a given matching is popular can be done in polynomial time, even when ties are present in the preference lists. It was shown in \cite{HK13} that every roommates instance on $n$ vertices admits a matching with {\em unpopularity factor} $O(\log n)$ and that it is $\NP$-hard to compute a least unpopularity factor matching. It was shown in \cite{HK17} that computing a max-weight popular matching in a roommates instance with edge weights is $\NP$-hard, and more recently, that computing a max-size popular matching in a roommates instance is $\NP$-hard~\cite{BK19}.

The complexity of the popular roommates problem was open for several years~\cite{BIM10,Cseh17,HK13,HK17,Man2013} and  two independent $\NP$-completeness proofs~\cite{FKPZ18,GMSZ18} of this problem were shown very recently. Interestingly, both these hardness proofs need ``incomplete preference lists'', i.e., the underlying graph is {\em not} complete. The reduction in \cite{GMSZ18} is from a variant of the vertex cover problem called the {\em partitioned vertex cover} problem and we discuss the reduction in \cite{FKPZ18} in Section~\ref{sec:tech} below. So the complexity status of the popular roommates problem in a complete graph was an open problem and we resolve it here.

An interpretation of roommate instances with complete preference lists might be that each vertex finds every other vertex acceptable, or that being matched to any vertex is better than being unmatched, or that there is no outside option and the agents are all obliged to be matched within the market. Computational hardness for instances with complete lists has been investigated in various matching problems under preferences. An example is the three-sided stable matching problem with cyclic preferences: this involves three groups of participants, say, men, women, and dogs, where dogs have strictly ordered preferences over men only, men have preferences over women only, and finally, women only list the dogs. If these preferences are allowed to be incomplete, the problem of finding a {\em weakly} stable matching is known to be $\NP$-complete~\cite{BM10}. Until very recently, it had been one of the most intriguing open questions in stable matchings~\cite{Man2013,Woe13} as to whether the same problem becomes tractable when lists are complete. Lam and Plaxton gave a hardness proof for complete preference lists very recently, disproving the published conjectures~\cite{LP19}. 


\subsection{Techniques}
\label{sec:tech}
The 1-in-3 SAT problem is a well-known $\NP$-complete problem~\cite{Sch78}: it consists of a Boolean formula $B$ in CNF where every clause has
3 literals (none negated) and the problem is to find a satisfying truth assignment to the variables in $B$ such that every clause has exactly one
literal set to $\mathsf{true}$. We show a polynomial time reduction from 1-in-3 SAT to the popular roommates problem with complete lists.

Our construction is based on the reduction in \cite{FKPZ18} that proved the $\NP$-completeness of the popular roommates problem.
However there are several differences between our reduction and the reduction in \cite{FKPZ18}. The reduction in
\cite{FKPZ18} considered a popular matching problem in bipartite graphs called the ``exclusive popular set'' problem
and showed it to be $\NP$-complete---when preference lists are complete, this problem can be easily solved. Thus
the reduction in \cite{FKPZ18} needs incomplete preference lists.

The exclusive popular set problem asks if there is a popular matching in the given bipartite graph where the set of matched vertices is 
$S$, for a given even-sized subset $S$. A key step in the reduction in~\cite{FKPZ18} from this problem in bipartite graphs to the popular 
matching problem in non-bipartite graphs merges all vertices outside $S$ into a single node. Thus the total number of vertices in the 
non-bipartite graph used in \cite{FKPZ18} is {\em odd}. Moreover, the fact that popular matchings always exist in bipartite graphs 
is crucially used in this reduction. However in our setting, the whole problem is to decide if {\em any} popular 
matching exists in the given graph---thus there are no  popular matchings that ``always exist'' here.

The reduction in \cite{FKPZ18} primarily uses the LP framework of popular matchings in bipartite graphs
from \cite{KMN09,Kav12,Kav18} to analyze the structure of popular matchings in their instance. The LP framework characterizing
popular matchings in non-bipartite graphs is more complex~\cite{Kav18}, so we use the combinatorial characterization of popular 
matchings~\cite{HK11} in terms of forbidden alternating paths/cycles to show that any popular matching in our instance
will yield a 1-in-3 satisfying truth assignment for $B$.
To show the converse, we use a dual certificate similar to the one used in \cite{FKPZ18} to prove the popularity of the matching that we construct using a 1-in-3 satisfying truth assignment for~$B$.

\paragraph{Organization of the paper.} We discuss preliminaries in Section~\ref{sec:prelims}.
Section~\ref{sec:gadgets} describes the construction of our complete graph $G$ corresponding to a given 1-in-3 SAT formula~$B$.
Section~\ref{sec:pop-edges} studies the structure of the graph $G$ and Section~\ref{sec:states} shows that any popular matching in $G$ 
yields a 1-in-3 satisfying truth assignment for~$B$. Section~\ref{se:reverse-construction} completes the reduction by showing how to obtain 
a popular matching in $G$ from any 1-in-3 satisfying truth assignment for~$B$.

\section{Preliminaries}
\label{sec:prelims}

This section contains a characterization of popular matchings from \cite{HK11}. We also include a simple proof of the claim stated in Section~\ref{sec:intro} that when $n$ is odd, every popular matching in $G$ has to be stable.

Let $M$ be any matching in $G = (V,E)$. For any pair $(u,v) \notin M$, define $\vote_u(v,M)$ as follows: (here $M(u)$ is $u$'s partner in $M$
and $M(u) = \mathsf{null}$ if $u$ is unmatched in $M$)
\begin{equation*} 
\vote_u(v,M) = \begin{cases} +   & \text{if\ $u$\ prefers\ $v$\ to\ $M(u)$};\\
	                     - &  \text{if\ $u$\ prefers\ $M(u)$\ to\ $v$.}			
\end{cases}
\end{equation*}

Label every edge $(u,v)$ that does not belong to $M$ by the pair $(\vote_u(v,M),\vote_v(u,M))$. Thus every non-matching edge has a label in $\{(\pm,\pm)\}$. For example, if we consider the matching marked by the dashed orange edges in Fig.~\ref{fig:nosm}, then $(d_1,d_2)$ is labeled $(+,+)$, $(d_2,d_3)$ is labeled $(+,-)$, $(d_0,d_1)$ is labeled $(+,-)$, and $(d_0,d_3)$ is labeled $(-,-)$. Note that an edge is labeled $(+,+)$ if and only if it is a blocking edge to $M$.

We remind the reader that an alternating path/cycle with respect to $M$ is a path/cycle whose alternate edges belong to $M$: thus edges in this path/cycle alternate between belonging to the
matching $M$ and {\em not} belonging to the matching $M$.
Let $G_M$ be the subgraph of $G$ obtained by deleting edges labeled $(-,-)$ from $G$. 
The following theorem characterizes popular matchings in $G$. 

\begin{theorem}[\cite{HK11}]
  \label{thm:char-popular}
$M$ is popular in $G$ if and only if $G_M$ does not contain any of the following with respect to~$M$:
	\begin{enumerate}
		\item[(1)] an alternating cycle with a $(+,+)$ edge;
		\item[(2)] an alternating path with two distinct $(+,+)$ edges;
		\item[(3)] an alternating path with a $(+,+)$ edge and an unmatched vertex as an endpoint of the path.
	\end{enumerate}
\end{theorem}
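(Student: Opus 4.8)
The plan is to compare $M$ against an arbitrary matching $M'$ through their symmetric difference $M \oplus M'$, which is a vertex-disjoint union of alternating paths and cycles, and to reduce the global inequality in Definition~\ref{pop-def} to a condition local to each such component. For a component $C$ let $\Delta_C$ be the number of vertices of $C$ that prefer $M'$ minus the number that prefer $M$; vertices lying in no component are matched identically by $M$ and $M'$ and contribute nothing, so $\phi(M',M)-\phi(M,M')=\sum_C \Delta_C$. Since distinct components are vertex-disjoint and can be flipped independently, $M$ is popular if and only if no alternating path or cycle $C$ (with respect to $M$) for which $M\oplus C$ is again a matching satisfies $\Delta_C>0$. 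Reading the votes off the labels then gives
\[
\Delta_C \;=\; 2\bigl(\#(+,+)\text{ edges of }C-\#(-,-)\text{ edges of }C\bigr)-p_C,
\]
where $p_C$ is the number of endpoints of $C$ incident to an $M$-edge (equivalently, the vertices of $C$ matched by $M$ but unmatched by $M'$): each non-matching edge contributes $+1$, $0$, or $-1$ to each endpoint according to the $\pm$ on that endpoint's side, so $(+,+)$, mixed, and $(-,-)$ edges contribute $+2$, $0$, and $-2$, and the only votes not captured this way are the $-1$'s of endpoints incident to an $M$-edge.

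For the ``only if'' direction I would show that each forbidden configuration forces some $\Delta_C>0$. A cycle in $G_M$ with a $(+,+)$ edge has $p_C=0$ and no $(-,-)$ edge, so $\Delta_C=2\cdot\#(+,+)\ge 2$, settling~(1). For (2) and (3) I would trim the offending path to a flip-valid alternating sub-path $Q\subseteq G_M$, meaning every endpoint of $Q$ incident to a non-matching edge is unmatched, so that $M\oplus Q$ is a matching: for (3), walk from the unmatched endpoint through the first $(+,+)$ edge and continue one more $M$-edge so that $Q$ ends with an $M$-edge (or at the path's unmatched far end), giving $p_Q\le 1$ and $\#(+,+)\ge 1$, hence $\Delta_Q\ge 1$; for (2), take the sub-path spanning the two $(+,+)$ edges padded by one $M$-edge at each end, giving $p_Q=2$ and $\#(+,+)\ge 2$, hence $\Delta_Q\ge 2$. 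If such padding is impossible because a $(+,+)$ edge sits at an extreme end at an unmatched vertex, that vertex together with the $(+,+)$ edge is exactly the configuration of~(3). In every case $M\oplus Q$ beats $M$.

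For the ``if'' direction I would argue the contrapositive: if $M$ is not popular then some component $C_0$ of some $M\oplus M'$ has $\Delta_{C_0}>0$, and I must locate a forbidden configuration \emph{inside $G_M$}, even though $C_0$ itself may use $(-,-)$ edges. The enabling observation is that an $M$-unmatched vertex prefers any partner to being unmatched and hence votes $+$ on every incident edge, so both endpoints of a $(-,-)$ edge are matched by $M$ and are therefore interior to $C_0$. Consequently, deleting all $(-,-)$ edges of $C_0$ removes no vertices, splits $C_0$ into alternating sub-paths $P_1,\dots,P_m\subseteq G_M$, and turns each endpoint of a deleted edge into a new $M$-matched endpoint whose vote ($-1$, being the vote it already cast in $C_0$ and still casting after the local flip makes it unmatched) is unchanged; hence $\sum_j\Delta_{P_j}=\Delta_{C_0}>0$ and some $P_j$ has $\Delta_{P_j}>0$. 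Applying the displayed formula with no $(-,-)$ edge present, $\Delta_{P_j}=2\cdot\#(+,+)-p_j>0$ forces $\#(+,+)\ge 2$ when $p_j=2$, which is configuration~(2), and $\#(+,+)\ge 1$ together with an unmatched endpoint when $p_j\le 1$, which is configuration~(3); a cycle $C_0$ already free of $(-,-)$ edges yields configuration~(1).

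The main obstacle is exactly this last reduction: the positive-weight structure produced by non-popularity need not lie in $G_M$, and one must show it can always be replaced by a $(-,-)$-free witness without losing positivity. The two facts that make this work---that $(-,-)$ edges are internal with matched endpoints, and that splitting at them conserves the total vote $\sum_u \text{(vote of }u)$---are where the bookkeeping must be done carefully; once they are in hand, matching the residual cases $p_j=2$, $p_j\le 1$, and the cycle case to (2), (3), and (1) is immediate.
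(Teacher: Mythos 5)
Your proof is correct and takes essentially the same route as the proof of this characterization in the cited source~\cite{HK11} (the paper itself imports Theorem~\ref{thm:char-popular} without proof): decompose $M \oplus M'$ into alternating paths and cycles, account for votes per component as twice the number of $(+,+)$ edges minus twice the number of $(-,-)$ edges minus the number of $M$-matched endpoints, and recover a witness inside $G_M$ by deleting $(-,-)$ edges, whose endpoints are necessarily $M$-matched so the vote count is conserved. The two bookkeeping points you isolate---flip-validity of the trimmed sub-paths and vote conservation under splitting---are exactly the places where care is needed, and you handle both correctly.
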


Using the above characterization, it can be easily checked whether a given matching is popular or not~\cite{HK11}.
Thus our $\NP$-hardness result implies that the popular roommates problem with complete preferences is $\NP$-complete.

\medskip

\noindent{\bf When $n$ is odd.} Recall the claim made in Section~\ref{sec:intro} that when $n$ is odd, every popular matching in $G$
has to be stable. A simple proof of this statement is included below.

\begin{observation}[\cite{huang18}]
Let $G$ be a complete graph on $n$ vertices, where $n$ is odd. Any popular matching in $G$ has to be stable.
\end{observation}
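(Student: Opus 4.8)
The plan is to prove the contrapositive through the combinatorial characterization in Theorem~\ref{thm:char-popular}: I would show that a popular matching which fails to be stable must create a forbidden configuration of type~(3) in $G_M$, namely an $M$-alternating path with a $(+,+)$ edge and an unmatched endpoint. The only property of odd $n$ that I need is that $G$ admits no perfect matching, so \emph{every} matching, and in particular a popular matching $M$, leaves at least one vertex unmatched. Completeness of $G$ then does the rest of the work, since the unmatched vertex is adjacent to everything and can be grafted onto any blocking edge. A useful preliminary observation is that an unmatched vertex $w$ has $\vote_w(x,M) = +$ for every neighbor $x$, because $w$ prefers being matched to $x$ over remaining unmatched.

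The key steps would run as follows. Suppose $M$ is popular but not stable, so $G$ contains a blocking edge $(u,v) \notin M$ labeled $(+,+)$. If one endpoint, say $u$, is unmatched, then the single edge $(u,v)$ is already an $M$-alternating path with a $(+,+)$ edge and the unmatched endpoint $u$, contradicting Theorem~\ref{thm:char-popular}(3); hence both $u$ and $v$ are matched. Now let $w$ be a vertex unmatched by $M$, which exists because $n$ is odd. Since $u$, $v$, and $M(u)$ are all matched, $w$ is distinct from them, and $M(u) \neq v$ because $(u,v) \notin M$, so $w, M(u), u, v$ are four distinct vertices. I would then consider the walk $P = w - M(u) - u - v$: the edge $(w, M(u))$ exists by completeness, the edge $M(u)u$ lies in $M$, and $uv$ is the blocking edge, so the three edges alternate non-matching, matching, non-matching, making $P$ a simple $M$-alternating path with endpoints $w$ and $v$.

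It remains to certify that $P$ lives inside $G_M$ and realizes configuration~(3). The matching edge $M(u)u$ is never deleted, the blocking edge $(u,v)$ is $(+,+)$ and hence present, and the edge $(w, M(u))$ carries $w$'s vote $+$ in its first coordinate, so it is not labeled $(-,-)$ and survives in $G_M$. Thus $P$ is an alternating path in $G_M$ that contains the $(+,+)$ edge $(u,v)$ and has the unmatched vertex $w$ as an endpoint, contradicting Theorem~\ref{thm:char-popular}(3) and forcing $M$ to be stable. I do not expect a genuine obstacle here; the one point demanding care is this final verification that $P$ is a simple path entirely within $G_M$, which is precisely where both hypotheses are used, completeness supplying the edge $wM(u)$ and oddness supplying the unmatched vertex $w$, together with the sign bookkeeping that keeps $wM(u)$ out of the $(-,-)$ class.
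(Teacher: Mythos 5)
Your proof is correct and follows essentially the same route as the paper's: both take an unmatched vertex $w$ (guaranteed by odd $n$), dispose of the case where the blocking edge itself has an unmatched endpoint, and otherwise graft $w$ onto the blocking edge via the matching edge $(M(u),u)$ to obtain a forbidden type-(3) alternating path in $G_M$. The only cosmetic difference is that the paper extends the path one further matching edge to $M(v)$, while you stop at $v$ and are more explicit about the sign bookkeeping that keeps $(w,M(u))$ inside $G_M$; both versions are valid.
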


\begin{proof}
 Since $n$ is odd and $G$ is complete, any popular matching leaves exactly one vertex unmatched. Let $M$ be a popular matching and let $v$ be the vertex left unmatched in $M$. Consider a vertex $u$ adjacent to~$v$. We know that $(u,w) \in M$ for some $w \in V \setminus \left\{v \right\}$, and due to Part~(3) in Theorem~\ref{thm:char-popular}, no $(+,+)$ edge is incident to~$w$. Since $v$ is adjacent not only to $u$, but to all vertices in the graph, this holds for all $w \in V$, i.e., there is no $(+,+)$ edge incident to any vertex. Thus $M$ is stable. \qed
\end{proof}

\section{The graph $G$}
\label{sec:gadgets}

Recall that $B$ is the input formula to 1-in-3 SAT. We assume that $B$ has $\kappa$ variables $X_1,\ldots,X_{\kappa}$.
The graph $G$ that we construct here consists of gadgets in 4 levels along with 2 special gadgets that we will call the $D$-gadget and $Z$-gadget.
Gadgets in level~1 correspond to variables in the formula $B$ while gadgets in levels 0, 2, and 3 correspond to clauses in~$B$.
Variants of the gadgets in levels~0-3 and the $D$-gadget were used in \cite{FKPZ18} while the $Z$-gadget is new. 

\medskip

\noindent{\bf An overview.} We will first show that any popular matching $M$ in $G$ uses only {\em intra-gadget} edges. Every gadget will be in 
one of the following two states in $M$: {\em stable} state or {\em unstable} state. A gadget is in stable state if and only if there is no 
edge $e$ with both its endpoints in this gadget such that $e$ blocks $M$. Our aim is to show that for every clause $c$ in $B$,
the gadget of exactly one of the three variables in $c$ is in unstable state in $M$---this will translate to a 1-in-3 satisfying truth
assignment for~$B$.

Preferences will be set such that unstable (similarly, stable) states of gadgets in one level force a certain number 
of gadgets in the adjacent level to be in  unstable (resp., stable) state.
The reduction in \cite{FKPZ18} is also based on the same idea and for each clause in $B$, it used three level~0 gadgets, three level~2 
gadgets, and one level~3 gadget (recall that clause gadgets are in levels~0, 2, 3). In our setting of complete preference lists, we will
need ``duplicates'' or counterparts of all these gadgets to show the above reduction. So for each clause, we will use six level~0 gadgets,
six level~2 gadgets, and two level~3 gadgets. In order to use Theorem~\ref{thm:char-popular}, we will also need a new gadget to ``glue'' alternating
paths across gadgets: this role will be performed by the $Z$-gadget which will be in stable state in $M$.

We will show that every level~3 gadget has to be in unstable state in $M$.
Our technical lemma (Lemma~\ref{lemma2}) proves
that this forces either two of the first three level~2 gadgets or two of the last three level~2 gadgets of every clause to be unstable 
state in $M$. We then show this forces at least one gadget of the three variables in 
every clause in $B$ to be in unstable state in $M$.

We also show that every level~0 gadget has to be in stable state in $M$. This will 
induce at most one gadget of the three variables in every clause in $B$ to be in unstable state in $M$. Thus {\em exactly one} gadget of 
the three variables in every clause in $B$ will be in unstable state in $M$.

\medskip

\noindent{\bf Our gadgets.}
We will now describe all the gadgets that we use here: along with a figure, we provide the preference lists of vertices in this gadget.
The tail of each list consists of all vertices not listed yet, in an arbitrary order. Even though the preference lists are complete, the structure of the gadgets and the preference lists will ensure that inter-gadget edges will not belong to any popular matching, as we will show in Section~\ref{sec:pop-edges}.

\smallskip

%
%
%

\noindent{\bf The $D$-gadget.} The $D$-gadget is on 4 vertices $d_0,d_1,d_2,d_3$ and the preference lists of these vertices are as
given in Fig.~\ref{fig:nosm} with all vertices outside the $D$-gadget at the tail of each list (in an arbitrary order).
Recall that this gadget admits no stable matching. The role of $D$-gadget will be that of a {\em delimiter}---we will show that
every vertex will have to be matched in any popular matching to a neighbor preferred to all its neighbors in the $D$-gadget.

\smallskip

We describe gadgets from level~1 first, then levels~0, 2, 3, and finally, the $Z$-gadget.
The stable matchings within the gadgets are highlighted by colors in the figures.
The gray elements in the preference lists denote vertices that are outside this gadget. 
We will assume that $D$ in a preference list stands for $d_0 > d_1 > d_2 > d_3$. 

\smallskip
\noindent{\bf Level 1.} For each variable $X_i$ in the formula $B$, we construct a gadget on four vertices as shown in Fig.~\ref{fig:1}.

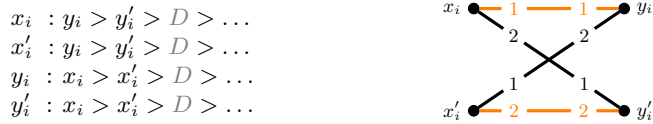
\begin{figure}[h]
	\centering
	
	\begin{minipage}{0.4\textwidth}
		\[
		\begin{array}{ll}
		x_i \ : & y_i > y_i' > \textcolor{gray}{D} > \ldots\\
		x_i' \ : & y_i > y_i' >  \textcolor{gray}{D} > \ldots\\
		y_i \ : & x_i > x_i' >  \textcolor{gray}{D} > \ldots\\
	  y_i' \ : & x_i > x_i' > \textcolor{gray}{D} > \ldots 
		\end{array}
		\]
	\end{minipage}\hspace{10mm}\begin{minipage}{0.4\textwidth}
	\begin{tikzpicture}[scale=0.8, transform shape]
	\node[vertex, label=left:$x_i$] (x) at (0,0) {};
	\node[vertex, label=right:$y_i$] (y) at (\b,0) {};
	\node[vertex, label=right:$y_i'$] (y') at (\b,-\d) {};
	\node[vertex, label=left:$x_i'$] (x') at (0,-\d) {};
	
	\draw [very thick, orange] (x) -- node[edgelabel, near start] {1} node[edgelabel, near end] {1} (y);
	\draw [very thick, orange] (x') -- node[edgelabel, near start] {2} node[edgelabel, near end] {2} (y');
	\draw [very thick] (x') -- node[edgelabel, near start] {1} node[edgelabel, near end] {2} (y);
	\draw [very thick] (x) -- node[edgelabel, near start] {2} node[edgelabel, near end] {1} (y');
	\end{tikzpicture}
\end{minipage}

\caption{The variable gadget in level~1.}
\label{fig:1}
\end{figure}

The bottom vertices $x_i'$ and $y_i'$ will be preferred by some vertices in level~0 to vertices in their own gadget, while the top vertices $x_i$
and $y_i$ will be preferred by some vertices in level~2 to vertices in their own gadget. All four vertices in a level~1 gadget prefer to be matched
among themselves, along the four edges drawn than be matched to any other vertex in the graph. This gadget has a unique stable matching
$\left\{(x_i,y_i),(x'_i,y'_i) \right\}$.

\medskip

\noindent{\bf Level 0.} For each clause $c = X_i \vee X_j \vee X_k $ in the formula $B$, we create 6 gadgets in level~0. For every ordered 
pair of elements in $\left\{i, j, k\right\}$, there is one such gadget. One of these gadgets (this corresponds to the pair $(j,k)$) can be 
seen in Fig.~\ref{fig:0}. The top two vertices, i.e. $a^c_1$ and $b^c_1$, rank $y_j'$ and $x_k'$ in level~1, as their respective second 
choices.
Recall that indices $j$ and $k$ are well-defined in the clause $c = X_i \vee X_j \vee X_k$. Within this level~0 gadget on 
$a^c_1,b^c_1,a^c_2,b^c_2$, both
$\{(a^c_1,b^c_1),(a^c_2,b^c_2)\}$ and $\{(a^c_1,b^c_2),(a^c_2,b^c_1)\}$ are stable matchings.
In the preference lists below (and also for gadgets
in levels~2 and 3), we have omitted the superscript $c$ in their lists for the sake of readability.
\begin{figure}[h]
	\centering
	\begin{minipage}{0.33\textwidth}
		\[
		\begin{array}{ll}
		a_1 \ : & b_1 > \textcolor{gray}{y_j'} > b_2 > \textcolor{gray}{D} > \ldots\\
		a_2 \ : & b_2 > b_1 >  \textcolor{gray}{D} > \ldots\\
		b_1 \ : & a_2 > \textcolor{gray}{x_k'} > a_1 > \textcolor{gray}{D} > \ldots\\
	        b_2 \ : & a_1 > a_2 > \textcolor{gray}{D} > \ldots 
		\end{array}
		\]
	\end{minipage}\begin{minipage}{0.25\textwidth}
	\begin{tikzpicture}[scale=0.8, transform shape]
	\node[vertex, label=left:$a^c_1$] (x) at (0,0) {};
	\node[vertex, label=right:$b^c_1$] (y) at (\b,0) {};
	\node[vertex, label=right:$b^c_2$] (y') at (\b,-\d) {};
	\node[vertex, label=left:$a^c_2$] (x') at (0,-\d) {};
	
	\draw [very thick, orange] (x) -- node[edgelabel, near start] {1} node[edgelabel, near end] {3} (y);
	\draw [very thick, orange] (x') -- node[edgelabel, near start] {1} node[edgelabel, near end] {2} (y');
	\draw [very thick, blue] (x') -- node[edgelabel, near start] {2} node[edgelabel, near end] {1} (y);
	\draw [very thick, blue] (x) -- node[edgelabel, near start] {3} node[edgelabel, near end] {1} (y');
	\end{tikzpicture}
\end{minipage}\begin{minipage}{0.33\textwidth}
		\[
		\begin{array}{ll}
		a'_1 \ : & b'_1 > \textcolor{gray}{y_k'} > b'_2 > \textcolor{gray}{D} > \ldots\\
		a'_2 \ : & b'_2 > b'_1 >  \textcolor{gray}{D} > \ldots\\
		b'_1 \ : & a'_2 > \textcolor{gray}{x_j'} > a'_1 > \textcolor{gray}{D} > \ldots\\
	  b'_2 \ : & a'_1 > a'_2 > \textcolor{gray}{D} > \ldots 
		\end{array}
		\]
	\end{minipage}
\caption{A clause gadget in level~0. The set of preference lists on the left belongs to the first gadget, while the set of preference lists on the right belongs to the fourth gadget (this corresponds to the pair $(k,j)$).}
\label{fig:0}
\end{figure}

The gadget on vertices $\left\{a^c_3,a^c_4,b^c_3,b^c_4\right\}$ is built analogously: the vertex $a^c_3$ ranks $y_k'$ as its second choice, while $b^c_3$ ranks $x_i'$ second. In the third gadget, the vertex $a^c_5$ ranks $y_i'$ second, while $b^c_5$ ranks $x_j'$ second. Observe the shift in $i,j,k$ indices as
second choices for vertices $a^c_1,a^c_3,a^c_5$ (and similarly, for $b^c_1,b^c_3,b^c_5$).

The fourth, fifth, and sixth gadgets are analogous to their counterparts, the first, second, and third gadgets, respectively, but there is a slight twist.
More precisely, the preferences of $a'^c_1,a'^c_2,b'^c_1,b'^c_2$ in the fourth gadget are analogous to the preferences in Fig.~\ref{fig:0}, except that $a'^c_1$ ranks $y_k'$ second, while $b'^c_1$ ranks $x_j'$ second. Similarly, the second choice of $a'^c_3$ is $y_i'$, the second choice of $b'^c_3$ is $x_k'$, and finally, $a'^c_5$ ranks $y_j'$ second, while $b'^c_5$ ranks $x_i'$ second. Observe the change in {\em orientation} of the indices $i,j,k$ as second choice neighbors when comparing the first three level~0 gadgets of $c$ with its last three level~0 gadgets. This will be important to us later.

\medskip

\noindent{\bf Level 2.} For each clause $c = X_i \vee X_j \vee X_k$ in the formula $B$, we create 6 gadgets in level~2.
The first gadget in level~2 is on vertices $p^c_0,p^c_1,p^c_2,q^c_0,q^c_1,q^c_2$ and their preference lists are described in Fig.~\ref{fig:2}.
Note that $p^c_2$ ranks $y_j$ from level~1 as its second choice, while $q^c_2$ ranks $x_k$ from level~1 second.

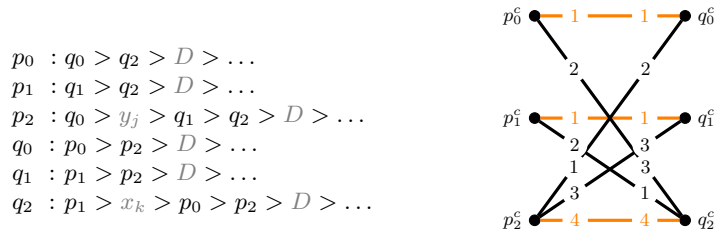
\begin{figure}[h]
	\centering
	\begin{minipage}{0.4\textwidth}
		\[
		\begin{array}{ll}
		p_0 \ : & q_0 > q_2 > \textcolor{gray}{D} > \ldots\\
		p_1 \ : & q_1 > q_2 > \textcolor{gray}{D} > \ldots\\
		p_2 \ : & q_0 > \textcolor{gray}{y_j} > q_1 > q_2 > \textcolor{gray}{D} > \ldots\\
                q_0 \ : & p_0 > p_2 > \textcolor{gray}{D} > \ldots\\
		q_1 \ : & p_1 > p_2 > \textcolor{gray}{D} > \ldots\\
		q_2 \ : & p_1 > \textcolor{gray}{x_k} > p_0 > p_2 > \textcolor{gray}{D} > \ldots\\
		\end{array}
		\]
	\end{minipage}\hspace{10mm}\begin{minipage}{0.4\textwidth}
	\begin{tikzpicture}[scale=0.8, transform shape]
	\node[vertex, label=left:$p^c_0$] (p0) at (0,0) {};
	\node[vertex, label=right:$q^c_0$] (q0) at (\b,0) {};
	\node[vertex, label=left:$p^c_1$] (p1) at (0,-\d) {};
	\node[vertex, label=right:$q^c_1$] (q1) at (\b,-\d) {};
	\node[vertex, label=left:$p^c_2$] (p2) at (0,-2*\d) {};
	\node[vertex, label=right:$q^c_2$] (q2) at (\b,-2*\d) {};
	
	\draw [very thick, orange] (p0) -- node[edgelabel, near start] {1} node[edgelabel, near end] {1} (q0);
	\draw [very thick, orange] (p1) -- node[edgelabel, near start] {1} node[edgelabel, near end] {1} (q1);
	\draw [very thick, orange] (p2) -- node[edgelabel, near start] {4} node[edgelabel, near end] {4} (q2);
	\draw [very thick] (p0) -- node[edgelabel, near start] {2} node[edgelabel, near end] {3} (q2);
	\draw [very thick] (p2) -- node[edgelabel, near start] {1} node[edgelabel, near end] {2} (q0);
	\draw [very thick] (p1) -- node[edgelabel, near start] {2} node[edgelabel, near end] {1} (q2);
	\draw [very thick] (p2) -- node[edgelabel, near start] {3} node[edgelabel, near end] {2} (q1);
	
	\end{tikzpicture}
\end{minipage}

\caption{A clause gadget in level~2.}
\label{fig:2}
\end{figure}

The second gadget in level~2 is on vertices $p^c_3,p^c_4,p^c_5,q^c_3,q^c_4,q^c_5$ and it is built analogously. That is, $p^c_3$ and $q^c_3$ are each
other's top choices and similarly, $p^c_4$ and $q^c_4$ are each other's top choices, and so on. The preference list of $p^c_5$ is
$q^c_3 > \textcolor{gray}{y_k} > q^c_4 > q^c_5 > \textcolor{gray}{D} > \ldots$ and the preference list of $q^c_5$ is
$p^c_4 > \textcolor{gray}{x_i} > p^c_3 > p^c_5 > \textcolor{gray}{D} > \ldots$

The third gadget in level~2 is on vertices $p^c_6,p^c_7,p^c_8,q^c_6,q^c_7,q^c_8$ and it is built analogously. In particular,
the preference list of $p^c_8$ is $q^c_6 > \textcolor{gray}{y_i} > q^c_7 > q^c_8 > \textcolor{gray}{D} > \ldots$ and the
preference list of $q^c_8$ is $p^c_7 > \textcolor{gray}{x_j} > p^c_6 > p^c_8 > \textcolor{gray}{D} > \ldots$

The fourth gadget in level~2 is on vertices $p'^c_0,p'^c_1,p'^c_2,q'^c_0,q'^c_1,q'^c_2$ and it is totally analogous to its counterpart, the first gadget in level~2. That is, $p'^c_0$ and $q'^c_0$ are each other's top choices and similarly, $p'^c_1$ and $q'^c_1$ are each other's top choices, and so on. In particular, the preference list of $p'^c_2$ is
$q'^c_0 > \textcolor{gray}{y_j} > q'^c_1 > q'^c_2 > \textcolor{gray}{D} > \ldots$ and the preference list of $q'^c_2$ is
$p'^c_1 > \textcolor{gray}{x_k} > p'^c_0 > p'^c_2 > \textcolor{gray}{D} > \ldots$

Similarly, the fifth gadget in level~2 is on vertices $p'^c_3,p'^c_4,p'^c_5,q'^c_3,q'^c_4,q'^c_5$ and it is totally analogous to the second gadget in level~2. Also, the sixth gadget in level~2 is on vertices $p'^c_6,p'^c_7,p'^c_8,q'^c_6,q'^c_7,q'^c_8$ and it is totally analogous to the third gadget in level~2.

\medskip

\noindent{\bf Level 3.} For each clause $c = X_i \vee X_j \vee X_k $ in the formula $B$, we create 2 gadgets in level~3.
The first gadget is on vertices $s^c_0,s^c_1,s^c_2,s^c_3,t^c_0,t^c_1,t^c_2,t^c_3$ and the preference lists of these vertices are described in Fig.~\ref{fig:3}.

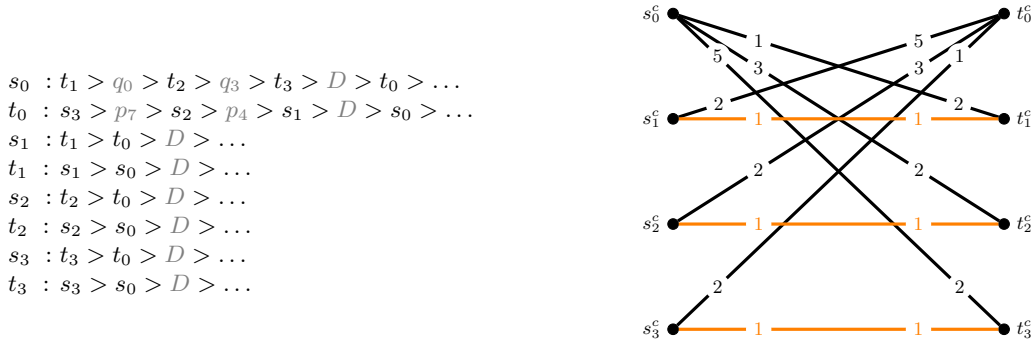
\begin{figure}[h]
	\centering
	\begin{minipage}{0.4\textwidth}
		\[
                \begin{array}{ll}
		s_0 \ : & t_1 > \textcolor{gray}{q_0} > t_2 > \textcolor{gray}{q_3} > t_3 > \textcolor{gray}{D} > t_0 > \ldots\\
		t_0 \ : & s_3 > \textcolor{gray}{p_7} > s_2 > \textcolor{gray}{p_4} > s_1 > \textcolor{gray}{D} > s_0 >  \ldots\\
		s_1 \ : & t_1 > t_0 > \textcolor{gray}{D} > \ldots\\
		t_1 \ : & s_1 > s_0 > \textcolor{gray}{D} > \ldots\\
		s_2 \ : & t_2 > t_0 > \textcolor{gray}{D} > \ldots\\
		t_2 \ : & s_2 > s_0 > \textcolor{gray}{D} > \ldots\\
		s_3 \ : & t_3 > t_0 > \textcolor{gray}{D} > \ldots\\
		t_3 \ : & s_3 > s_0 > \textcolor{gray}{D} > \ldots\\
		
		\end{array}
		\]
	\end{minipage}\hspace{23mm}\begin{minipage}{0.3\textwidth}
	\begin{tikzpicture}[scale=0.8, transform shape]
	\pgfmathsetmacro{\d}{1.75}
\pgfmathsetmacro{\b}{5.5}

	\node[vertex, label=left:$s^c_0$] (s0) at (0,0) {};
	\node[vertex, label=right:$t^c_0$] (t0) at (\b,0) {};
	\node[vertex, label=left:$s^c_1$] (s1) at (0,-\d) {};
	\node[vertex, label=right:$t^c_1$] (t1) at (\b,-\d) {};
	\node[vertex, label=left:$s^c_2$] (s2) at (0,-2*\d) {};
	\node[vertex, label=right:$t^c_2$] (t2) at (\b,-2*\d) {};
	\node[vertex, label=left:$s^c_3$] (s3) at (0,-3*\d) {};
	\node[vertex, label=right:$t^c_3$] (t3) at (\b,-3*\d) {};
	
	\draw [very thick] (s0) -- node[edgelabel, near start] {1} node[edgelabel, very near end] {2} (t1);
	\draw [very thick] (s0) -- node[edgelabel, near start] {3} node[edgelabel, near end] {2} (t2);
	\draw [very thick] (s0) -- node[edgelabel, very near start] {5} node[edgelabel, very near end] {2} (t3);
	\draw [very thick] (t0) -- node[edgelabel, near start] {5} node[edgelabel, very near end] {2} (s1);
	\draw [very thick] (t0) -- node[edgelabel, near start] {3} node[edgelabel, near end] {2} (s2);
	\draw [very thick] (t0) -- node[edgelabel, very near start] {1} node[edgelabel, very near end] {2} (s3);
	
	\draw [very thick, orange] (t1) -- node[edgelabel, near start] {1} node[edgelabel, near end] {1} (s1);
	\draw [very thick, orange] (t2) -- node[edgelabel, near start] {1} node[edgelabel, near end] {1} (s2);
	\draw [very thick, orange] (t3) -- node[edgelabel, near start] {1} node[edgelabel, near end] {1} (s3);
	
	\end{tikzpicture}
\end{minipage}
\caption{A clause gadget in level~3.}
\label{fig:3}
\end{figure}

The counterpart of the first gadget in level~3 is the second gadget in level~3. It is on vertices $s'^c_0,s'^c_1,s'^c_2,s'^c_3,t'^c_0,t'^c_1,t'^c_2,t'^c_3$ and their preference lists are totally analogous to the preference lists of the first gadget in level~3. 

\medskip

\noindent{\bf The $Z$-gadget.} The $Z$-gadget is on 6 vertices $z_0,z_1,z_2,z_3,z_4,z_5$ and the preference lists of these vertices are given in Fig.~\ref{fig:z}. The vertices in a set stand for all these vertices in a fixed arbitrary order. For example, $\cup_{i=1}^{\kappa}{\left\{x_i,y_i\right\}}$ denotes all the ``top'' vertices belonging to 
all the $\kappa$ variable gadgets in a fixed arbitrary order.

\begin{figure}[h]
	\centering
	\begin{minipage}{0.47\textwidth}
		\[
		\begin{array}{ll}
		z_0 \ : & z_4 > z_5 > \textcolor{gray}{\cup_{i=1}^{\kappa}\left\{x_i,y_i\right\}} >  \textcolor{gray}{{\cup_{i=0}^2\cup_c}\left\{p^c_{3i+1},q^c_{3i},p'^c_{3i+1},q'^c_{3i}\right\}} > \\
		&\textcolor{gray}{{\cup_{i=1}^6\cup_c}\left\{a^c_i,b^c_i,a'^c_i,b'^c_i\right\}} > z_1 > z_2 > z_3 > \textcolor{gray}{D} > \ldots\\
		z_1 \ : & z_5 > z_4 > \textcolor{gray}{\cup_{i=1}^{\kappa}\left\{x_i,y_i\right\}} >  \textcolor{gray}{\cup_{i=0}^2\cup_c\left\{p^c_{3i+1},q^c_{3i},p'^c_{3i+1},q'^c_{3i}\right\}} > \\
		&\textcolor{gray}{\cup_{i=1}^6\cup_c\left\{a^c_i,b^c_i,a'^c_i,b'^c_i\right\}} > z_0 > z_3 > z_2 > \textcolor{gray}{D} > \ldots\\
		z_2 \ : & z_0 > z_1> z_3 >z_4 >z_5 > \textcolor{gray}{D} > \ldots\\
		z_3 \ : & z_1 > z_0> z_2 >z_5 >z_4 > \textcolor{gray}{D} > \ldots\\
		z_4 \ : & z_2 > z_3> z_5 >z_0 >z_1 > \textcolor{gray}{D} > \ldots\\
		z_5 \ : & z_3 > z_2> z_4 >z_1 >z_0 > \textcolor{gray}{D} > \ldots\\
		\end{array}
		\]
	\end{minipage}\hspace{25mm}\begin{minipage}{0.3\textwidth}
	\begin{tikzpicture}[scale=0.8, transform shape]
	\pgfmathsetmacro{\d}{2.75}
\pgfmathsetmacro{\b}{5.5}
	\node[vertex, label=left:$z_0$] (z0) at (0,0) {};
	\node[vertex, label=right:$z_1$] (z1) at (\b,0) {};
	\node[vertex, label=left:$z_2$] (z2) at (0.25*\b,-\d) {};
	\node[vertex, label=right:$z_3$] (z3) at (0.75*\b,-\d) {};
	\node[vertex, label=left:$z_4$] (z4) at (0,-2*\d) {};
	\node[vertex, label=right:$z_5$] (z5) at (\b,-2*\d) {};
	
	\draw [very thick, orange] (z0) -- node[edgelabel, near start] {3} node[edgelabel, near end] {3} (z1);
	\draw [very thick, orange] (z2) -- node[edgelabel, near start] {3} node[edgelabel, near end] {3} (z3);
	\draw [very thick, orange] (z4) -- node[edgelabel, near start] {3} node[edgelabel, near end] {3} (z5);
	
	\draw [very thick, gray, dotted] (z0) -- node[edgelabel, near start] {2} node[edgelabel, near end] {5} (z5);
	\draw [very thick, gray, dotted] (z2) -- node[edgelabel, near start] {2} node[edgelabel, near end] {5} (z1);
	\draw [very thick, gray, dotted] (z4) -- node[edgelabel, near start] {2} node[edgelabel, near end] {5} (z3);
	\draw [very thick, gray, dotted] (z1) -- node[edgelabel, near start] {2} node[edgelabel, near end] {5} (z4);
	\draw [very thick, gray, dotted] (z3) -- node[edgelabel, near start] {2} node[edgelabel, near end] {5} (z0);
	\draw [very thick, gray, dotted] (z5) -- node[edgelabel, near start] {2} node[edgelabel, near end] {5} (z2);
	
	\draw [very thick] (z0) -- node[edgelabel, near start] {1} node[edgelabel, near end] {4} (z4);
	\draw [very thick] (z4) -- node[edgelabel, near start] {1} node[edgelabel, near end] {4} (z2);
	\draw [very thick] (z2) -- node[edgelabel, near start] {1} node[edgelabel, near end] {4} (z0);
	\draw [very thick] (z1) -- node[edgelabel, near start] {1} node[edgelabel, near end] {4} (z5);
	\draw [very thick] (z5) -- node[edgelabel, near start] {1} node[edgelabel, near end] {4} (z3);
	\draw [very thick] (z3) -- node[edgelabel, near start] {1} node[edgelabel, near end] {4} (z1);
	
	\end{tikzpicture}
\end{minipage}
\caption{The $Z$-gadget.}
\label{fig:z}
\end{figure}
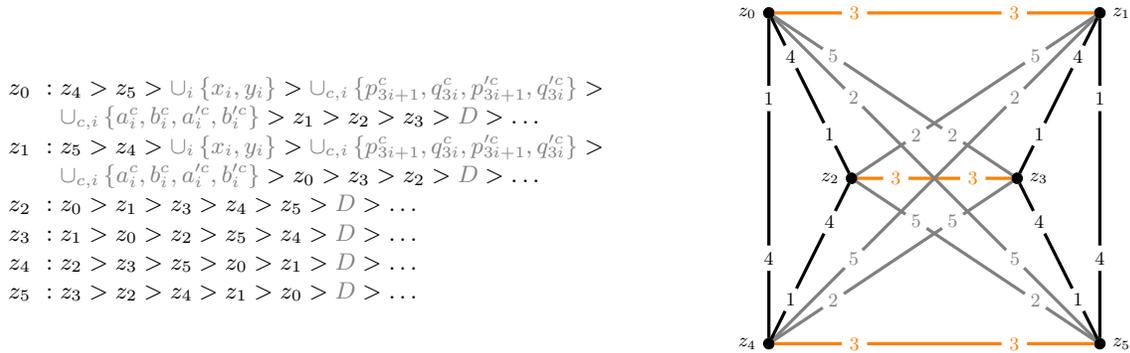

Note that $G$ is a complete graph on an even number of vertices and so every popular matching in $G$ has to be a perfect matching.


\section{Popular edges in $G$}
\label{sec:pop-edges}
%
%
%
Call an edge $e$ in $G$ {\em popular} if there is a popular matching $M$ in $G$ such that $e \in M$.
In this section we identify edges that cannot be popular and show that every popular edge has to be an {\em intra-gadget} edge,  i.e., it connects two vertices of the same gadget.


The following observation, which is straightforward, will be used repeatedly in our proofs.

\begin{observation}
  \label{obs:first_choice}
  Let $v$ be $u$'s top choice neighbor. If $v$ is matched in $M$ to a neighbor worse than $u$ then $(u,v)$ is a blocking edge to $M$.
\end{observation}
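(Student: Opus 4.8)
The plan is to verify directly that the edge $(u,v)$ carries the label $(+,+)$, since by the discussion in Section~\ref{sec:prelims} a non-matching edge is blocking precisely when it is labeled $(+,+)$. First I would confirm that $(u,v) \notin M$, so that a label applies to it: the hypothesis states that $v$ is matched to some neighbor $M(v)$ that $v$ ranks below $u$, hence $M(v) \ne u$, and consequently $(u,v)$ is a genuine non-matching edge.

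I would then evaluate the two votes on this edge. Since $v$ prefers $u$ to its partner $M(v)$ by assumption, the definition of $\vote$ gives $\vote_v(u,M) = +$ immediately. For the other endpoint, because $v$ is $u$'s top choice, $u$ prefers $v$ to every other neighbor; in particular $u$ prefers $v$ to $M(u)$ if $u$ is matched (and $M(u) \ne v$, because $v$ is already matched to $M(v) \ne u$), while $u$ prefers being matched to $v$ over being unmatched if $u$ happens to be unmatched. In both cases $\vote_u(v,M) = +$. Combining the two votes, the edge $(u,v)$ is labeled $(+,+)$, which by definition means exactly that it is a blocking edge to $M$, as claimed.

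The statement is essentially a direct unfolding of the definition of $\vote$ together with the characterization of blocking edges as $(+,+)$ edges, so I do not expect any genuine obstacle. The only point that calls for a (trivial) case distinction is the evaluation of $u$'s vote depending on whether $u$ is matched or unmatched in $M$, and this collapses at once because $v$ being $u$'s most preferred neighbor dominates both situations.
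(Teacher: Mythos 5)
Your proposal is correct: the paper states this observation without any proof (it is introduced as ``straightforward''), and your argument---verifying $(u,v)\notin M$, that $\vote_v(u,M)=+$ from the hypothesis, and that $\vote_u(v,M)=+$ because a top choice dominates both the matched and unmatched cases---is exactly the routine unfolding of the definitions that the paper leaves implicit. No gap; nothing further is needed.
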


\new{We now start restricting the set of edges that can possibly occur in a popular matching. Our first lemma eliminates some of the inter-gadget edges incident to vertices $s^c_0$, $t^c_0$, $s'^c_0$, and $t'^c_0$ in level~3.}

\begin{lemma}
  \label{first-lemma}
  For any clause $c$, no popular matching in $G$ can match $s^c_0$ (similarly, $t^c_0$) to a neighbor worse than $t^c_0$ (resp., $s^c_0$).
  An analogous statement holds for $s'^c_0$ and $t'^c_0$.
\end{lemma}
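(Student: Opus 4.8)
The plan is to argue by contradiction using the combinatorial characterization of Theorem~\ref{thm:char-popular}. Since $G$ is complete on an even number of vertices, every popular matching $M$ is perfect, so I only have to produce a configuration of type~(1) (an alternating cycle carrying a $(+,+)$ edge) or of type~(2) (an alternating path carrying two distinct $(+,+)$ edges); part~(3) is vacuous here. Suppose some popular $M$ matches $s^c_0$ to a neighbour $w$ strictly worse than $t^c_0$. Reading off the list $s_0: t_1 > q_0 > t_2 > q_3 > t_3 > D > t_0 > \ldots$, the partner $w$ lies below every one of $t_0,t_1,t_2,t_3$ and below each $d_i$, so $\vote_{s_0}(t_j,M)=+$ for all $j\in\{0,1,2,3\}$ and $\vote_{s_0}(d_i,M)=+$ for all $i$. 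These guaranteed $+$-votes out of $s_0$ are the engine of the argument. The statement for $t^c_0$ is symmetric (swap $s\leftrightarrow t$ and reverse the arm indices), and the primed gadget is a relabelled copy, so it suffices to handle $s^c_0$; I will drop the superscript $c$ below.

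First I would record the behaviour of the three ``arms'' $(s_j,t_j)$, $j\in\{1,2,3\}$, which are mutual top choices. Call arm~$j$ \emph{broken} if $M(s_j)\neq t_j$. If arm~$j$ is broken then $(s_j,t_j)$ is a $(+,+)$ edge. Moreover, since $t_j$'s list reads $s_j > s_0 > D > \ldots$ and $t_j$ is matched neither to $s_j$ (broken) nor to $s_0$ (which is taken by $w$), its partner $M(t_j)$ sits below $s_0$, giving $\vote_{t_j}(s_0)=+$; combined with $\vote_{s_0}(t_j)=+$ this makes $(s_0,t_j)$ a $(+,+)$ edge as well. This is the local supply of blocking edges.

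The clean way to win is to splice two such blocking edges into one alternating path through the ``hub'' $t_0$, whose neighbours $s_1,s_2,s_3$ all rank it second. If $t_0$ is matched internally to the $s$-vertex of a broken arm, say $M(t_0)=s_i$, then $(s_i,t_0)\in M$ furnishes the connecting matching edge: $(t_i,s_i)$ is $(+,+)$, and for a second broken arm $j$ whose $s_j$ beats $s_i$ in $t_0$'s order $s_3 > p_7 > s_2 > p_4 > s_1$ the edge $(t_0,s_j)$ is also $(+,+)$, so the path $t_i - s_i - t_0 - s_j - M(s_j)$ is a type-(2) violation. The analogous role of the hub $s_0$ (adjacent as second choice to $t_1,t_2,t_3$) and the $D$-gadget handle the variants where an endpoint is absorbed into $D$: because the $D$-gadget admits no stable matching, any ``wrong'' internal matching of $d_0,\dots,d_3$ carries a $(+,+)$ edge on a short alternating cycle, which can be grafted onto a path reaching some $d_i$ via the $+$-votes $\vote_{s_0}(d_i)=+$ noted above.

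The hard part, and where I expect the bulk of the work, is the case where the blocking edges I can locate all hang off \emph{junk-matched} vertices, so there is no internal matching edge to chain them — notably when exactly one arm is broken with $t_0$ matched to that arm's partner (e.g.\ $M(t_0)=s_1$, $M(t_1)$ junk), and the case where no arm is broken and $t_0$ is matched above $s_0$, in which the gadget may contain no blocking edge at all. In these configurations the second $(+,+)$ edge must be produced \emph{outside} the level-3 gadget: either at the junk partner $w$ (which, being an inter-gadget edge, ranks $s_0$ in its tail, so $w$'s top choice is a candidate blocking edge by Observation~\ref{obs:first_choice}) or inside the $D$-gadget. Making this airtight seems to require either a careful global trace of the alternating path into the neighbouring gadget, or first proving a structural lemma that pins down how a popular matching may treat the $D$-gadget (only its two internal popular states, or one $d_i$ pushed outside) and using that to guarantee the missing blocking edge. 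I would therefore aim to route every residual case through the $D$-gadget, so that the single genuinely delicate step becomes: exhibit, in each case, the matching edge that links a guaranteed $+$-vote of $s_0$ (or $t_0$) to a $(+,+)$ edge of the $D$-gadget.
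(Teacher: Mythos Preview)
Your plan is headed in a workable direction but is both incomplete (as you yourself flag) and organised around the wrong decomposition. The arm analysis of $(s_j,t_j)$ is a detour: even your ``clean'' case already leaks, since when $M(t_0)=s_3$ there is \emph{no} second broken arm $j$ with $s_j$ above $s_i$ in $t_0$'s order, so the two-$(+,+)$ splice through the hub fails right there. More importantly, the cases you label ``hard'' (no arm broken, or a single arm absorbed by $t_0$) are not corner cases to be mopped up --- they are the generic situation, and they force exactly the step you defer: producing a $(+,+)$ edge \emph{at the junk partner} $w$. Small slip there too: the relevant observation is not that $w$'s own top choice blocks, but that $w$ is (almost always) \emph{someone else's} top choice, and that vertex together with $w$ forms the $(+,+)$ edge via Observation~\ref{obs:first_choice}.

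The paper's proof bypasses the arm machinery entirely. It argues directly from the matching edge $(s^c_0,v)$: on the $v$-side, it case-splits on what $v$ is and shows that every vertex of $G$ is within alternating distance at most~$2$ of a $(+,+)$ edge (either $v$ is some vertex's top choice, or $v\in\{s^r_0,t^r_0,s'^r_0,t'^r_0\}$ and one hops through the partner, or $v$ is one of the handful of non--top-choice vertices like $d_0,x'_i,y'_i,p^c_{3j+2},q^c_{3j+2}$, each of which has an obvious length-$2$ escape). On the $s^c_0$-side, it simply looks at $t^c_0$: either $(s^c_0,t^c_0)$ is itself $(+,+)$, or $t^c_0$'s partner is one of the vertices above $s^c_0$ in $t^c_0$'s list, all of which are again top choices (or $d_0$), giving a $(+,+)$ edge one step away. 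The two halves then glue into a single forbidden path or cycle. This global ``every vertex is near a $(+,+)$ edge'' observation is the missing engine in your write-up; once you have it, the level-3 internal structure is irrelevant and the whole argument is a short case check on~$v$.
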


\begin{proof}
  Let $M$ be a popular matching such that
  $(s^c_0,v) \in M$ for some vertex $v$ such that $t^c_0 > v$ in $s^c_0$'s list, i.e., $s^c_0$ prefers $t^c_0$ to $v$. We claim this implies:
  \begin{itemize}
  \item  a $(+,+)$ edge reachable from $v$ via an alternating path in $G_M$ that begins with a non-matching edge incident to $v$ {\em and}
  \item  a $(+,+)$ edge reachable from $s^c_0$ via an alternating path in $G_M$ that begins with a non-matching edge incident to $s^c_0$.
  \end{itemize}

  If this is the same $(+,+)$ edge then we have an alternating cycle in $G_M$ with a  $(+,+)$ edge, a contradiction to $M$'s popularity
  (by Theorem~\ref{thm:char-popular}). If these are two different $(+,+)$ edges then there is an alternating path in $G_M$ with  two $(+,+)$ edges,
  again a contradiction to $M$'s popularity (by Theorem~\ref{thm:char-popular}).

  \begin{itemize}
  \item[(1)] If $v$ is a top choice neighbor for some vertex (such as $x_i,y_i,z_j$ for  $1 \le i \le \kappa$, $0 \leq j \leq 5$ or $d_1,d_2,d_3$ or $a^r_{\ell},b^r_{\ell},p^r_{3h},p^r_{3h+1},q^r_{3h},q^r_{3h+1}$ or their counterparts for $1 \le \ell \le 6$, $0 \le h \le 2$ and any clause~$r$)
  then there is a $(+,+)$ edge incident to~$v$
    (by Observation~\ref{obs:first_choice}).
  \item[(2)] Suppose $v$ is one of $s^r_0,t^r_0, s'^r_0, t'^r_0$ for some clause $r$ (note that $r \neq c$ in the case of $s^r_0,t^r_0$).
    Assume without loss of generality that $v = s^r_0$. Then either $(s^r_0,t^r_0)$ is a
    $(+,+)$ edge or $t^r_0$ is matched in $M$ to a neighbor better than $s^r_0$. 

    Recall $t^r_0$'s preference list: every vertex that $t^r_0$ prefers to $s^r_0$ is either  a top choice neighbor or it is $d_0$.
    In the former case, there is a $(+,+)$ edge incident to $t^r_0$'s partner (by Observation~\ref{obs:first_choice}) and in the latter case also there is a $(+,+)$ edge
    incident to $d_0$ since one of $d_1, d_2, d_3$ is matched to a neighbor worse than $d_0$ and so there is a $(+,+)$ edge between this vertex in the set $\{d_1, d_2, d_3\}$ and $d_0$.
    Since the edge $(s^r_0,t^r_0)$ is a $(+,-)$ edge, there is a $(+,+)$ edge reachable from $v = s^r_0$ via an alternating path of length~2.
  \item[(3)]  The only case left is when $v$ is neither a top choice neighbor of some vertex nor one of $s^r_0,t^r_0, s'^r_0, t'^r_0$ for 
some clause $r$. So $v$ is a vertex such as $d_0$ or $x'_i,y'_i$ (for $1 \le i \le \kappa$) or 
$p^{r}_{3h+2},q^{r}_{3h+2},p'^{r}_{3h+2},q'^{r}_{3h+2}$ (for $h = 0,1,2$ and some clause $r$).
    It is easy to see that there is a $(+,+)$ edge reachable
    from $v$ via an alternating path of length at most~2. For instance, either $(x'_i,y'_i)$ is a $(+,+)$ edge or $(x_i,y'_i) \in M$ which 
    creates the alternating path $(s^c_0,x'_i)-(y'_i,x_i)-(y_i,\ast)$, where $(x_i,y_i)$ is a $(+,+)$ edge.
  \end{itemize}
  
  Similarly, we can argue that there is a $(+,+)$ edge reachable from $s^c_0$ via an alternating path in~$G_M$. If $t^c_0$ is matched to a
  neighbor worse than $s^c_0$ then the edge $(s^c_0,t^c_0)$ is a $(+,+)$ edge. Else $t^c_0$ is matched to a neighbor $u$ better than $s^c_0$
  and this means there is a $(+,+)$ edge incident to $u$, as we argued above in case~(2).  Hence there is a $(+,+)$ edge reachable from $s^c_0$
  via an alternating path of length at most 2 in $G_M$. \qed
\end{proof}  

\new{Next we show that no inter-gadget edge incident to any vertex in the $D$-gadget can appear in any popular matching.}

\begin{lemma}
  \label{second-lemma}
  Every popular matching matches the vertices in the $D$-gadget among themselves.
\end{lemma}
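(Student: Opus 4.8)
The plan is to argue by contradiction, producing one of the forbidden configurations of Theorem~\ref{thm:char-popular}. Since $n$ is even, a popular $M$ is a perfect matching, so each of $d_0,d_1,d_2,d_3$ is matched. An intra-gadget edge pairs up two $D$-vertices, so the number of $D$-vertices matched inside the gadget is even; hence if they are not all matched among themselves, then at least two of them, say $d_a$ and $d_b$, are matched to partners $w_a=M(d_a)$ and $w_b=M(d_b)$ outside the gadget. The one preference fact I would use is read off Fig.~\ref{fig:nosm}: each $D$-vertex ranks every other $D$-vertex above every non-$D$ vertex (the block $D$ sits in the tail of every outside vertex's list, and the $D$-vertices themselves list $d_0,d_1,d_2,d_3$ before anything external). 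Consequently both $d_a$ and $d_b$ prefer each other to their external partners, so the non-matching edge $(d_a,d_b)$ is a $(+,+)$ (blocking) edge of $G_M$, and the matching edges $(d_a,w_a),(d_b,w_b)$ together with it form an alternating path on $w_a,d_a,d_b,w_b$ carrying one $(+,+)$ edge.

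To reach configuration~(2) of Theorem~\ref{thm:char-popular} I would extend this path at $w_a$ to a \emph{second}, distinct $(+,+)$ edge. This is exactly the reachability statement established inside the proof of Lemma~\ref{first-lemma}: because $d_a$ lies in the tail of $w_a$'s list, $w_a$ is matched to a ``bad'' neighbor, and the case analysis~(1)--(3) there shows that from such a vertex $G_M$ contains an alternating path, beginning with a non-matching edge at $w_a$, that reaches a $(+,+)$ edge $e$ lying in $w_a$'s own gadget (either via Observation~\ref{obs:first_choice} when $w_a$ is someone's top choice, or via the short length-$2$ detours used for the remaining vertex types). Concatenating this path with the fragment on $w_a,d_a,d_b,w_b$ then yields a single alternating path carrying the two distinct $(+,+)$ edges $e$ and $(d_a,d_b)$, contradicting the popularity of $M$ and forcing all four $D$-vertices to be matched internally.

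The step I expect to be delicate is this final concatenation: I must guarantee that the composite walk is a genuine \emph{simple} alternating path and that $e\neq(d_a,d_b)$. This amounts to checking that the reachability argument of Lemma~\ref{first-lemma}, now invoked for an arbitrary external partner $w_a$ rather than for the specific vertex $M(s^c_0)$, returns a $(+,+)$ edge local to $w_a$'s gadget and hence disjoint from $\{d_a,d_b,w_b\}$. The only borderline case is when $w_a$ is itself an $s_0/t_0$-type vertex, where the reachable $(+,+)$ edge could fall back inside the $D$-gadget; there one either argues directly or extends from $w_b$ instead, so that the two external partners give two independent escape routes. A secondary (but essential) subtlety, already used inside Lemma~\ref{first-lemma}, is the parity observation invoked here: it is precisely what produces the second externally matched $D$-vertex $d_b$ and thus the central $(+,+)$ edge $(d_a,d_b)$.
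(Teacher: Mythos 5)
Your first half coincides with the paper's own proof: since $M$ must be perfect, parity gives two $D$-vertices $d_a,d_b$ matched to outside partners $w_a,w_b$, the edge $(d_a,d_b)$ is then a $(+,+)$ edge, and the paper also finishes by picking up a second blocking edge via the reachability analysis recycled from Lemma~\ref{first-lemma} whenever $w_a$ is a top-choice vertex or one of the vertices of type $x'_i,y'_i,p^c_{3j+2},q^c_{3j+2},\ldots$ The problem is the case you yourself flag as delicate---an external partner of type $s^c_0/t^c_0$---which is not a borderline nuisance but the main body of the paper's proof, and neither of your fallbacks closes it. First, your premise that ``$d_a$ lies in the tail of $w_a$'s list'' is false precisely for these vertices: in $s^c_0$'s list the block $D$ appears \emph{before} $t^c_0$, so an $s^c_0$ matched to a $D$-vertex prefers its partner to $t^c_0$; consequently $(s^c_0,t^c_0)$ is not a $(+,+)$ edge (it can even be labelled $(-,-)$ and vanish from $G_M$), and the case-(2) reachability of Lemma~\ref{first-lemma} does not transfer. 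Second, ``extend from $w_b$ instead'' fails exactly in the symmetric situation where $w_a=s^c_0$ and $w_b=t^c_0$ of the \emph{same} level-3 gadget: there the escape route from either endpoint leads straight back into the $D$-gadget, the only immediately reachable blocking edge is $(d_a,d_b)$ itself, and one can check (e.g.\ with $d_0\mapsto s^c_0$, $d_1\mapsto t^c_0$, $(d_2,d_3)\in M$) that the blocking edges inside the $D$-gadget all share a vertex, so no simple alternating path or cycle confined to the $D$-gadget and its partners yields a forbidden configuration.

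What ``argue directly'' would have to contain---and what the paper supplies---is a further case analysis on $M(t^c_0)$, which by Lemma~\ref{first-lemma} is at least as good as $s^c_0$. If $t^c_0$ is matched to a $D$-vertex, the paper uses $(t^c_0,s^c_1)$ as the second blocking edge when $s^c_1$ is matched below $t^c_0$, and otherwise $(s^c_1,t^c_1)\in M$ closes an alternating \emph{cycle} containing the single blocking edge $(d_a,d_b)$---configuration~(1) of Theorem~\ref{thm:char-popular}, not the two-blocking-edge path configuration your plan is built around. If $t^c_0$ is matched to some $s^c_i$, $i\in\{1,2,3\}$, then $(s^c_0,t^c_i)$ serves as the second blocking edge (using that $s^c_0$ prefers $t^c_1,t^c_2,t^c_3$ to the $D$-vertices). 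If $t^c_0$ is matched to $p^c_4$ or $p^c_7$, the paper walks into a level-2 gadget to reach the blocking edge $(p^c_4,q^c_5)$ (resp.\ $(p^c_7,q^c_8)$), possibly via the matched edge $(s^c_3,t^c_3)$. None of these constructions follows from the reachability statement you cite, so as written your proof has a genuine gap in what is in fact the hardest case of the lemma.
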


\begin{proof}
  Let $M$ be a matching that matches $d_i$ for some $i \in \{0,1,2,3\}$ to a vertex $v$ outside the $D$-gadget. This means at least 2 vertices
  $d_i$ and $d_j$ in the $D$-gadget are matched to vertices outside the $D$-gadget. So $(d_i,d_j)$ is a $(+,+)$ edge. We now claim there is a forbidden
  alternating path or cycle (as given in Theorem~\ref{thm:char-popular}) to $M$'s popularity.

  If $v$ is a top choice neighbor or a vertex such as $x'_i,y'_i$ (for $1 \le i \le \kappa$) or $p^c_{3h+2},q^c_{3h+2},p'^c_{3h+2},q'^c_{3h+2}$
  (for $h = 0,1,2$ and some clause $c$) then there is a $(+,+)$ edge $e$
  reachable from $v$ via an alternating path of length at most 2 as seen in the proof of Lemma~\ref{first-lemma}.
  This creates an alternating path in $G_M$ with 2 $(+,+)$ edges: $(d_j,d_i)$ and $e$.
  
  The other possibility is that $v$ is $s^c_0, t^c_0, s'^c_0, t'^c_0$ for some clause~$c$.
  Assume without loss of generality that $v = s^c_0$. Consider the vertex $t^c_0$. We know from Lemma~\ref{first-lemma} that $t^c_0$ has to be matched
  to a neighbor at least as good as $s^c_0$. So we have the following cases:
  \begin{itemize}
  \item[(1)] $t^c_0$ is matched to $d_{j}$ in the $D$-gadget: this means there is either an alternating path with 2 $(+,+)$ edges or an alternating cycle
    with a $(+,+)$ edge:
    \[ (s^c_0,d_i) \overset{(+,+)}{-} (d_{j},t^c_0) \overset{(+,+)}{-} (s^c_1,\ast) \ \ \ \ \ \ \ \mathrm{or}\ \ \ \ \ \ \ (s^c_0,d_i) \overset{(+,+)}{-} (d_{j},t^c_0) \overset{(+,-)}{-} (s^c_1,t^c_1) \overset{(-,+)}{-} (s^c_0,d_i).\]
    If $s^c_1$ is matched to a neighbor worse than $t^c_0$ then the former is an alternating path with two $(+,+)$ edges: these are $(d_i,d_{j})$ and $(t^c_0,s^c_1)$.
    Else $s^c_1$ is matched to $t^c_1$ and the latter is an alternating cycle with a $(+,+)$ edge, which is $(d_i,d_{j})$.
  \item[(2)] $t^c_0$ is matched to $s^c_i$ for some $i \in \{1,2,3\}$: this means $t^c_i$ is matched to a neighbor worse than $s^c_0$ and so $(s^c_0,t^c_i)$ is a
    $(+,+)$ edge and thus we have the following alternating path with two $(+,+)$ edges $(t^c_i,s^c_0)$ and $(d_i,d_j)$:
    \[ (\ast,t^c_i) \overset{(+,+)}{-}  (s^c_0,d_i) \overset{(+,+)}{-} (d_j,\ast).\]
  \item[(3)] $t^c_0$ is matched to either $p^c_4$ or $p^c_7$: we will show that this results in an alternating path with two $(+,+)$ edges.
    Assume without loss of generality that $t^c_0$ is matched to $p^c_4$. 
    Consider the following alternating path: 
    \[ (\ast,s^c_3) \overset{(+,+)}{-} (t^c_0,p^c_4) \overset{(+,+)}{-} (q^c_5,\ast) \ \ \ \ \ \ \ \mathrm{or}\ \ \ \ \ \ \ (\ast,d_j) \overset{(+,+)}{-} (d_i,s^c_0) \overset{(+,-)}{-} (t^c_3,s^c_3) \overset{(-,+)}{-} (t^c_0,p^c_4) \overset{(+,+)}{-} (q^c_5,\ast).\]
    Recall that $s^c_3$ is the top choice neighbor of $t^c_0$ and the vertex $p^c_4$ is the top choice neighbor of $q^c_5$. If the vertex $s^c_3$ is matched to a neighbor worse than $t^c_0$ then the former path is an alternating path in $G_M$ with two $(+,+)$ edges in it: these are
    $(s^c_3,t^c_0)$ and $(p^c_4,q^c_5)$.
    Else $(s^c_3,t^c_3) \in M$ and recall that the edge $(s^c_0,t^c_3)$ is a $(+,-)$ edge as $s^c_0$ prefers $t^c_3$ to $d_i$. This creates the latter
    path which is an alternating path in $G_M$ with 2 $(+,+)$ edges in it: these are $(d_i,d_j)$ and $(p^c_4,q^c_5)$. \qed
  \end{itemize}  
\end{proof}

The gadget $D$ admits 2 popular matchings: $\{(d_0,d_1),(d_2,d_3)\}$ and  $\{(d_0,d_2),(d_1,d_3)\}$. So if $M$ is a popular matching then either
$\{(d_0,d_1),(d_2,d_3)\} \subset M$ or $\{(d_0,d_2),(d_1,d_3)\} \subset M$.
\new{The following lemma further restricts the set of popular edges and this will be used repeatedly in our proof.}

\begin{lemma}
  \label{lem:better-than-d}
  Let $(u,v)$ be an edge in $G$ where both $u$ and $v$ prefer $d_0$ to each other. Then $(u,v)$ cannot be a popular edge.
\end{lemma}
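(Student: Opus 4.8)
The plan is to assume for contradiction that $(u,v)\in M$ for some popular matching $M$ and to exhibit a forbidden alternating \emph{cycle} in $G_M$, invoking Theorem~\ref{thm:char-popular}, part~(1). First I would record two preliminary facts. By Lemma~\ref{second-lemma}, $M$ matches the $D$-gadget internally, so $d_0$ is matched to $d_1$ or to $d_2$. Moreover, since $d_1,d_2,d_3$ each rank $d_0$ \emph{last} among $\{d_0,d_1,d_2,d_3\}$, no intra-$D$ edge has both endpoints preferring $d_0$; hence the hypothesis of the lemma already forces $u,v\notin\{d_0,d_1,d_2,d_3\}$.

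The crux of the argument is a simple observation about preference lists: in the list of every vertex outside $D$, the four vertices $d_0,d_1,d_2,d_3$ occur as one contiguous block $D = d_0 > d_1 > d_2 > d_3$. Therefore, if $u$ prefers $d_0$ to $v$ and $v\neq d_i$, then $v$ must lie \emph{below the entire block}, so $u$ in fact prefers every one of $d_0,d_1,d_2,d_3$ to $v=M(u)$; symmetrically $v$ prefers every $d_i$ to $u$. Consequently all eight edges $(u,d_i)$ and $(v,d_i)$ survive in $G_M$ with $u$ (resp.\ $v$) casting a $+$ vote on each. These extra edges — in particular those to $d_2$ (or $d_3$) besides the one to $d_0$ — are exactly what I will use to close a cycle.

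Then I would split on how $M$ matches $D$. If $\{(d_0,d_1),(d_2,d_3)\}\subset M$, the edge $(d_1,d_3)$ is a $(+,+)$ edge, and I would trace the alternating cycle $d_1 \overset{(+,+)}{-} d_3 - d_2 \overset{(-,+)}{-} u - v \overset{(+,-)}{-} d_0 - d_1$, whose non-matching edges are $(d_1,d_3)$, $(d_2,u)$, $(v,d_0)$ and whose matching edges are $(d_3,d_2)$, $(u,v)$, $(d_0,d_1)$. If instead $\{(d_0,d_2),(d_1,d_3)\}\subset M$, the blocking edge is $(d_1,d_2)$ and the analogous cycle is $d_1 \overset{(+,+)}{-} d_2 - d_0 \overset{(-,+)}{-} u - v \overset{(+,-)}{-} d_3 - d_1$. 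In each case the six vertices are distinct (as $u,v\notin D$), the edges alternate correctly, and every non-matching edge is non-$(-,-)$, so the cycle lies in $G_M$ and contains a $(+,+)$ edge, contradicting the popularity of $M$.

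I expect the main obstacle to be spotting the block observation rather than the subsequent routing: using only the bare statement ``$u,v$ prefer $d_0$'' one can reach the single internal blocking edge of the (stable-matching-free) $D$-gadget via $d_0$, but there is then no available edge to route back from the far side of the gadget to $u$ or $v$, and the cycle cannot be closed; promoting ``$d_0$'' to ``the whole $D$-block'' supplies the missing $(d_2,u)$ (resp.\ $(d_3,v)$) edge. The only routine work left is checking the $\pm$ labels of the six edges in each cycle against the $D$-gadget preferences of Fig.~\ref{fig:nosm}, which I would do case by case.
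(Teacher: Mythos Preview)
Your proposal is correct and follows essentially the same approach as the paper: both arguments observe that $u,v\notin D$, upgrade ``prefers $d_0$'' to ``prefers every $d_i$'' via the contiguous $D$-block in all outside preference lists, and then exhibit the same six-vertex alternating cycle $(u,v)-(d_{i'},d_i)-(d_j,d_{j'})-(u,v)$ through the internal blocking edge of the $D$-gadget. The only differences are cosmetic: the paper treats both matchings on $D$ with a single generic cycle while you split explicitly, and the paper dispatches the ``exactly one of $u,v$ in $D$'' case by a direct appeal to Lemma~\ref{second-lemma} rather than folding it into the intra-$D$ observation.
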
  

\begin{proof}
  Let $M$ be a popular matching in $G$ that contains such an edge $(u,v)$. We know from Lemma~\ref{second-lemma} that either $\{(d_0,d_1),(d_2,d_3)\} \subset M$
  or $\{(d_0,d_2),(d_1,d_3)\} \subset M$. So there is always a blocking edge $(d_i,d_j) \in \{(d_1,d_3),(d_1,d_2)\}$ to $M$.
  
  Observe that both $u$ and $v$ cannot belong to the $D$-gadget as there is no such pair within $D$.
  If exactly one of $u,v$ belongs to the $D$-gadget then $(u,v)$ is not a popular edge (by Lemma~\ref{second-lemma}). 
  So neither $u$ nor $v$ belongs to the $D$-gadget and this implies that $u$ prefers $d_0,d_1,d_2,d_3$ to $v$
  and symmetrically, $v$ prefers $d_0,d_1,d_2,d_3$ to $u$.

  Consider the following alternating cycle $C$ with respect to~$M$:
  \[ (u,v) \overset{(+,-)}{-} (d_{i'},d_i) \overset{(+,+)}{-} (d_j,d_{j'}) \overset{(-,+)}{-} (u,v),\]
  where $(d_{i'},d_i)$ and $(d_j,d_{j'})$ are edges from the $D$-gadget in $M$ and $(d_i,d_j)$ is a blocking edge.
  Thus $C$ is an alternating cycle in $G_M$ with a $(+,+)$ edge. This contradicts the popularity of~$M$ (by Theorem~\ref{thm:char-popular}). \qed 
\end{proof}

\begin{corollary}
  \label{cor1}
  The edges $(s^c_0,t^c_0)$ and $(s'^c_0,t'^c_0)$ are not popular edges for any clause $c$.
\end{corollary}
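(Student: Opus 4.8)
The plan is to derive Corollary~\ref{cor1} as an immediate consequence of Lemma~\ref{lem:better-than-d}. That lemma says that any edge whose two endpoints each rank $d_0$ above the other cannot be a popular edge. So the entire task reduces to verifying that $s^c_0$ and $t^c_0$ are precisely such a pair, and likewise $s'^c_0$ and $t'^c_0$. No structural or alternating-path argument is needed here; all the real work was already done in Lemma~\ref{lem:better-than-d}.

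First I would read off the preference lists of $s^c_0$ and $t^c_0$ from Fig.~\ref{fig:3}, keeping in mind the convention that the symbol $D$ in a list stands for the block $d_0 > d_1 > d_2 > d_3$. In $s^c_0$'s list, namely $t^c_1 > q^c_0 > t^c_2 > q^c_3 > t^c_3 > D > t^c_0 > \ldots$, the vertex $t^c_0$ appears strictly after the block $D$; hence $s^c_0$ prefers $d_0$ (the first element of $D$) to $t^c_0$. Symmetrically, in $t^c_0$'s list, namely $s^c_3 > p^c_7 > s^c_2 > p^c_4 > s^c_1 > D > s^c_0 > \ldots$, the vertex $s^c_0$ appears after $D$, so $t^c_0$ prefers $d_0$ to $s^c_0$. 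Thus $(s^c_0,t^c_0)$ is an edge both of whose endpoints prefer $d_0$ to each other, and Lemma~\ref{lem:better-than-d} immediately gives that $(s^c_0,t^c_0)$ is not a popular edge.

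Finally I would invoke the fact, stated when the second level-3 gadget was introduced, that the preference lists of $s'^c_0,t'^c_0$ are completely analogous to those of $s^c_0,t^c_0$. In particular $t'^c_0$ again sits below $D$ in $s'^c_0$'s list and $s'^c_0$ sits below $D$ in $t'^c_0$'s list, so the same one-line application of Lemma~\ref{lem:better-than-d} shows $(s'^c_0,t'^c_0)$ is not a popular edge. Since this holds for every clause $c$, the corollary follows.

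There is essentially no obstacle to this proof: the only point requiring any care is correctly expanding the macro $D$ into $d_0 > d_1 > d_2 > d_3$ and confirming that $t^c_0$ (resp.\ $s^c_0$) really does lie to the right of the entire $D$-block in the relevant list, rather than being interleaved with it. Once that placement is confirmed from Fig.~\ref{fig:3}, the corollary is a direct citation of Lemma~\ref{lem:better-than-d}.
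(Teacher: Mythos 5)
Your proposal is correct and matches the paper's own argument exactly: the paper also derives Corollary~\ref{cor1} as a one-line application of Lemma~\ref{lem:better-than-d} with $u = s^c_0$, $v = t^c_0$ (and similarly for the primed pair), relying on the fact that $t^c_0$ appears after the block $D$ in $s^c_0$'s list and vice versa. Your more explicit verification of the list placements from Fig.~\ref{fig:3} is just a spelled-out version of the same step.
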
  

Corollary~\ref{cor1} follows from Lemma~\ref{lem:better-than-d} by setting $u$ and $v$ to $s^c_0$ and $t^c_0$ (similarly, $s'^c_0$ and $t'^c_0$),
respectively. Let us call $u$ a level~$i$ vertex if $u$ belongs to a level~$i$ gadget. Lemma~\ref{lem:no-pop-general} further restricts the set of popular edges; the proof of this lemma consists of three main
claims.

\begin{lemma}
  \label{lem:no-pop-general}
    No edge between a level~$i$ vertex and a level~$i+1$ vertex is popular, for $0 \le i \le 2$.
\end{lemma}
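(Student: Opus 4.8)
The plan is to assume for contradiction that some edge $(u,v)$ with $u$ in level~$i$ and $v$ in level~$i+1$ lies in a popular matching $M$, and to exhibit one of the forbidden configurations of Theorem~\ref{thm:char-popular}. The first step is a case split driven by the $D$-gadget. If both $u$ and $v$ prefer $d_0$ to each other, then Lemma~\ref{lem:better-than-d} applies verbatim and $(u,v)$ is not popular. So I may assume at least one endpoint prefers the other to $d_0$. Inspecting the lists shows that the only adjacent-level edges with this property are the designated ``distinguished'' edges: the level-$0$ vertices $a^c_i,b^c_i$ (and their primed copies) rank a bottom vertex $x'_\star,y'_\star$ of level~$1$ above $d_0$; the level-$2$ vertices $p^c_{3j+2},q^c_{3j+2}$ rank a top vertex $x_\star,y_\star$ of level~$1$ above $d_0$; and $s^c_0,t^c_0$ rank $q^c_0,q^c_3,p^c_7,p^c_4$ of level~$2$ above $d_0$. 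Writing such an edge as $(w,\ell)$, with $w$ the endpoint that prefers the other, the partner $\ell$ always ranks $w$ in its tail, i.e.\ below every genuine gadget-neighbour of $\ell$ and below $d_0$.

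The core mechanism is the one from the proof of Lemma~\ref{first-lemma}: I would produce a $(+,+)$ edge reachable from $\ell$ by an alternating path starting with a non-matching edge at $\ell$, and a $(+,+)$ edge reachable from $w$ by an alternating path starting with a non-matching edge at $w$. Concatenating these two paths through $(w,\ell)\in M$ yields either an alternating cycle carrying a $(+,+)$ edge (if the two paths end at the same edge) or an alternating path carrying two distinct $(+,+)$ edges; both are forbidden by Theorem~\ref{thm:char-popular}, so $M$ cannot be popular. Since $n$ is even and every popular matching is perfect, the third forbidden configuration (an unmatched endpoint) never arises.

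The $\ell$-side is the easy half. Because $\ell$ is matched into its tail, its top gadget-neighbour $\ell^\ast$ is not $\ell$'s partner and is matched to someone it ranks below $\ell$; by Observation~\ref{obs:first_choice} the edge $(\ell,\ell^\ast)$ is then a $(+,+)$ edge incident to $\ell$. A small sub-case is needed only for the level-$0\to1$ transition, where $\ell$'s top neighbour may be matched to its own stable partner; then $\ell$'s second gadget-neighbour is itself forced into its tail and gives the blocking edge at $\ell$. In every transition this yields a $(+,+)$ edge at $\ell$ via an alternating path of length at most~$2$.

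The $w$-side is where the real work lies and is the step I expect to be the main obstacle. Here I would delete $w$ from its gadget and run a bounded case analysis over how the remaining gadget vertices are matched; the number of sub-cases grows with the gadget size, so level~$2$ (six vertices) and level~$3$ (eight vertices) are the heavy cases. In most sub-cases a blocking edge appears inside $w$'s own gadget and is reachable from $w$ along a short alternating path, which together with the $\ell$-side edge and the mechanism above finishes the argument; I expect even the level-$0$ ``doubly distinguished'' configuration (where $a^c_1$ and $b^c_1$ are both pulled out) to resolve this way, since $(a^c_1,b^c_1)$ survives in $G_M$ as a $(+,-)$ edge and routes $w$ to a blocking edge at the partner. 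The delicate sub-cases are the doubly distinguished ones \emph{in levels~$2$ and $3$}, where the two distinguished vertices are adjacent but their connecting edge is $(-,-)$ (hence deleted in $G_M$) and the rest of the gadget matches internally with no internal blocking edge. In such a configuration the second $(+,+)$ edge is not found near $w$; instead, removing $\ell$ from its gadget forces a neighbour of $\ell$ (a variable-gadget vertex, or a level-$2$ vertex) onto a partner that both rank below $d_0$, and Lemma~\ref{lem:better-than-d} contradicts popularity directly. The careful part is to organise these forced tail-matches so the argument is never circular among the distinguished edges: I would process the transitions in increasing order of $i$, so that any forced match landing on a distinguished edge of a strictly lower transition has already been excluded, while every other forced match either terminates at a generic neighbour (Lemma~\ref{lem:better-than-d}) or routes the alternating path into the $D$-gadget, which carries a blocking edge in every popular matching by Lemma~\ref{second-lemma}.
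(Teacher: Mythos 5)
Your skeleton matches the paper's: reduce to the ``distinguished'' inter-level edges via Lemma~\ref{lem:better-than-d}, handle the three transitions in increasing order so that forced matches landing on a lower transition are already excluded, and contradict popularity with an alternating path carrying two $(+,+)$ edges (Theorem~\ref{thm:char-popular}). Your level~$0$--$1$ analysis, including the doubly distinguished configuration routed through the surviving $(+,-)$ edge $(a^c_1,b^c_1)$, is exactly the paper's Claim~\ref{lem:no-pop-0-1}. The genuine gap is in your resolution of the sub-cases you yourself flag as delicate: the doubly distinguished configurations at levels~$1$--$2$ and $2$--$3$. There you claim that removing $\ell$ from its gadget forces some neighbour of $\ell$ onto a partner that \emph{both} rank below $d_0$, so Lemma~\ref{lem:better-than-d} finishes. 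The adversary chooses which neighbour is forced out, and in one of the two sub-cases the claim is false. Take $(p^c_2,y_j)\in M$: the gadget of $y_j$ must pair $y'_j$ with $x_j$ or with $x'_j$. If $(x_j,y'_j)\in M$, then $x'_j$ is indeed forced onto a mutually-below-$d_0$ partner (its only above-$d_0$ suitors are level-$0$ vertices, excluded by your ordering). But if $(x'_j,y'_j)\in M$, the forced-out vertex is $x_j$, and $x_j$ has suitors that rank it \emph{above} $d_0$: the vertices $z_0,z_1$, and level-$2$ vertices of other clauses containing $X_j$ (such as $q^{c'}_8$ when $X_j$ is the second variable of $c'$). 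A match to $z_0$ or $z_1$ escapes Lemma~\ref{lem:better-than-d}, and you cannot invoke Lemma~\ref{lem:z_outside} to rule it out, since that lemma is proved \emph{after}, and by means of, the present one; a match to $q^{c'}_8$ is a distinguished edge of the \emph{same} transition, which your increasing-order scheme does not exclude. The identical hole occurs at level~$2$--$3$: with $(s^c_0,q^c_0)\in M$, the vertex of $q^c_0$'s gadget forced outside may be $p^c_1$, whose suitors $z_0,z_1$ rank it above $d_0$.

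The missing idea --- and what the paper actually does in Claims~\ref{lem:no-pop-1-2} and \ref{lem:no-pop-2-3} --- is that for these two transitions one should not seek a contradiction \emph{at} the forced-out vertex at all. Use the previous claim together with Lemma~\ref{lem:better-than-d} only to pin down the \emph{companion} vertex: $x'_j$ must marry $y'_j$ (resp.\ $p^c_2$ must marry $q^c_1$ or $q^c_2$). Then both $(+,+)$ edges lie inside $\ell$'s own gadget: the path $(p^c_2,y_j)-(x'_j,y'_j)-(x_j,\ast)$ has blocking edges $(y_j,x'_j)$ and $(y'_j,x_j)$, and it terminates with the matching edge of $x_j$, whose partner is irrelevant --- all one needs is that this partner is worse than $y'_j$ in $x_j$'s list, which holds because $y_j,y'_j$ are taken and the $D$-gadget is matched internally (Lemma~\ref{second-lemma}). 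This makes the troublesome sub-case ($x_j$ matched to $z_0$, $z_1$, or another level-$2$ vertex) harmless, and it removes the need for any case analysis on $w$'s gadget in these transitions. As written, your proof fails precisely on the configurations you identified as the crux.
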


The proof of Lemma~\ref{lem:no-pop-general} follows from Claims~\ref{lem:no-pop-0-1}-\ref{lem:no-pop-2-3} proved below.
\begin{new-claim}
  \label{lem:no-pop-0-1}
  There is no popular edge between a level~0 vertex and a level~1 vertex.
\end{new-claim}
\begin{proof}
  Let $M$ be a popular matching in $G$ with such an edge, say $(a^c_1,y'_j)$, where $c = X_i \vee X_j \vee X_k$.
  We claim this would create an alternating path in $G_M$
  with two $(+,+)$ edges in it (this would contradict Theorem~\ref{thm:char-popular}). Consider the vertex $b^c_1$.
  There are 3 possibilities for $b^c_1$'s partner in $M$.
  \begin{itemize}
  \item[(1)] $b^c_1$ is matched to $a^c_2$\\
   So $(a^c_2,b^c_2)$ is labeled $(+,+)$. Recall that $b^c_2$ is $a^c_2$'s top choice and the only neighbor that $b^c_2$ prefers to $a^c_2$ is $a^c_1$ (which is matched to $y'_j$). Consider the following alternating path in $G_M$:
    \[ \hspace*{-0.2cm}(\ast,b^c_2) \overset{(+,+)}{-} (a^c_2,b^c_1) \overset{(-,+)}{-} (a^c_1,y'_j) \overset{(+,+)}{-} (x'_j,\ast).\]
    If $x'_j$ is matched to a neighbor worse than $y'_j$ then the above is an alternating path in $G_M$ with two $(+,+)$ edges: these are $(b^c_2,a^c_2)$ and $(y'_j,x'_j)$.
  Else replace $(x'_j,\ast)$ in the above path with $(x'_j,y_j) -  (x_j,\ast)$: the $(+,+)$ edges here are $(b^c_2,a^c_2)$ and $(y_j,x_j)$.

  \smallskip

\item[(2)] $b^c_1$ is matched to $x'_k$\\
  Either the edge $(x'_k,y'_k)$ or the edge $(x_k,y_k)$ will block~$M$. Suppose $y'_k$ is matched to a neighbor worse than $x'_k$ in $M$. Consider the following alternating path in $G_M$:
    \[ \hspace*{-0.2cm}(\ast,y'_k) \overset{(+,+)}{-} (x'_k,b^c_1) \overset{(-,+)}{-} (a^c_1,y'_j) \overset{(+,+)}{-} (x'_j,\ast).\]
    Either the above is an alternating path in $G_M$ with two $(+,+)$ edges or by replacing $(x'_j,\ast)$ with $(x'_j,y_j) - (x_j,\ast)$
    (as done in case~(1)), we get an alternating path in $G_M$ with two $(+,+)$ edges.

    If $y'_k$ is matched to a neighbor better than $x'_k$ in $M$, i.e., if $(x_k,y'_k) \in M$ then prefix
    both these alternating paths with $(\ast,y_k)$.
    This will yield an alternating path in $G_M$ with $(x_k,y_k)$ as a blocking edge and either $(x'_j,y'_j)$ or $(x_j,y_j)$ as a blocking edge.

    \smallskip

    \item[(3)] $b^c_1$ is matched to a neighbor worse than $a^c_1$\\
   Then the edge $(a^c_1,b^c_1)$ is labeled $(+,+)$. Consider the following alternating path in $G_M$:
  \[ (\ast,b^c_1) \overset{(+,+)}{-} (a^c_1,y'_j) \overset{(+,+)}{-} (x'_j,\ast) \ \ \ \ \ \ \ \mathrm{or} \ \ \ \ \ \ \ (\ast,b^c_1) \overset{(+,+)}{-} (a^c_1,y'_j) \overset{(+,-)}{-} (x'_j,y_j) \overset{(+,+)}{-} (x_j,\ast).\]
  That is, if $x'_j$ is matched to a neighbor worse than $y'_j$ then consider the first alternating path above: this is an alternating path in
  $G_M$ with both $(b^c_1,a^c_1)$ and $(y'_j,x'_j)$ as $(+,+)$ edges. Else $(x'_j,y_j) \in M$ and the second alternating path is an alternating path in $G_M$ with $(b^c_1,a^c_1)$ and $(y_j,x_j)$ as $(+,+)$ edges. 
\end{itemize} 

Thus $(a^c_1,y'_j)$ cannot belong to any popular matching in $G$. Similarly, $(b^c_1,x'_k)$ also cannot belong to any popular matching.
Suppose $(b^c_1,x'_k) \in M$.
Consider the vertex $a^c_2$: there are 2 possibilities for $a^c_2$'s partner in $M$. 
\begin{enumerate}
\item[(1)] $a^c_2$ is matched to $b^c_2$\\ 
Recall that $a^c_1$ is $b^c_2$'s top choice and since $a^c_1$ is not matched to either $y'_j$ (by our proof above) or $b^c_1$ (which is
matched to $x'_k$), the edge $(a^c_1,b^c_2)$ is labeled $(+,+)$. Consider the following alternating path in $G_M$:
    \[ \hspace*{-0.2cm}(\ast,a^c_1) \overset{(+,+)}{-} (b^c_2,a^c_2) \overset{(-,+)}{-} (b^c_1,x'_k) \overset{(+,+)}{-} (y'_k,\ast).\]
    If $y'_k$ is matched to a neighbor worse than $x'_k$ then the above is an alternating path in $G_M$ with two $(+,+)$ edges: these are 
$(a^c_1,b^c_2)$ and $(x'_k,y'_k)$. Else replace $(y'_k,\ast)$ in the above path with $(y'_k,x_k) -  (y_k,\ast)$: the $(+,+)$ edges here 
are $(a^c_1,b^c_2)$ and $(x_k,y_k)$.

\smallskip

\item[(2)] $a^c_2$ is matched to a neighbor worse than $b^c_1$\\ 
Since $a^c_2$ is $b^c_1$'s top choice, the edge $(a^c_2,b^c_1)$ is labeled $(+,+)$. Consider the following alternating path in $G_M$:
  \[ (\ast,a^c_2) \overset{(+,+)}{-} (b^c_1,x'_k) \overset{(+,+)}{-} (y'_k,\ast) \ \ \ \ \ \ \ \mathrm{or} \ \ \ \ \ \ \ (\ast,a^c_2) \overset{(+,+)}{-} (b^c_1,x'_k) \overset{(+,-)}{-} (y'_k,x_k) \overset{(+,+)}{-} (y_k,\ast).\]
  That is, if $y'_k$ is matched to a neighbor worse than $x'_k$ then consider the first alternating path above: this is an alternating path in
  $G_M$ with both $(a^c_2,b^c_1)$ and $(x'_k,y'_k)$ as $(+,+)$ edges. Else $(y'_k,x_k) \in M$ and the second alternating path is an alternating path in $G_M$ with $(a^c_2,b^c_1)$ and $(x_k,y_k)$ as $(+,+)$ edges. 
\end{enumerate}
Thus $(b^c_1,x'_k)$ is also not a popular edge. The proof that $(a^c_3,y'_k),(a^c_5,y'_i),(a'^c_1,y'_k),(a'^c_3,y'_i),(a'^c_5,y'_j)$ 
are not popular edges is analogous to the proof that $(a^c_1,y'_j)$ is not a popular edge. Similarly, the proof that $(b^c_3,x'_i),(b^c_5,x'_j),(b'^c_1,x'_j),(b'^c_3,x'_k),(b'^c_5,x'_i)$ are not popular edges is analogous to the proof that $(b^c_1,x'_k)$ is not a popular edge.
The presence of any other edge between a level~0 vertex corresponding to clause $c$ and a level~1 vertex in the popular matching $M$ would contradict Lemma~\ref{lem:better-than-d}. \qed
\end{proof}  

\begin{new-claim}
  \label{lem:no-pop-1-2}
  There is no popular edge between a level~1 vertex and a level~2 vertex.
\end{new-claim}
\begin{proof}
  Let $M$ be a popular matching in $G$ that contains such an edge, say $(p^c_2,y_j)$, where $c = X_i \vee X_j \vee X_k$. Consider the following alternating path with respect to~$M$:
  \[ (p^c_2,y_j) \overset{(+,+)}{-} (x'_j,y'_j) \overset{(+,+)}{-} (x_j,\ast).\]
  Since $M$ is a perfect matching, $x'_j$ is matched in $M$. 
  We know that no edge between $x'_j$ and a level~0 vertex belongs to $M$ (by Claim~\ref{lem:no-pop-0-1}). Also, $M$ cannot match $x'_j$ to a neighbor that
  it regards worse than $d_0$ (by Lemma~\ref{lem:better-than-d}) or to any neighbor in the $D$-gadget (by Lemma~\ref{second-lemma}). Thus $x'_j$ has to be matched to $y'_j$ in $M$ and so the above alternating path has two $(+,+)$
  edges: $(x'_j,y_j)$ and $(x_j,y'_j)$. This is a contradiction to $M$'s popularity (by Theorem~\ref{thm:char-popular}). 

It can similarly be shown that $(q^c_2,x_k)$ cannot belong to $M$. We consider the alternating path
$(q^c_2,x_k) - (y'_k,x'_k) - (y_k,\ast)$ here to get a contradiction to Theorem~\ref{thm:char-popular}.

The proof that $M$ cannot include any other edge between  a level~1 vertex and a level~2 vertex corresponding to clause~$c$ is
either analogous to the above proof or the presence of such an edge in $M$ violates Lemma~\ref{lem:better-than-d}. \qed
\end{proof}

\begin{new-claim}
  \label{lem:no-pop-2-3}
  There is no popular edge between a level~2 vertex and a level~3 vertex.
\end{new-claim}
\begin{proof}
  Let $M$ be a popular matching in $G$ that contains such an edge, say $(s^c_0,q^c_0)$, for some clause~$c$.
  Consider the following alternating path with respect to~$M$:
  \[ (s^c_0,q^c_0) \overset{(+,+)}{-} (p^c_2,q^c_2) \overset{(+,+)}{-} (p^c_0,\ast) \ \ \ \ \ \ \ \ \mathrm{or}\ \ \ \ \ \ \ \ \ (s^c_0,q^c_0) \overset{(+,+)}{-} (p^c_2,q^c_1) \overset{(+,+)}{-} (p^c_1,\ast).\]
  The vertex $p^c_2$ is matched in~$M$ and
  its partner cannot be a level~1 vertex (by Claim~\ref{lem:no-pop-1-2}) or a neighbor worse than $d_0$ (by Lemma~\ref{lem:better-than-d}) or to any neighbor in the $D$-gadget (by Lemma~\ref{second-lemma}).
  So either $(p^c_2,q^c_2)$ or $(p^c_2,q^c_1)$ is in $M$.
  This means either the first alternating path given above or the second one is an alternating path in $G_M$ with two $(+,+)$ edges: $(p^c_2,q^c_0)$ and
  $(p^c_0,q^c_2)$ in the former and $(p^c_2,q^c_0)$ and $(p^c_1,q^c_1)$ in the latter. This is a contradiction to $M$'s popularity (by Theorem~\ref{thm:char-popular}).

The proof that $M$ cannot include any other edge between a level~2 vertex and a level~3 vertex is either analogous to the
  above proof or the presence of such an edge in $M$ violates Lemma~\ref{lem:better-than-d}.  \qed
\end{proof}

\new{We now show that no popular matching contains an inter-gadget edge incident to any vertex in the $Z$-gadget. This is analogous to Lemma~\ref{second-lemma}, which showed the same property for the $D$-gadget.}

\begin{lemma}
  \label{lem:z_outside}
  All popular matchings match the 6 vertices of the $Z$-gadget among themselves.	
\end{lemma}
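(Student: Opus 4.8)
The plan is to suppose, for contradiction, that some vertex of the $Z$-gadget is matched by a popular matching $M$ to a vertex outside the gadget. Since $G$ is complete on an even number of vertices, $M$ is perfect, so every vertex is matched and I never need the unmatched-endpoint path of Theorem~\ref{thm:char-popular}(3); it suffices to exhibit an alternating cycle through a $(+,+)$ edge or an alternating path with two distinct $(+,+)$ edges.

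First I would pin down $z_2,z_3,z_4,z_5$. Each of these four ranks every other $z$-vertex, and then the whole $D$-gadget, above all remaining vertices; in particular each prefers $d_0$ to every vertex outside the $Z$- and $D$-gadgets. By Lemma~\ref{second-lemma} no $z$-vertex is matched to a $D$-vertex, so if, say, $z_2$ were matched to some $v$ outside the gadget, then both $z_2$ and $v$ would prefer $d_0$ to each other and Lemma~\ref{lem:better-than-d} would make $(z_2,v)$ non-popular, a contradiction. Hence $z_2,z_3,z_4,z_5$ are matched among the $z$-vertices. The set of $z$-vertices matched inside the gadget is matched among itself and so has even cardinality; as it already contains these four, the pair $z_0,z_1$ is either both matched inside (in which case the lemma holds) or both matched outside. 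It therefore remains only to rule out the latter case.

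So suppose $z_0$ is matched to $v_0$ and $z_1$ to $v_1$, both outside the gadget. By Lemma~\ref{lem:better-than-d} each of $v_0,v_1$ must be ranked above $d_0$ by $z_0$, $z_1$ respectively, so (reading their lists) $v_0,v_1$ lie in the explicitly listed external block $\cup_i\{x_i,y_i\}\cup\cup_{c,i}\{p^c_{3i+1},q^c_{3i},p'^c_{3i+1},q'^c_{3i}\}\cup\cup_{c,i}\{a^c_i,b^c_i,a'^c_i,b'^c_i\}$, each member of which is the top choice of some vertex. Thus, by Observation~\ref{obs:first_choice}, there is a $(+,+)$ edge incident to $v_0$ and one incident to $v_1$. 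The crucial structural point is that $z_0$ ranks both $z_4$ and $z_5$ above the whole external block, hence above $v_0$, and symmetrically $z_1$ ranks $z_4,z_5$ above $v_1$; moreover $z_4$ is matched to some $w=M(z_4)\in\{z_2,z_3,z_5\}$. A short inspection of the gadget's votes then shows that $(z_0,z_4)$ is a $(+,-)$ edge, that the matching edge $(z_4,w)$ lies in $G_M$, and that $(w,z_1)$ is a $(+,-)$ edge when $w\in\{z_2,z_3\}$ and a $(-,+)$ edge when $w=z_5$ — in every case an edge of $G_M$. Hence $z_0-z_4-w-z_1$ is an alternating path in $G_M$; splicing the $(+,+)$ edges at $v_0$ and $v_1$ onto its two ends yields an alternating path with two distinct $(+,+)$ edges, or, in the degenerate case where those two $(+,+)$ edges share an endpoint, an alternating cycle through a $(+,+)$ edge. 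Either configuration contradicts Theorem~\ref{thm:char-popular}.

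The hard part will be this final case: verifying that the connecting path $z_0-z_4-w-z_1$ survives in $G_M$ for \emph{every} internal matching of $\{z_2,z_3,z_4,z_5\}$, which is the only place a genuine case analysis over $w\in\{z_2,z_3,z_5\}$ is needed. This is exactly where the asymmetric preferences of $z_0,z_1$ (placing $z_4,z_5$ above the external block) and the cyclic structure of $z_2,\dots,z_5$ do the work, guaranteeing that $z_0$ and $z_1$ cannot be ``half-stolen''; everything else reduces to the already-established Lemmas~\ref{second-lemma} and~\ref{lem:better-than-d} together with a parity count.
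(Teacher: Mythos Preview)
Your proof is correct and actually follows a cleaner route than the paper's own argument. Both start the same way, using Lemma~\ref{lem:better-than-d} (together with Lemma~\ref{second-lemma}) to pin $z_2,z_3,z_4,z_5$ inside the gadget. From there the paper argues about $z_0$ alone and does an outer case split on the \emph{level} of its external partner (level~0, level~1, or level~2), invoking Lemma~\ref{lem:no-pop-general} in each branch to constrain the neighbouring gadget before building a forbidden path or cycle; only inside the level-0 branch does it run the three-way split on $M(z_4)$.

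You instead use the parity observation that $z_0$ and $z_1$ are either both inside or both outside, and then treat the external partners $v_0,v_1$ uniformly: since each lies in the explicit block that both $z_0$ and $z_1$ rank above $z_1,z_2,z_3$ and $D$, Observation~\ref{obs:first_choice} hands you a $(+,+)$ edge at each end, and the $z_0\text{--}z_4\text{--}w\text{--}z_1$ bridge (with the three-way check on $w\in\{z_2,z_3,z_5\}$) connects them. This avoids the level-by-level analysis entirely and, notably, does not rely on Lemma~\ref{lem:no-pop-general}. The price is only the small bookkeeping you flag at the end --- verifying the bridge labels for each $w$ and handling the degenerate case $u_0=v_1$ (or $u_1=v_0$), which indeed collapses to an alternating cycle with a $(+,+)$ edge. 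The paper's approach buys nothing extra here; your argument is at least as strong and shorter.
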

\begin{proof}
  Let $M$ be any popular matching in $G$. It follows from Lemma~\ref{lem:better-than-d} that $M$ has to match $z_2,z_3, z_4$, and $z_5$ within
  the $Z$-gadget. Let us now show that $z_0$ also has to be matched within the $Z$-gadget. Then it immediately follows that $z_1$ also has to be
  matched within the $Z$-gadget. We have the following 3 cases: 
	\begin{itemize}
		\item[(1)] Suppose $z_0$ is matched in $M$ to a level~0 neighbor, say $b^c_1$, for some clause~$c$. Then $(a^c_1,b^c_1)$ is a blocking edge to~$M$. Lemmas~\ref{second-lemma}, \ref{lem:better-than-d}, and \ref{lem:no-pop-general} ensure that $a^c_1$ is either matched to $z_1$ or to $b^c_2$. We investigate these two cases below.
			\begin{itemize}
	          \item $(a^c_1,z_1) \in M$: Here both $z_0$ and $z_1$ are matched to vertices they prefer to all their neighbors inside the $Z$-gadget, except for $z_4$ and~$z_5$. We know that $z_4$ and $z_5$ must be matched inside the $Z$-gadget. There are 3 subcases and in each case there is an alternating cycle in $G_M$ with a blocking edge $(a^c_1,b^c_1)$: a contradiction to $M$'s popularity (by Theorem~\ref{thm:char-popular}).               
                  \begin{itemize}
                  \item $(z_4,z_2) \in M$: the alternating cycle is $(b^c_1,z_0) \overset{(+,-)}{-} (z_4, z_2) \overset{(+,-)}{-} (z_1,a^c_1) \overset{(+,+)}{-} (b^c_1,z_0)$.
                  \item $(z_4,z_3) \in M$: the alternating cycle is $(b^c_1,z_0) \overset{(+,-)}{-} (z_4, z_3) \overset{(+,-)}{-} (z_1,a^c_1) \overset{(+,+)}{-} (b^c_1,z_0)$.
                  \item $(z_4,z_5) \in M$: the alternating cycle is $(b^c_1,z_0) \overset{(+,-)}{-} (z_4, z_5) \overset{(-,+)}{-} (z_1,a^c_1) \overset{(+,+)}{-} (b^c_1,z_0)$.
                  \end{itemize}
			
                       \medskip						
	              
                       \item $(a^c_1,b^c_2) \in M$: Lemmas~\ref{second-lemma}, \ref{lem:better-than-d}, and \ref{lem:no-pop-general}
                        ensure that $a^c_2$ is matched to~$z_1$ (recall that $M$ is perfect). 
                        This leads to the same 3 subcases as above, except that instead of the
                        edge $(z_1,a^c_1)$, there is the path $(z_1, a^c_2) - (b^c_2,a^c_1)$ in $G_M$: here $(a^c_2,b^c_2)$ is labeled $(+,-)$.
                        \end{itemize}

                        \smallskip
                        
	        \item[(2)] Suppose $z_0$ is matched in $M$ to a level~1 neighbor, say $y_i$, for some $i \in \{1,\ldots,\kappa\}$.
                  
		  This case is similar to the previous case. Here the edge $(x_i, y_i)$ becomes the blocking edge to $M$. It follows from Lemmas~\ref{second-lemma}, \ref{lem:better-than-d}, and \ref{lem:no-pop-general} that $x_i$ is either matched to $z_1$ or to $y'_i$. The latter case leaves $x'_i$ unmatched and the subcases that arise in the former case are
                  analogous to the ones in case~(1).
                  
		\item[(3)] Suppose $z_0$ is matched in $M$ to a level~2 neighbor, say $q^c_0$, for some clause~$c$.

                  It follows from Lemmas~\ref{second-lemma}, \ref{lem:better-than-d}, and \ref{lem:no-pop-general} that $(p^c_0,q^c_2), (p^c_2,q^c_1)$, and $(p^c_1,z_1)$ are in $M$.
                  Consider the alternating path 
                  $(z_0,q^c_0) - (p^c_2, q^c_1) - (p^c_1,z_1)$: it has two blocking edges
                  $(p^c_2,q^c_0)$ and $(p^c_1,q^c_1)$. This is again a contradiction to $M$'s popularity.
		\end{itemize}	
		
        Recall that  Lemma~\ref{second-lemma} showed that all vertices of the $D$-gadget must be matched within the gadget. Thus $z_0$ cannot be matched to a vertex in the
        $D$-gadget. The case where  $z_0$ is matched in $M$ to a level~3 neighbor does not arise as such an edge would violate Lemma~\ref{lem:better-than-d}.
        This finishes our proof that any popular matching $M$ matches the 6 vertices of the $Z$-gadget among themselves. \qed
\end{proof}

\new{Our next lemma shows that the $Z$-gadget has a unique popular matching. The fact that the edge $(z_0,z_1)$ has to be in any popular matching $M$ will be used repeatedly in Section~\ref{sec:states}.}

\begin{lemma}
	The only popular matching inside the $Z$-gadget is $\left\{ (z_0,z_1), (z_2,z_3), (z_4,z_5) \right\}$.
	\label{lem:z_inside}
\end{lemma}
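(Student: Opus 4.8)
The plan is to combine Lemma~\ref{lem:z_outside} with the forbidden-structure characterization of Theorem~\ref{thm:char-popular}. Let $M$ be any popular matching in $G$. By Lemma~\ref{lem:z_outside} the six vertices $z_0,\dots,z_5$ are matched among themselves, so $N := M \cap (Z\times Z)$ is one of the $15$ perfect matchings of the complete graph on $\{z_0,\dots,z_5\}$. My goal is to show that every $N$ other than $\{(z_0,z_1),(z_2,z_3),(z_4,z_5)\}$ produces, \emph{entirely inside the gadget}, an alternating cycle in $G_M$ carrying a $(+,+)$ edge and no $(-,-)$ edge; by Theorem~\ref{thm:char-popular} this contradicts popularity, leaving the displayed matching as the only survivor.

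Before the case analysis I would record the internal structure. Reading the lists restricted to $\{z_0,\dots,z_5\}$, the first choices split into two disjoint ``top-choice triangles'' $T_A=\{z_0,z_2,z_4\}$ (with $z_2\to z_0\to z_4\to z_2$) and $T_B=\{z_1,z_3,z_5\}$ (with $z_3\to z_1\to z_5\to z_3$), where $u\to v$ means $v$ is $u$'s top choice, and along each triangle the reverse ranks are all $4$. The matching $\{(z_0,z_1),(z_2,z_3),(z_4,z_5)\}$ is exactly the stable matching inside the gadget (a direct check shows it has no internal blocking edge), which is why it is the natural candidate.

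The crucial simplification is symmetry. The maps $\sigma=(z_0\,z_1)(z_2\,z_3)(z_4\,z_5)$ and $\rho=(z_0\,z_4\,z_2)(z_1\,z_5\,z_3)$ are automorphisms of the induced preference order on the gadget (each checked on one list per vertex-orbit). Since an automorphism of the internal preferences carries an internal alternating cycle for $N$, together with its edge labels, to one for the image matching, it suffices to exhibit a forbidden cycle for a single representative of each orbit of non-displayed matchings. Orbit counting under $\langle\sigma,\rho\rangle$ partitions the remaining $14$ matchings into exactly four classes (of sizes $3,2,3,6$), represented by $\{(z_0,z_1),(z_2,z_5),(z_3,z_4)\}$ and $\{(z_0,z_3),(z_2,z_5),(z_1,z_4)\}$, which match $T_A$ to $T_B$, together with $\{(z_0,z_1),(z_2,z_4),(z_3,z_5)\}$ and $\{(z_0,z_2),(z_1,z_4),(z_3,z_5)\}$, which use an internal edge of each triangle.

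For each representative I would display an explicit forbidden cycle. For $\{(z_0,z_1),(z_2,z_5),(z_3,z_4)\}$ the $4$-cycle
\[ (z_5,z_2) \overset{(+,+)}{-} (z_4,z_3) \overset{(+,+)}{-} (z_5,z_2) \]
works, and for $\{(z_0,z_3),(z_2,z_5),(z_1,z_4)\}$ the cycle
\[ (z_3,z_0) \overset{(+,+)}{-} (z_4,z_1) \overset{(-,+)}{-} (z_3,z_0) \]
works. The main obstacle, and the reason the two triangle-internal representatives need more care, is that the shortest cycle through the obvious blocking edge can hit a $(-,-)$ edge, which is absent from $G_M$; one must instead route through the two remaining vertices along a length-$6$ alternating cycle, e.g.\ for $\{(z_0,z_1),(z_2,z_4),(z_3,z_5)\}$,
\[ (z_4,z_2) \overset{(+,+)}{-} (z_3,z_5) \overset{(-,+)}{-} (z_0,z_1) \overset{(+,-)}{-} (z_4,z_2), \]
and analogously for the fourth representative. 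Each such cycle lies inside the gadget, contains a $(+,+)$ edge and no $(-,-)$ edge, so Theorem~\ref{thm:char-popular} excludes the matching; transporting these cycles by $\sigma$ and $\rho$ rules out all $14$ alternatives, proving that $N=\{(z_0,z_1),(z_2,z_3),(z_4,z_5)\}$.
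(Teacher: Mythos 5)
Your proof is correct, and I verified the details: $\sigma$ and $\rho$ are indeed automorphisms of the lists restricted to the gadget, the orbit decomposition $1+3+2+3+6$ of the fifteen internal perfect matchings is right, and the labels in all three displayed cycles check out edge by edge; moreover, since Lemma~\ref{lem:z_outside} guarantees all six vertices are matched internally, the label of every intra-gadget edge depends only on the internal lists, so transporting forbidden cycles by $\sigma$ and $\rho$ is legitimate. Your route differs from the paper's in its organization. The paper splits by edge type rather than by orbit: it first supposes $M$ contains a triangle edge $(z_i,z_j)$ with $i = j \bmod 2$ (WLOG $(z_0,z_2)$) and kills the three completions by explicit alternating cycles; this forces $M \subset \{z_0,z_2,z_4\}\times\{z_1,z_3,z_5\}$, and matchings containing an unstable cross edge such as $(z_0,z_3)$ are then eliminated not via Theorem~\ref{thm:char-popular} but by a direct vote count (e.g., if $(z_0,z_3),(z_1,z_2)\in M$, then $z_0,z_1,z_2,z_3$ all prefer $\{(z_0,z_2),(z_1,z_3)\}$, so $M$ loses that election). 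Its symmetry reductions are informal (``without loss of generality'', ``analogous proofs hold''), whereas you make them rigorous by exhibiting the automorphism group explicitly; that costs some up-front bookkeeping but makes every ``WLOG'' auditable and keeps all eliminations uniform, using only condition~(1) of Theorem~\ref{thm:char-popular}. The one loose end is the fourth representative $\{(z_0,z_2),(z_1,z_4),(z_3,z_5)\}$, which you dismiss with ``analogously''; it is not literally analogous, because there the blocking edge $(z_0,z_4)$ joins the $T_A$-internal matched edge to the cross edge rather than joining the two triangle-internal edges, so the routing changes. The cycle that works is
\[ (z_2,z_0) \overset{(+,+)}{-} (z_4,z_1) \overset{(+,-)}{-} (z_5,z_3) \overset{(+,-)}{-} (z_2,z_0), \]
which carries the blocking edge $(z_0,z_4)$ and avoids the two $(-,-)$ edges $(z_1,z_2)$ and $(z_2,z_5)$ that obstruct the shorter routes. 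With that display added, your proof is complete.
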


\begin{proof}
  The matching $\{(z_0,z_1), (z_2,z_3), (z_4,z_5)\}$ is stable in the $Z$-gadget, thus this is a popular matching. Note that this gadget has no other stable matching.

  Let $M$ be any popular matching. We know from Lemma~\ref{lem:z_outside} that $M$ matches the 6 vertices of the $Z$-gadget among themselves. Suppose $M$ contains one or more of the edges $(z_i,z_j)$
  where $i = j \bmod 2$  (colored black in Fig.~\ref{fig:z}).
This means one of the edges $(z_0,z_2), (z_2,z_4),(z_0,z_4)$ is in $M$.
Let $(z_0,z_2) \in M$. There are three candidate matchings that we need to check for popularity: note
that none is popular (by Theorem~\ref{thm:char-popular}).
	\begin{itemize}
		\item $\left\{ (z_0,z_2), (z_1,z_3), (z_4,z_5) \right\}$: this has the alternating cycle $(z_2,z_0) \overset{(+,+)}{-} (z_1,z_3) \overset{(-,+)}{-} (z_5,z_4) \overset{(+,-)}{-} (z_2,z_0)$ with the blocking edge $(z_0,z_1)$.
		\item $\left\{ (z_0,z_2), (z_1,z_4), (z_3,z_5) \right\}$: this has the alternating cycle $(z_0, z_2) \overset{(-,+)}{-} (z_3, z_5) \overset{(-,+)}{-} (z_1, z_4) \overset{(+,+)}{-} (z_0, z_2)$ with the blocking edge $(z_0,z_4)$.
		\item $\left\{ (z_0,z_2), (z_1,z_5), (z_3,z_4) \right\}$: this has the alternating cycle $(z_2,z_0) \overset{(+,-)}{-} (z_1,z_5) \overset{(+,+)}{-} (z_3, z_4) \overset{(+,-)}{-} (z_2, z_0)$ with the blocking edge $(z_3,z_5)$.
	\end{itemize}

        So $(z_0,z_2)$ does not belong to $M$.
        We now claim that neither $(z_2,z_4)$ nor $(z_4,z_0)$ also belongs to~$M$.
        This is because when confined to edges within the $Z$-gadget, the $Z$-gadget is symmetric with respect to $z_0,z_2,z_4$
        (similarly, wrt $z_1,z_3,z_5$). Thus the same analysis as shown above for $(z_0,z_2)$ holds for $(z_2,z_4)$ and $(z_4,z_0)$ as well
        by replacing every subscript $s$ with $(s+2)\bmod 6$ for the former edge and with $(s+4)\bmod 6$ for the latter edge.
        Hence we can conclude that
        $M \subset \left\{ z_0, z_2, z_4 \right\}\times\left\{ z_1, z_3, z_5 \right\}$.

Suppose $M$ contains an unstable edge here (dotted and gray in Fig.~\ref{fig:z}), say $(z_0,z_3)$: among the vertices in the
        $Z$-gadget, $z_3$ is the last choice of $z_0$ and the edge $(z_0,z_2)$ blocks $M$. Since $z_2$ has to be matched in $M$, there are two cases.
\begin{itemize}
\item $(z_1, z_2) \in M$: the 4 vertices $z_0,z_1,z_2,z_3$ prefer $\{(z_0,z_2),(z_1,z_3)\}$ to $\{(z_0,z_3),(z_1,z_2)\} \subset M$.
\item $(z_2, z_5 )\in M$: the 4 vertices $z_0,z_2,z_3,z_5$ prefer $\{(z_0,z_2),(z_3,z_5)\}$ to $\{(z_0,z_3),(z_2,z_5)\} \subset M$.
\end{itemize}
Thus in both cases we have a contradiction to $M$'s popularity.
Analogous proofs hold for other  unstable edges  chosen from $\{z_0,z_2,z_4\}\times\{z_1,z_3,z_5\}$. Thus the only popular matching inside the $Z$-gadget is
$\left\{ (z_0,z_1), (z_2,z_3), (z_4,z_5) \right\}$. \qed
\end{proof}

\section{Stable states versus unstable states}
\label{sec:states}
In this section we will show how to obtain a 1-in-3 satisfying truth assignment for the input $B$ from any popular matching in $G$.
The following definition will be useful to us.

\begin{definition}
A gadget $A$ in $G = (V,E)$ is said to be in {\em unstable state} with respect to matching $M$ if there is a blocking edge
$(u,v) \in V(A) \times V(A)$ with respect to $M$. If there is no such blocking edge to $M$ then we say $A$ is in {\em stable state} with respect to $M$.
\end{definition}

In Figures~\ref{fig:1}-\ref{fig:z} depicting our gadgets, corresponding to matchings that consist of colored edges within the gadget,
the relevant gadget is in {\em stable} state. A level~1 gadget in unstable state will encode the 
corresponding variable being set to true while a level~1 gadget in stable state will encode the corresponding variable being set to false.
We will now analyze what gadgets are in stable/unstable state with respect to any popular matching  $M$ in $G$. This will lead to the proof
that for any clause $c$, exactly one of the level~1 gadgets corresponding to the three variables in $c$ is in unstable state. 

Lemma~\ref{lemma1} takes the starting step in this proof. Here $M$ is any popular matching in $G$.

\begin{lemma}
\label{lemma1}
For any clause $c$, the following statements hold:
\begin{itemize}
   \item all its 6 level~0 gadgets are in stable state with respect to $M$;
   \item both its level~3 gadgets in $G$ are in unstable state with respect to $M$.
\end{itemize}  
\end{lemma}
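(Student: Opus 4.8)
The plan is to reduce everything to the single structural fact that Section~\ref{sec:pop-edges} is built to establish: in any popular matching $M$, every popular edge is intra-gadget, so each vertex is matched to a partner inside its own gadget. Granting this, both halves of the lemma become short local arguments inside one clause gadget at a time, and I would handle the level~0 and the level~3 gadgets separately. Recall also that $M$ is a perfect matching, so every vertex of the gadget under consideration is matched.

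For the level~0 gadgets I would first observe that a gadget such as the one in Fig.~\ref{fig:0} is matched by $M$ as a perfect matching on its four vertices $a^c_1,a^c_2,b^c_1,b^c_2$. There are exactly three perfect matchings on these four vertices. The matching $\{(a^c_1,a^c_2),(b^c_1,b^c_2)\}$ is impossible, since in each of these two edges both endpoints rank the other below $d_0$, so by Lemma~\ref{lem:better-than-d} neither edge is popular. This leaves only the two configurations $\{(a^c_1,b^c_1),(a^c_2,b^c_2)\}$ and $\{(a^c_1,b^c_2),(a^c_2,b^c_1)\}$, and a direct check of the four preference lists shows that in either configuration every non-matching intra-gadget edge is labelled $(+,-)$ or $(-,-)$; hence there is no intra-gadget blocking edge and the gadget is in stable state. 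The same argument applies verbatim to all six level~0 gadgets of $c$, since they differ only in their inter-gadget second choices, which Claim~\ref{lem:no-pop-0-1} has already excluded from $M$.

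For the level~3 gadgets I would argue that the gadget is forced into unstable state by its vertex $s^c_0$. By Lemma~\ref{first-lemma}, $M$ matches $s^c_0$ to a neighbor at least as good as $t^c_0$, and by Corollary~\ref{cor1} the edge $(s^c_0,t^c_0)$ is not popular; hence $s^c_0$ is matched to a neighbor it ranks strictly above $t^c_0$. Reading its list $t^c_1 > q^c_0 > t^c_2 > q^c_3 > t^c_3 > D > t^c_0 > \ldots$, the only such neighbors lying inside the gadget are $t^c_1,t^c_2,t^c_3$, because the level~2 vertices $q^c_0,q^c_3$ are excluded by Claim~\ref{lem:no-pop-2-3} and the $D$-vertices by Lemma~\ref{second-lemma}. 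Say $(s^c_0,t^c_i)\in M$ with $i\in\{1,2,3\}$. Then $t^c_i$ is matched to $s^c_0$ rather than to $s^c_i$; since $t^c_i$ is the top choice of $s^c_i$ and $t^c_i$ itself prefers $s^c_i$ to $s^c_0$, the edge $(s^c_i,t^c_i)$ is a blocking edge inside the gadget. Thus the gadget is in unstable state, and the identical argument applies to the second (primed) level~3 gadget of $c$.

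The main obstacle is the first step: justifying that every vertex is matched inside its own gadget, so that each level~0 and level~3 gadget can be analysed in isolation. This is exactly where the full combined force of Lemmas~\ref{second-lemma}, \ref{lem:better-than-d}, \ref{lem:no-pop-general} and \ref{lem:z_outside} is required, ruling out partners in adjacent levels, in the $D$- and $Z$-gadgets, and in the arbitrary tails of the lists (a tail partner would give an edge on which both endpoints prefer $d_0$, handled by Lemma~\ref{lem:better-than-d}, unless the other endpoint ranks our vertex highly, in which case it is an adjacent-level or $D/Z$ edge already excluded). Once this reduction is in hand the two local case analyses are routine; the only care needed is to track the shifted second choices across the six level~0 gadgets and to verify directly that neither admissible level~0 configuration contains an internal blocking edge.
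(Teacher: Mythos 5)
Your proof is correct and takes essentially the same route as the paper's: both first use the Section~\ref{sec:pop-edges} lemmas (Lemmas~\ref{second-lemma}, \ref{lem:better-than-d}, \ref{lem:no-pop-general}, \ref{lem:z_outside}) to force every popular matching to be intra-gadget, and then argue locally that a level~0 gadget must carry one of its two stable matchings while a level~3 gadget is forced to contain the blocking edge $(s^c_i,t^c_i)$. The only (harmless) differences are cosmetic: at level~3 you reach $(s^c_0,t^c_i)\in M$ via Lemma~\ref{first-lemma} and Corollary~\ref{cor1} and note the blocking edge directly, whereas the paper gets there via Lemma~\ref{lem:better-than-d} and additionally pins down $(s^c_i,t^c_0)\in M$, a step your argument correctly shows is unnecessary.
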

\begin{proof}
  Consider a level~0 gadget corresponding to clause $c$, say the one on vertices $a^c_1,b^c_1,a^c_2,b^c_2$.
  Lemmas~\ref{second-lemma}, \ref{lem:better-than-d}, \ref{lem:no-pop-general}, and \ref{lem:z_outside} imply
  that either $\{(a^c_1,b^c_1),(a^c_2,b^c_2)\} \subset M$ or 
  $\{(a^c_1,b^c_2),(a^c_2,b^c_1)\} \subset M$. Thus there is no blocking edge within this gadget. 
  As this holds for every level~0 gadget corresponding to $c$ and for every clause $c$, the first part of the lemma follows. 
  
  We will now prove the second part of the lemma.
  Since $M$ is a perfect matching, the vertices $s^c_0,t^c_0$ (also $s'^c_0,t'^c_0$) have to be matched in $M$, for all clauses $c$. It follows from
  Lemmas~\ref{second-lemma} and \ref{lem:better-than-d} that both $s^c_0$ and $t^c_0$ (similarly, $s'^c_0$ and $t'^c_0$) have to be matched to neighbors that are better than $d_0$.
  Lemma~\ref{lem:no-pop-general} showed that there is no popular edge between a level~3 vertex and a level~2 vertex. Thus
  $s^c_0$ is matched to $t^c_i$ for some $i \in \{1,2,3\}$.

  If $s^c_0$ is matched to $t^c_i$ then $s^c_i$ has to be matched to $t^c_0$---otherwise
  Lemma~\ref{lem:better-than-d} would be violated by $s^c_i$ and its partner.
  So $(s^c_i,t^c_i)$ blocks $M$ and this holds for every clause~$c$.
  Similarly, there is a blocking edge $(s'^c_j,t'^c_j)$ for some $j \in \{1,2,3\}$ for every clause~$c$. \qed
\end{proof}

Lemma~\ref{lemma2} is our main technical lemma. This will be used in Lemma~\ref{lemma3} to show that at least one of the level~1 gadgets corresponding to the three variables in clause~$c$ is in unstable state.

\begin{lemma}
\label{lemma2}
For any clause $c$, at least {\em one} of the following two conditions has to hold:
\begin{itemize} 
\item two or more of its first three level~2 gadgets are in unstable state with respect to~$M$;
\item two or more of its last three level~2 gadgets are in unstable state with respect to~$M$.
\end{itemize}
\end{lemma}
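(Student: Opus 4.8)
The plan is to argue by contradiction, assuming that at most one of the first three and at most one of the last three level~2 gadgets of $c$ is in unstable state; equivalently, that at least two gadgets in each of the two triples are in stable state. By Lemma~\ref{lemma1} both level~3 gadgets of $c$ are unstable, so (exactly as in the proof of Lemma~\ref{lemma1}) there are indices $a,b\in\{1,2,3\}$ with $s^c_0$ matched to $t^c_a$, $s^c_a$ matched to $t^c_0$, $s'^c_0$ matched to $t'^c_b$, $s'^c_b$ matched to $t'^c_0$, and blocking edges $(s^c_a,t^c_a)$ and $(s'^c_b,t'^c_b)$. These two $(+,+)$ edges are what I will try to connect inside $G_M$.

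First I would record the only three ways a level~2 gadget can be matched within itself (forced by Lemmas~\ref{second-lemma}, \ref{lem:better-than-d}, \ref{lem:no-pop-general} and \ref{lem:z_outside}): the ``diagonal'' matching, which is its stable state, and two others, each being an unstable state carrying one internal $(+,+)$ edge. The structural fact I would then establish is a conduct/absorb dichotomy for alternating paths in $G_M$. Starting from a level~3 blocking edge, the two incident matching edges can be extended through the second-choice ports of $s^c_0$ and $t^c_0$ (namely $q^c_0,q^c_3$ and $p^c_7,p^c_4$); such a path runs through a level~2 gadget and emerges at one of its variable ports ($y_\bullet$ or $x_\bullet$) precisely when that gadget is in stable state, whereas an unstable gadget makes the path dead-end at a vertex matched to its top choice. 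The same dichotomy holds one level up: a stable variable gadget conducts a path arriving at $y_m$ across to $x_m$ and vice versa, while an unstable variable gadget absorbs it.

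With this dictionary I would run a short case analysis on $a$ (and symmetrically on $b$). For each value of $a$ the block $(s^c_a,t^c_a)$ feeds alternating paths into exactly two of the first-triple gadgets; since at least two of the three are stable, at least one fed path conducts out of the first triple to a variable top vertex that is shared with the last triple (recall the last-triple gadgets reference the same variables). Using that at least two of the last-triple gadgets are stable as well, I would continue the path into the last triple and reach the second block $(s'^c_b,t'^c_b)$, producing an alternating path carrying two distinct $(+,+)$ edges; in the cases where the two feeds target a common variable (for instance $a=2$, where $q^c_0$ reaches $y_j$ through the first gadget and $p^c_7$ reaches $x_j$ through the third, both via $X_j$) the route closes into an alternating cycle through a single block instead. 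Either configuration is forbidden by Theorem~\ref{thm:char-popular}, contradicting the popularity of $M$, so at least one triple must contain two unstable gadgets.

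The hard part will be the bookkeeping when a conduction is absorbed by an unstable \emph{variable} gadget rather than by a level~2 gadget: such a variable carries its own $(+,+)$ edge, which in turn conducts downward through $x'_m,y'_m$ into the always-stable level~0 gadgets. One must therefore verify that absorbing all the paths emanating from the two blocks while keeping at most one unstable gadget in each triple still forces a second $(+,+)$ edge onto some alternating path — routed through a variable block and a level~0 gadget — so that no assignment of gadget states escapes the contradiction. Checking this port by port (confirming that the diagonal matching of a level~2 gadget genuinely conducts between the intended ports while each of its two unstable matchings breaks exactly the right connection), across all values of $a$ and $b$, is where the bulk of the verification lies.
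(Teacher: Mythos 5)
Your setup agrees with the paper's: both level~3 gadgets of $c$ carry blocks $(s^c_a,t^c_a)$, $(s'^c_b,t'^c_b)$, each block feeds into exactly two of its three level~2 gadgets through the ports of $s^c_0,t^c_0$, and a stable level~2 gadget conducts while an unstable one absorbs. But your crossover from the first triple to the last triple is routed through the level~1 (variable) gadgets, and this is where the proof breaks: a variable gadget conducts (via its matching edge $(y_m,x_m)$) only when it is \emph{stable}, and the hypothesis of the lemma puts no constraint whatsoever on the states of the variable gadgets. Concretely, take $a=b=2$. Then, as you yourself note, both feeds of the first block exit at ports of the \emph{same} variable $X_j$ ($q^c_0$ conducts to $y_j$, $p^c_7$ to $x_j$), and by the analogous computation both feeds of the second block also exit at $y_j$ and $x_j$. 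If the $X_j$ gadget is unstable, every one of your routes dies: a path arriving at $y_j$ is forced along the matching edge $(y_j,x'_j)$, and $x'_j$, being matched to its top choice, has outgoing $G_M$-edges only to level~0 vertices; your alternating-cycle variant is equally unavailable since it needs $(y_j,x_j)\in M$. Your proposed fallback through level~0 cannot close this hole: by Lemma~\ref{lemma1} level~0 gadgets contain no blocking edge, and they are pure absorbers --- the bottom vertices $x'_m,y'_m$ rank all level~0 vertices in their tails, so a path entering such a gadget (say $x'_j - b^c_5 - a^c_5 - b^c_6 - a^c_6$) dead-ends inside it and can neither pick up a second $(+,+)$ edge nor re-emerge at another variable or back at level~2; indeed, if the adjacent level~0 gadgets are matched along their top-choice edges, the path cannot even leave $x'_j$. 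Hence the configuration ``gadgets $1,3,4,6$ stable, gadgets $2,5$ unstable, $X_j$ unstable'' satisfies your standing assumption yet defeats every route in your plan, even though such a matching is in fact not popular.

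The missing idea is the $Z$-gadget, which your proposal never mentions and which the paper introduced precisely for this lemma. Since $z_0$ and $z_1$ both rank every vertex of $\cup_{c,i}\{p^c_{3i+1},q^c_{3i},p'^c_{3i+1},q'^c_{3i}\}$ above one another, a stable level~2 gadget always conducts from its level~3 port to its $Z$-port (e.g.\ $q^c_0 = p^c_0 - q^c_2 = p^c_2 - q^c_1 = p^c_1 - z_0$ in the stable first gadget), and the matching edge $(z_0,z_1)$, forced by Lemma~\ref{lem:z_inside}, bridges a stable gadget of the first triple to a stable gadget of the last triple \emph{independently of all level~1 states}. Concatenating block~$1$ -- stable fed gadget -- $(z_0,z_1)$ -- stable fed primed gadget -- block~$2$ gives the path $\rho$ with two $(+,+)$ edges that contradicts Theorem~\ref{thm:char-popular} in every case; this robustness to the variable gadgets is exactly what your variable-port routing lacks, so the gap is structural rather than a matter of unfinished bookkeeping.
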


\begin{proof}
Suppose both statements are false. Let $M$ be a popular matching and $c$ be a clause such that corresponding to clause $c$, at least two among its {\em first}
three level~2 gadgets are in stable state with respect to $M$ {\em and} at least two among its {\em last} three level~2 gadgets are in stable state with respect 
to $M$.

Consider the two level~3 gadgets corresponding to $c$. We know that $(s^c_0,t^c_i), (s^c_i,t^c_0)$ are in $M$ for some $i \in \{1,2,3\}$ and
similarly, $(s'^c_0,t'^c_j), (s'^c_j,t'^c_0)$ are in $M$ for some $j \in \{1,2,3\}$ (see the proof of Lemma~\ref{lemma1}). We will now show the existence of an alternating path $\rho$ that will contradict $M$'s popularity.

For this, we claim it suffices to show that one gadget in each of the following two sets of gadgets is in {\em stable state}:
\begin{itemize}
\item the {\em first} three level~2 gadgets with a vertex that either $s^c_0$ or $t^c_0$ prefers to its partner in $M$;
\item the {\em last} three level~2 gadgets with a vertex that either $s'^c_0$ or $t'^c_0$ prefers to its partner in $M$.
\end{itemize}

For instance, suppose $i = 1$ and $j = 2$.
So $t^c_0$ prefers $p^c_4$ and $p^c_7$ to its partner $s^c_1$ in $M$ and $s'^c_0$ prefers $q'^c_0$ to its partner $t'^c_2$ in $M$ 
and $t'^c_0$ prefers $p'^c_7$ to its partner $s'^c_2$ in $M$. Consider the level~2 gadgets containing $p^c_4, p^c_7, q'^c_0$, and $p'^c_7$. 

Observe that
by our assumption in the first paragraph above, either the gadget of $p^c_4$ or the gadget of $p^c_7$ is in stable state, similarly either the gadget of $q'^c_0$
or the gadget of $p'^c_7$ is in stable state. In all 4 cases, we will show the existence of an
alternating path $\rho$ in $G_M$ with {\em two} blocking edges $(s^c_1,t^c_1)$ and $(s'^c_2,t'^c_2)$,
which is a contradiction to $M$'s popularity (by Theorem~\ref{thm:char-popular}).

\begin{enumerate}
\item Suppose the gadgets of $p^c_4$ and $q'^c_0$ are in stable state. So the edges $(p^c_{\ell},q^c_{\ell}) \in M$ for $\ell=3,4,5$
and the edges $(p'^c_h,q'^c_h) \in M$ for $h=0,1,2$. Consider the following
alternating path $\rho$ with respect to~$M$: 
\begin{gather*}
(s^c_0,t^c_1) \overset{(+,+)}{-}
(s^c_1,t^c_0) \overset{(+,-)}{-}
(p^c_4,q^c_4) \overset{(-,+)}{-} 
(p^c_5,q^c_5) \overset{(+,-)}{-} 
(p^c_3,q^c_3) \overset{(-,+)}{-}\\
(z_0,z_1) \overset{(+,-)}{-} 
(p'^c_1,q'^c_1) \overset{(-,+)}{-} 
(p'^c_2,q'^c_2) \overset{(+,-)}{-} 
(p'^c_0,q'^c_0) \overset{(-,+)}{-} 
(s'^c_0,t'^c_2) \overset{(+,+)}{-}
(s'^c_2,t'^c_0).
\end{gather*}

We know that $(z_0,z_1) \in M$ (by Lemma~\ref{lem:z_inside}).
Note that $\rho$ is the desired alternating path in $G_M$ with two blocking edges $(s^c_1,t^c_1)$ and $(s'^c_2,t'^c_2)$.

\item Suppose the gadgets of $p^c_4$ and $p'^c_7$ are in stable state. So the edges $(p^c_{\ell},q^c_{\ell}) \in M$ for $\ell=3,4,5$
and the edges $(p'^c_h,q'^c_h) \in M$ for $h=6,7,8$. Consider the following
alternating path $\rho$ with respect to~$M$: 
\begin{gather*}
(s^c_0,t^c_1) \overset{(+,+)}{-}
(s^c_1,t^c_0) \overset{(+,-)}{-}
(p^c_4,q^c_4) \overset{(-,+)}{-} 
(p^c_5,q^c_5) \overset{(+,-)}{-} 
(p^c_3,q^c_3) \overset{(-,+)}{-}\\
(z_0,z_1) \overset{(+,-)}{-} 
(q'^c_6,p'^c_6) \overset{(-,+)}{-} 
(q'^c_8,p'^c_8) \overset{(+,-)}{-} 
(q'^c_7,p'^c_7) \overset{(-,+)}{-} 
(t'^c_0,s'^c_2) \overset{(+,+)}{-}
(t'^c_2,s'^c_0).
\end{gather*}

Observe that the labels on edges of $\rho\setminus M$ are identical to the first case and
thus $\rho$ is the desired alternating path in $G_M$ with two blocking edges $(s^c_1,t^c_1)$ and 
$(s'^c_2,t'^c_2)$.

\item Suppose the gadgets of $p^c_7$ and $q'^c_0$ are in stable state. So the edges $(p^c_{\ell},q^c_{\ell}) \in M$ for $\ell=6,7,8$ and the edges $(p'^c_h,q'^c_h) \in M$ for $h=0,1,2$. Consider the following
alternating path $\rho$ with respect to~$M$: 
\begin{gather*}
(s^c_0,t^c_1) \overset{(+,+)}{-}
(s^c_1,t^c_0) \overset{(+,-)}{-}
(p^c_7,q^c_7) \overset{(-,+)}{-} 
(p^c_8,q^c_8) \overset{(+,-)}{-} 
(p^c_6,q^c_6) \overset{(-,+)}{-}\\
(z_0,z_1) \overset{(+,-)}{-} 
(p'^c_1,q'^c_1) \overset{(-,+)}{-} 
(p'^c_2,q'^c_2) \overset{(+,-)}{-} 
(p'^c_0,q'^c_0) \overset{(-,+)}{-} 
(s'^c_0,t'^c_2) \overset{(+,+)}{-}
(s'^c_2,t'^c_0).
\end{gather*}

Again, observe that the labels on edges of $\rho\setminus M$ are identical to the first two cases and
$\rho$ is the desired alternating path with two blocking edges $(s^c_1,t^c_1)$ and $(s'^c_2,t'^c_2)$.

\item Suppose the gadgets of $p^c_7$ and $p'^c_7$ are in stable state. So the edges $(p^c_{\ell},q^c_{\ell}) \in M$ for $\ell=6,7,8$ and the edges $(p'^c_h,q'^c_h) \in M$ for $h=6,7,8$. Consider the following
alternating path $\rho$ with respect to~$M$: 
\begin{gather*}
(s^c_0,t^c_1) \overset{(+,+)}{-}
(s^c_1,t^c_0) \overset{(+,-)}{-}
(p^c_7,q^c_7) \overset{(-,+)}{-} 
(p^c_8,q^c_8) \overset{(+,-)}{-} 
(p^c_6,q^c_6) \overset{(-,+)}{-}\\
(z_0,z_1) \overset{(+,-)}{-} 
(q'^c_6,p'^c_6) \overset{(-,+)}{-} 
(q'^c_8,p'^c_8) \overset{(+,-)}{-} 
(q'^c_7,p'^c_7) \overset{(-,+)}{-} 
(t'^c_0,s'^c_2) \overset{(+,+)}{-}
(t'^c_2,s'^c_0).
\end{gather*}

\end{enumerate}

As before, the labels on edges of $\rho\setminus M$ are identical to the above three 
cases and $\rho$ is the desired alternating path with two blocking edges $(s^c_1,t^c_1)$ and $(s'^c_2,t'^c_2)$.

\smallskip

For any $(i,j) \in \{1,2,3\}\times\{1,2,3\}$, an analogous construction can be shown.
\begin{itemize}
\item Let $i = j = 1$. So $t^c_0$ prefers $p^c_4$ and $p^c_7$ to its partner $s^c_1$ in $M$ and 
$t'^c_0$ prefers $p'^c_4$ and  $p'^c_7$ to its partner $s'^c_1$ in $M$. 
We know that either the gadget of $p^c_4$ or the gadget of $p^c_7$ is in stable state, and 
similarly, either the gadget of $p'^c_4$ or the gadget of $p'^c_7$ is in stable state.

Suppose the gadgets of $p^c_4$ and $p'^c_4$ are in stable state. So the edges $(p^c_{\ell},q^c_{\ell}) \in M$ 
for $\ell=3,4,5$ and the edges $(p'^c_h,q'^c_h) \in M$ for $h=3,4,5$. Consider the following
alternating path $\rho$ with respect to~$M$: 
\begin{gather*}
(s^c_0,t^c_1) \overset{(+,+)}{-}
(s^c_1,t^c_0) \overset{(+,-)}{-}
(p^c_4,q^c_4) \overset{(-,+)}{-} 
(p^c_5,q^c_5) \overset{(+,-)}{-} 
(p^c_3,q^c_3) \overset{(-,+)}{-}\\
(z_0,z_1) \overset{(+,-)}{-} 
(q'^c_3,p'^c_3) \overset{(-,+)}{-} 
(q'^c_5,p'^c_5) \overset{(+,-)}{-} 
(q'^c_4,p'^c_4) \overset{(-,+)}{-} 
(t'^c_0,s'^c_1) \overset{(+,+)}{-}
(t'^c_1,s'^c_0).
\end{gather*}

Observe again that the labels on edges of $\rho\setminus M$ are identical to the labels
obtained for the desired alternating paths when $i = 1$ and $j = 2$. The path $\rho$ is the desired 
alternating path in $G_M$ with two blocking edges $(s^c_1,t^c_1)$ and $(s'^c_1,t'^c_1)$.

The case when the gadgets of $p^c_7$ and $p'^c_7$ are in stable state was already seen in Case~4 
of  $i = 1$ and $j = 2$. The only difference between the path that we will construct now with the path $\rho$ 
seen there is in the last two edges: now we will have $(t'^c_0,s'^c_1)$ and $(t'^c_1,s'^c_0)$ in $M$; thus the 
blocking edges to our path will be  $(s^c_1,t^c_1)$ and $(s'^c_1,t'^c_1)$.

The proofs for the remaining two cases: (i)~when the gadgets of $p^c_4$ and 
$p'^c_7$ are in stable state and (ii)~when the gadgets of $p^c_7$ and 
$p'^c_4$ are in stable state are totally analogous to the above case. 
Thus in all 4 cases, we can show the existence of an alternating path $\rho$ in $G_M$ with {\em two}
blocking edges $(s^c_1,t^c_1)$ and $(s'^c_1,t'^c_1)$: a contradiction to $M$'s popularity. 

\item Let $i = 1$ and $j = 3$. So $t^c_0$ prefers $p^c_4$ and $p^c_7$ to its partner $s^c_1$ in $M$ 
and $s'^c_0$ prefers $q'^c_0$ and  $q'^c_3$ to its partner $t'^c_3$ in $M$. 
We know that either the gadget of $p^c_4$ or the gadget of $p^c_7$ is in stable state, and 
similarly, either the gadget of $q'^c_0$ or the gadget of $q'^c_3$ is in stable state. 

The cases when the gadgets of $p^c_4$ and $q'^c_0$ are in stable state 
and when the gadgets of $p^c_7$ and $q'^c_0$ are in stable state
were already seen in Case~1 and Case~3 of
$i = 1$ and $j = 2$: thus we can construct analogous alternating paths in these cases. 
So let us consider the case when the gadgets of $p^c_7$ and $q'^c_3$ are in stable state.

So the edges $(p^c_{\ell},q^c_{\ell}) \in M$ for $\ell=6,7,8$ and the edges $(p'^c_h,q'^c_h) \in M$ for $h=3,4,5$.
Consider the following alternating path $\rho$ with respect to~$M$: 
\begin{gather*}
(s^c_0,t^c_1) \overset{(+,+)}{-}
(s^c_1,t^c_0) \overset{(+,-)}{-}
(p^c_7,q^c_7) \overset{(-,+)}{-} 
(p^c_8,q^c_8) \overset{(+,-)}{-} 
(p^c_6,q^c_6) \overset{(-,+)}{-}\\
(z_0,z_1) \overset{(+,-)}{-} 
(p'^c_4,q'^c_4) \overset{(-,+)}{-} 
(p'^c_5,q'^c_5) \overset{(+,-)}{-} 
(p'^c_3,q'^c_3) \overset{(-,+)}{-} 
(s'^c_0,t'^c_3) \overset{(+,+)}{-}
(s'^c_3,t'^c_0).
\end{gather*}

The above path $\rho$ is the desired alternating path in $G_M$
with two blocking edges $(s^c_1,t^c_1)$ and $(s'^c_3,t'^c_3)$.

The remaining case, i.e., when the gadgets of $p^c_4$ and $q'^c_3$ are in stable state,
is totally analogous to the above case and we can again show an alternating path in $G_M$ with two blocking edges $(s^c_1,t^c_1)$ and $(s'^c_3,t'^c_3)$.

\item Let $i = 2$ and $j = 1$. This is a ``mirror image'' of the very first case considered:
when $i = 1$ and $j  =2$. The only difference is that we swap {\em primed variables} and
{\em unprimed variables} in $\rho$. For example, when the gadgets of $q^c_0$ and $p'^c_4$ are 
in stable state, the desired alternating path is exactly the same as $\rho$ in Case~1 there, except
for this swapping of roles. Thus, $\rho$ with blocking edges $(s^c_2,t^c_2)$ and $(s'^c_1,t'^c_1)$, would be:
\begin{gather*}
(t^c_0,s^c_2) \overset{(+,+)}{-}
(t^c_2,s^c_0) \overset{(+,-)}{-}
(q^c_0,p^c_0) \overset{(-,+)}{-} 
(q^c_2,p^c_2) \overset{(+,-)}{-} 
(q^c_1,p^c_1) \overset{(-,+)}{-}\\
(z_0,z_1) \overset{(+,-)}{-} 
(q'^c_3,p'^c_3) \overset{(-,+)}{-} 
(q'^c_5,p'^c_5) \overset{(+,-)}{-} 
(q'^c_4,p'^c_4) \overset{(-,+)}{-} 
(t'^c_0,s'^c_1) \overset{(+,+)}{-}
(t'^c_1,s'^c_0).
\end{gather*}

\item Let $i = j = 2$. So $s^c_0$ prefers $q^c_0$ to its partner $t^c_2$ in $M$ and
and $t^c_0$ prefers $p^c_7$ to its partner $s^c_2$ in $M$ and 
$s'^c_0$ prefers $q'^c_0$ to its partner $t'^c_2$ in $M$ and
$t'^c_0$ prefers $p'^c_7$ to its partner $s'^c_1$ in $M$. 
We know that either the gadget of $q^c_0$ or the gadget of $p^c_7$ is in stable state, and 
similarly, either the gadget of $q'^c_0$ or the gadget of $p'^c_7$ is in stable state. 

The cases when the gadgets of $p^c_7$ and $q'^c_0$ are in stable state and when the gadgets of 
$p^c_7$ and $p'^c_7$ are in stable state were already seen in Cases~3 and 4 of $i = 1$ and $j = 2$. 
Let us consider the case when the gadgets of $q^c_0$ and $q'^c_0$ are in 
stable state.

So the edges $(p^c_{\ell},q^c_{\ell}) \in M$ for $\ell=0,1,2$ and the edges $(p'^c_h,q'^c_h) \in M$ for $h=0,1,2$. 
Consider the following alternating path $\rho$ with respect to~$M$: 
\begin{gather*}
(t^c_0,s^c_2) \overset{(+,+)}{-}
(t^c_2,s^c_0) \overset{(+,-)}{-}
(q^c_0,p^c_0) \overset{(-,+)}{-} 
(q^c_2,p^c_2) \overset{(+,-)}{-} 
(q^c_1,p^c_1) \overset{(-,+)}{-}\\
(z_0,z_1) \overset{(+,-)}{-} 
(p'^c_1,q'^c_1) \overset{(-,+)}{-} 
(p'^c_2,q'^c_2) \overset{(+,-)}{-} 
(p'^c_0,q'^c_0) \overset{(-,+)}{-} 
(s'^c_0,t'^c_2) \overset{(+,+)}{-}
(s'^c_2,t'^c_0).
\end{gather*}

The path $\rho$ is the desired 
alternating path in $G_M$ with two blocking edges $(s^c_2,t^c_2)$ and $(s'^c_2,t'^c_2)$.
The proof for the remaining case is totally analogous.
Thus in all 4 cases, we can show the existence of an alternating path $\rho$ in $G_M$ with {\em two}
blocking edges $(s^c_2,t^c_2)$ and $(s'^c_2,t'^c_2)$: a contradiction to $M$'s popularity. 

It is easy to see that the remaining cases of $(i,j)$ are totally analogous to the ones listed above and this finishes the proof of the lemma. \qed
\end{itemize}
\end{proof}

Recall that there are three level~1 gadgets associated with any clause $c$: these gadgets correspond to the three variables in $c$.

\begin{lemma}
\label{lemma3}
  Let $c = X_i \vee X_j \vee X_k$.
At least one of the level~1 gadgets corresponding to $X_i,X_j,X_k$ is in unstable state with respect to $M$.
\end{lemma}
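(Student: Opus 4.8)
The plan is to argue by contradiction. Suppose that, with respect to the popular matching $M$, all three level~1 gadgets of $X_i,X_j,X_k$ are in \emph{stable} state, i.e.\ $\{(x_v,y_v),(x'_v,y'_v)\}\subset M$ for each $v\in\{i,j,k\}$. By Lemma~\ref{lemma2} I may assume (the symmetric case being handled identically) that at least two of the \emph{first three} level~2 gadgets of $c$ are in unstable state. I would combine these two facts to exhibit an alternating structure forbidden by Theorem~\ref{thm:char-popular}, contradicting the popularity of $M$.

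The first thing I would isolate is the \emph{conduit} property of a level~1 gadget. If $X_v$ is false then $(x_v,y_v)\in M$; every level~2 vertex that names $y_v$ as its second choice, and every level~2 vertex that names $x_v$ as its second choice, ranks these level~1 vertices above $d_0$, whereas $x_v,y_v$ rank all level~2 vertices below $d_0$. Hence, whenever such a level~2 vertex prefers its level~1 neighbour to its own partner, the edge to $y_v$ is labelled $(+,-)$ and the edge to $x_v$ is labelled $(-,+)$, so the false variable supplies an open alternating segment through the matched edge $(x_v,y_v)$ that carries no $(+,+)$ edge. If instead $X_v$ were true, then $y_v$ is matched to $x'_v$ and $x_v$ to $y'_v$, both of which sit at a top choice, so any such segment dies at once. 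Thus the conduit through $X_v$ is open precisely when $X_v$ is false; this is the link between level~2 gadgets and truth values.

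Next I would track the blocking edges. Each unstable level~2 gadget carries exactly one internal $(+,+)$ edge, and according to which of its two unstable matchings it is in, exactly one of its two second-choice pointers (the $p$-side pointer to some $y_v$ or the $q$-side pointer to some $x_v$) is \emph{active}; threading the gadget along its active port passes the path through this blocking edge, while a stable level~2 gadget simply relays the path from one port to the other along a matched edge, exactly as in the proof of Lemma~\ref{lemma2}. Since all three conduits are now open, stringing two unstable gadgets together through these conduits—relaying across any intervening stable gadget, across $z_0z_1$ as in Lemma~\ref{lemma2}, and, when necessary, across the primed copies of the gadgets (which carry the same pointers)—produces an alternating path in $G_M$ containing two $(+,+)$ edges, which Theorem~\ref{thm:char-popular} forbids; when the routing closes up it yields an alternating cycle carrying a $(+,+)$ edge, equally forbidden. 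As every branch contradicts $M$'s popularity, not all three variables can be false.

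The hard part is the bookkeeping of this last step. Because a level~2 gadget can be unstable in one of two ways, I must verify that, over all choices of which (at least two) of the three gadgets are unstable and of their sub-states, the active ports line up so that the open conduits deliver both blocking edges onto a single alternating path (or one blocking edge onto a cycle). The delicate point is that a gadget's blocking edge is reachable only from its active port, so in some configurations the path must detour through the primed gadgets and the guaranteed level~3 blocking edges of Lemma~\ref{lemma1} in order to join two blocking edges; checking that every edge used along the way survives in $G_M$—which is exactly where the openness of the conduits, hence the falseness of the variables, is invoked—is the crux that makes the case analysis close.
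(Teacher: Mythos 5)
Your approach coincides with the paper's: assume all three level‑1 gadgets are stable, invoke Lemma~\ref{lemma2} to obtain two unstable gadgets among the first (or, symmetrically, last) three level‑2 gadgets of $c$, and thread their two blocking edges onto a single alternating path through the stable level‑1 ``conduits'' and the matched edge $(z_0,z_1)$, contradicting Theorem~\ref{thm:char-popular}. Your ``active port'' observation---each unstable sub-state exposes exactly one second-choice pointer, and entering the gadget through it crosses its blocking edge---is exactly the mechanism behind the paper's explicit path, so the plan is sound.

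The one point to correct is your closing hedge: no configuration requires a detour through the primed gadgets or the level‑3 blocking edges of Lemma~\ref{lemma1}, and worrying about such detours suggests the case analysis was not actually run. Because the pointers are cyclically shifted (gadget~1 points to $y_j$ and $x_k$, gadget~2 to $y_k$ and $x_i$, gadget~3 to $y_i$ and $x_j$), any two unstable gadgets among the first three have active ports that either name \emph{distinct} variables---then the path runs pointer, conduit, $(z_0,z_1)$, conduit, pointer, as in the paper's path $(q^c_2,p^c_1)$ - $(q^c_1,p^c_2)$ - $(y_j,x_j)$ - $(z_0,z_1)$ - $(y_i,x_i)$ - $(q^c_5,p^c_3)$ - $(q^c_3,p^c_5)$---or name the $x$- and $y$-vertex of one and the same variable, e.g.\ $q^c_2 \to x_k$ together with $p^c_5 \to y_k$, in which case the two blocking edges join \emph{directly} across the single matched edge $(x_k,y_k)$ and the $Z$-gadget is not needed at all. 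These two routing patterns close all twelve configurations (choice of the two unstable gadgets and of their sub-states); this is precisely why the paper may fix one representative configuration and dismiss the rest as analogous.
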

\begin{proof}
Suppose not. That is, assume that for some clause $c$, all three of its level~1 gadgets are in stable state. Let 
$c = X_i \vee X_j \vee X_k$. So $(x_r,y_r)$ and $(x'_r,y'_r)$ are in $M$ for all $r \in \{i,j,k\}$.

We know from 
Lemma~\ref{lemma2} that either two or more of the {\em first} three level~2 gadgets corresponding to $c$ are in unstable state with 
respect to $M$; or two or more of the {\em last} three level~2 gadgets corresponding to $c$ are in unstable state with respect to~$M$. 
We assume without loss of generality that the first and second gadgets, i.e., those on $p^c_i,q^c_i$, for $0 \le i \le 5$,  are in unstable
state with respect to~$M$.

We know from our lemmas in Section~\ref{sec:pop-edges} that there is no popular edge across gadgets. Thus $M$ matches the 6 vertices of a level~2 gadget 
with each other. In particular, it follows from Lemma~\ref{lem:better-than-d} that for the level~2 gadget on $p^c_i,q^c_i$ for $i = 0,1,2$, we have
(i)~$(p^c_0,q^c_0),(p^c_1,q^c_1),(p^c_2,q^c_2)$ in $M$ or (ii)~$(p^c_0,q^c_2),(p^c_1,q^c_1),(p^c_2,q^c_0)$ in $M$ or
(iii)~$(p^c_0,q^c_0),(p^c_1,q^c_2),(p^c_2,q^c_1)$ in $M$.

There are two unstable states for each level~2 gadget, i.e., either (ii) or (iii) above for the gadget on
$p^c_i,q^c_i$ for $i=0,1,2$. A level~2 gadget can be in either of these two unstable states in $M$---without loss of generality
assume that $M$ contains $(p^c_0,q^c_0),(p^c_1,q^c_2),(p^c_2,q^c_1)$ and $(p^c_3,q^c_5),(p^c_4,q^c_4),(p^c_5,q^c_3)$.
Observe that $p^c_2$ likes $y_j$ more than $q^c_1$ and similarly, $q^c_5$ likes $x_i$ more than $p^c_3$. Consider the following alternating path
$\rho$ with respect to~$M$: 
\[(q^c_2,p^c_1) \overset{(+,+)}{-}
(q^c_1,p^c_2) \overset{(+,-)}{-} 
(y_j,x_j) \overset{(-,+)}{-} 
(z_0,z_1) \overset{(+,-)}{-} 
(y_i,x_i) \overset{(-,+)}{-} 
(q^c_5,p^c_3) \overset{(+,+)}{-}
(q^c_3,p^c_5).\] 

Note that $M$ has to contain $(z_0,z_1)$ (by Lemma~\ref{lem:z_inside}).
Observe that $\rho$ is an alternating path in $G_M$ with {\em two} blocking edges $(p^c_1,q^c_1)$ and $(p^c_3,q^c_3)$.
This is a contradiction to $M$'s popularity (by Theorem~\ref{thm:char-popular}) and the lemma follows. \qed
\end{proof}

\begin{lemma}
  \label{lemma4}
Let $c = X_i \vee X_j \vee X_k$.
At most one of the level~1 gadgets corresponding to $X_i,X_j,X_k$ is in unstable state with respect to $M$.
\end{lemma}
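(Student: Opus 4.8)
The plan is to argue by contradiction. Suppose that for $c = X_i \vee X_j \vee X_k$ two of its three level~1 gadgets are in unstable state; since the three variables play symmetric roles in the level~0 gadgets of $c$ (every ordered pair of variables of $c$ is realised by one level~0 gadget), I may assume these are the gadgets of $X_i$ and $X_j$. Then $M$ contains $(x_i,y'_i),(x'_i,y_i),(x_j,y'_j),(x'_j,y_j)$, and $(x_i,y_i),(x_j,y_j)$ are blocking edges. I will exhibit in $G_M$ either an alternating path carrying the two distinct blocking edges $(x_i,y_i)$ and $(x_j,y_j)$, or an alternating cycle through one of them, each of which contradicts Theorem~\ref{thm:char-popular}.

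The two level~0 gadgets of $c$ linking $X_i$ and $X_j$ are the gadget $A$ whose relevant top vertices rank $y'_i$ and $x'_j$ second, and the gadget $B$ whose top vertices rank $y'_j$ and $x'_i$ second. Write $a_1,b_1,a_2,b_2$ for $A$, with $a_1$ ranking $y'_i$ and $b_1$ ranking $x'_j$, and analogously for $B$. By Lemma~\ref{lemma1} both gadgets are in stable state, so each realises one of $\{(a_1,b_1),(a_2,b_2)\}$ or $\{(a_1,b_2),(a_2,b_1)\}$. The structural fact I would isolate is that such a gadget forwards an alternating path on \emph{exactly one} side: in the state $\{(a_1,b_2),(a_2,b_1)\}$ the edge $(y'_i,a_1)$ survives (since $a_1$ prefers $y'_i$ to $b_2$) and the alternating path $y'_i \to a_1 \to b_2$ continues to $z_0$ or to $z_1$; in the state $\{(a_1,b_1),(a_2,b_2)\}$ the edge $(x'_j,b_1)$ survives (since $b_1$ prefers $x'_j$ to $a_1$) and $x'_j \to b_1 \to a_1$ reaches $z_0$ or $z_1$. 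Reaching the $Z$-gadget is justified by Lemma~\ref{lem:z_inside}: as $(z_0,z_1)\in M$ and, by their lists, $z_0$ and $z_1$ each rank every level~0 vertex above their $Z$-partner, every edge from $z_0$ or $z_1$ to a level~0 vertex survives in $G_M$ with a $+$ on the $z$-side, so the exit vertex of a gadget can be attached to whichever of $z_0,z_1$ I please.

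This yields four cases according to the states of $A$ and $B$. If $A$ forwards $y'_i$ and $B$ forwards $y'_j$, or symmetrically $A$ forwards $x'_j$ and $B$ forwards $x'_i$, I route the fragment of $X_i$ out to $z_0$ through $A$ and the fragment of $X_j$ out to $z_1$ through $B$; these vertex-disjoint pieces splice across the matching edge $(z_0,z_1)$ into one alternating path whose two $(+,+)$ edges are exactly $(x_i,y_i)$ and $(x_j,y_j)$. If instead $A$ forwards $y'_i$ and $B$ forwards $x'_i$, or symmetrically $A$ forwards $x'_j$ and $B$ forwards $y'_j$, then the \emph{two} bottom vertices of a single unstable gadget reach $Z$; sending one to $z_0$ and the other to $z_1$ and closing through $(z_0,z_1)$ gives an alternating cycle containing the single blocking edge $(x_i,y_i)$ (resp.\ $(x_j,y_j)$). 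Either way Theorem~\ref{thm:char-popular} is violated.

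The crux---and the step I expect to be most delicate---is precisely the one-sided forwarding property: a stable level~0 gadget can relay an alternating path on its $a$-side in one state and on its $b$-side in the other, but never on both, and it is this that makes the four cases exhaustive and forces a forbidden configuration in each. Establishing it, and then confirming the $(\pm,\pm)$ label of every edge along the four paths and cycles, is mostly careful bookkeeping rather than a conceptual obstacle; the only external facts needed are $(z_0,z_1)\in M$ and the stability of the level~0 gadgets.
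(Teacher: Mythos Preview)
Your proposal is correct and follows essentially the same approach as the paper: both use the two level~0 gadgets of $c$ that link $X_i$ and $X_j$ in the two orientations (the paper's gadgets on $a^c_5,b^c_5,a^c_6,b^c_6$ and $a'^c_5,b'^c_5,a'^c_6,b'^c_6$), case-split on their two stable states, and route through $(z_0,z_1)\in M$ to produce either an alternating path with the two blocking edges $(x_i,y_i),(x_j,y_j)$ or an alternating cycle through one of them. Your ``one-sided forwarding'' abstraction is exactly the paper's observation that in each level~0 gadget precisely one of $a_1,b_1$ is matched to its third choice, and your four cases match the paper's Cases~1--4 verbatim up to relabelling.
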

\begin{proof}
  Suppose not. So at least two of the three level~1 gadgets corresponding to  $X_i,X_j,X_k$ are in unstable state with respect to~$M$. Assume without loss of generality
  that the gadgets corresponding to variables $X_i$ and $X_j$ are in unstable state. So the edges $(x_i,y'_i), (x'_i,y_i)$ are in $M$, similarly the edges
  $(x_j,y'_j), (x'_j,y_j)$ are in~$M$.

Recall that $a^c_5$ regards $y'_i$ as its second choice neighbor and $b^c_5$ regards $x'_j$ as its second choice neighbor. Similarly, $b'^c_5$ regards $x'_i$ as its second choice neighbor and $a'^c_5$ regards $y'_j$ as its second choice neighbor.

In the popular matching $M$, level~0 vertices are matched within their own gadget. Therefore, either $\{(a^c_5,b^c_5),(a^c_6,b^c_6)\} \subset M$ or
$\{(a^c_5,b^c_6),(a^c_6,b^c_5)\} \subset M$; similarly, $\{(a'^c_5,b'^c_5),(a'^c_6,b'^c_6)\} \subset M$ or $\{(a'^c_5,b'^c_6),(a'^c_6,b'^c_5)\} \subset M$. 
Thus, the following two observations clearly hold:
\begin{itemize}
\item either $a^c_5$ or $b^c_5$ is matched to its third choice neighbor;
\item either $a'^c_5$ or $b'^c_5$ is matched to its third choice neighbor.
\end{itemize}

Based on which of these vertices are matched to their third choice neighbors, we have four cases as shown below. 
Each of these 4 cases results in a forbidden alternating path/cycle (as given in Theorem~\ref{thm:char-popular}),
thus proving the lemma.

\smallskip

\noindent{\em Case 1.} The vertices $a^c_5$ and $a'^c_5$ are matched to their third choice neighbors. 
So $(a^c_5,b^c_6),(a^c_6,b^c_5)$ and $(a'^c_5,b'^c_6),(a'^c_6,b'^c_5)$ are in $M$.
Consider the following alternating path $\rho$ with respect to~$M$: 
\[(x'_i,y_i) \overset{(+,+)}{-}
(x_i,y'_i) \overset{(-,+)}{-}
(a^c_5,b^c_6) \overset{(-,+)}{-}
(z_0,z_1) \overset{(+,-)}{-}
(b'^c_6,a'^c_5) \overset{(+,-)}{-}
(y'_j,x_j) \overset{(+,+)}{-}
(y_j,x'_j).\]

Observe that $\rho$ is an alternating path in $G_M$ with two blocking edges $(x_i,y_i)$ and  $(x_j,y_j)$, a contradiction
to $M$'s popularity.

\smallskip

\noindent{\em Case 2.} The vertices $a^c_5$ and $b'^c_5$ are matched to their third choice neighbors. 
So $(a^c_5,b^c_6),(a^c_6,b^c_5)$ and $(a'^c_5,b'^c_5),(a'^c_6,b'^c_6)$ are in $M$.
Consider the following alternating cycle $C$ with respect to~$M$: 
\[(b^c_6,a^c_5) 
\overset{(+,-)}{-}
(y'_i,x_i) \overset{(+,+)}{-}
(y_i,x'_i) \overset{(-,+)}{-}
(b'^c_5,a'^c_5) \overset{(-,+)}{-}
(z_1,z_0) \overset{(+,-)}{-} 
(b^c_6,a^c_5).\]

Observe that $C$ is an alternating cycle in $G_M$ with a blocking edge $(x_i,y_i)$, a contradiction
to $M$'s popularity.

\smallskip

\noindent{\em Case 3.} The vertices $b^c_5$ and $a'^c_5$ are matched to their third choice neighbors. 
So $(a^c_5,b^c_5),(a^c_6,b^c_6)$ and $(a'^c_5,b'^c_6),(a'^c_6,b'^c_5)$ are in $M$.
Consider the following alternating cycle $C'$ with respect to~$M$:
\[  (a^c_5,b^c_5) \overset{(+,-)}{-}
(x'_j,y_j) \overset{(+,+)}{-}
(x_j,y'_j) \overset{(-,+)}{-}
(a'^c_5,b'^c_6) \overset{(-,+)}{-}
(z_0,z_1) \overset{(+,-)}{-}
(a^c_5,b^c_5).\]

Observe that $C'$ is an alternating cycle in $G_M$ with a blocking edge $(x_j,y_j)$, a contradiction
to $M$'s popularity.

\smallskip

\noindent{\em Case 4.} The vertices $b^c_5$ and $b'^c_5$ are matched to their third choice neighbors. 
So $(a^c_5,b^c_5),(a^c_6,b^c_6)$ and $(a'^c_5,b'^c_5),(a'^c_6,b'^c_6)$ are in $M$.
Consider the following alternating path $\rho'$ with respect to~$M$:
\[ (y'_i,x_i) \overset{(+,+)}{-}
(y_i,x'_i) \overset{(-,+)}{-}
(b'^c_5,a'^c_5) \overset{(-,+)}{-}
(z_0,z_1) \overset{(+,-)}{-}
(a^c_5,b^c_5) \overset{(+,-)}{-}
(x'_j,y_j) \overset{(+,+)}{-}
(x_j,y'_j).\]

Observe that $\rho'$ is an alternating path in $G_M$ with two blocking edges $(x_i,y_i)$ and  $(x_j,y_j)$, a contradiction
to $M$'s popularity. \qed
\end{proof}

Let $M$ be a popular matching in $G$.
It follows from Lemmas~\ref{lemma3} and~\ref{lemma4} that {\em exactly one} of the level~1 gadgets corresponding to the variables in $c$, for
every clause $c$ in $B$,
is in unstable state with respect to $M$. This allows us to set a 1-in-3 satisfying truth assignment to instance $B$. For each variable $X_i$ in $B$ do:

\begin{itemize}
\item if the gadget corresponding to $X_i$ is in {\em unstable} state then set $X_i = \mathsf{true}$ else set $X_i = \mathsf{false}$.
\end{itemize}

It follows from Lemmas~\ref{lemma3} and~\ref{lemma4} that
this is a 1-in-3 satisfying truth assignment for~$B$. We have thus shown the following result.

\begin{theorem}
  \label{thm1}
  If $G$ admits a popular matching then $B$ has a 1-in-3 satisfying truth assignment.
\end{theorem}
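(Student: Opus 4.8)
The plan is to extract a truth assignment directly from the states of the level~1 gadgets under a popular matching, and then verify the 1-in-3 condition clause by clause using Lemmas~\ref{lemma3} and~\ref{lemma4}. First I would fix an arbitrary popular matching $M$ in $G$, which exists by hypothesis. The structural results of Section~\ref{sec:pop-edges} guarantee that no inter-gadget edge is popular, so $M$ matches the four vertices $x_i,y_i,x'_i,y'_i$ of each variable's level~1 gadget among themselves; hence this gadget is either in stable state (via $\{(x_i,y_i),(x'_i,y'_i)\}$) or in unstable state (via $\{(x_i,y'_i),(x'_i,y_i)\}$, which is blocked by $(x_i,y_i)$). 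The point I would stress is that each variable $X_i$ owns exactly one level~1 gadget, so its state is a property of $M$ alone and is independent of which clause we inspect. I would then define the assignment by setting $X_i = \mathsf{true}$ precisely when the level~1 gadget of $X_i$ is in unstable state with respect to $M$, and $X_i = \mathsf{false}$ otherwise.

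Next I would check that this assignment satisfies every clause in the 1-in-3 sense. Fixing a clause $c = X_i \vee X_j \vee X_k$, Lemma~\ref{lemma3} tells us that at least one of the three level~1 gadgets corresponding to $X_i,X_j,X_k$ is in unstable state, so at least one of these variables is set to $\mathsf{true}$. Dually, Lemma~\ref{lemma4} tells us that at most one of these three gadgets is in unstable state, so at most one of the variables is $\mathsf{true}$. Combining the lower and upper bounds yields that \emph{exactly} one of $X_i,X_j,X_k$ is $\mathsf{true}$, which is exactly the 1-in-3 requirement for clause $c$. Since $c$ was arbitrary, the assignment is a 1-in-3 satisfying assignment for $\phi$, which is the claim.

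I do not expect a genuine obstacle inside this theorem itself: all of the combinatorial work has already been absorbed into Lemmas~\ref{lemma1}--\ref{lemma4} together with the popular-edge analysis of Section~\ref{sec:pop-edges}, which force every popular matching to be internal to the gadgets and to exhibit the required stable/unstable pattern across levels. The only points that warrant explicit care are (i)~confirming that the level~1 gadget state is determined by $M$, so that a variable occurring in several clauses receives a single, consistent truth value across all of them, and (ii)~verifying that for a fixed clause $c$ the two lemmas speak about the \emph{same} triple of level~1 gadgets, so that their respective bounds of ``at least one'' and ``at most one'' compose cleanly into ``exactly one''. Both are routine once the assignment is defined per variable rather than per occurrence.
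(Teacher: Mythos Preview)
Your proposal is correct and follows essentially the same route as the paper: define $X_i=\mathsf{true}$ iff the level~1 gadget of $X_i$ is in unstable state with respect to the given popular matching, then invoke Lemmas~\ref{lemma3} and~\ref{lemma4} to get ``at least one'' and ``at most one'' true variable per clause. The paper's argument is exactly this, with the structural results of Section~\ref{sec:pop-edges} (forcing intra-gadget matching) implicitly used just as you describe.
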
  

\section{The converse}
\label{se:reverse-construction}
We will now show the converse of Theorem~\ref{thm1}, i.e., if $B$ has a 1-in-3 satisfying truth assignment $S$ then $G$ admits a popular matching. 
We will use $S$ to construct a popular matching $M$ in $G$ as follows. To begin with, $M = \emptyset$.

\paragraph{Level~1.}
For each variable $X_i$ do:
\begin{itemize}
 \item if $X_i$ is set to $\mathsf{true}$ in $S$ then add $(x_i,y'_i)$ and $(x'_i,y_i)$ to $M$;
 \item else  add $(x_i,y_i)$ and $(x'_i,y'_i)$ to $M$.
\end{itemize}

\noindent{\em Remark.} Note that the level~1 gadget of a variable set to true is in unstable state and the level~1 gadget of a variable
set to false is in stable state.

\smallskip

For each clause $c = X_i \vee X_j \vee X_k$, we know that exactly one of $X_i,X_j,X_k$ is set to $\mathsf{true}$ 
in $S$. Assume without loss of generality that $X_k = \mathsf{true}$ in~$S$. 
For the level~0, 2, and 3 gadgets corresponding to $c$, we do as follows:

\paragraph{Level~0.} Recall that there are {\em six} level~0 gadgets that correspond to $c$.
For the first 3 gadgets (these are on vertices $a^c_i,b^c_i$ for $i = 1,\ldots,6$) do:
\begin{itemize}
\item include $(a^c_1,b^c_2),(a^c_2,b^c_1)$ from the first gadget;
\item include $(a^c_3,b^c_3),(a^c_4,b^c_4)$ from the second gadget;
\item choose either $(a^c_5,b^c_5),(a^c_6,b^c_6)$ or $(a^c_5,b^c_6),(a^c_6,b^c_5)$ from the third gadget.
\end{itemize}

Observe that since the third variable $X_k$ of $c$ was set to be $\mathsf{true}$, cross edges are fixed in the first gadget 
(see Fig.~\ref{fig:0}), while the other stable matching (horizontal edges) is chosen in the second gadget. 

For the fourth and fifth gadgets, we will do exactly the opposite. Also, it will not matter which stable pair of edges is
chosen from the third and sixth gadgets.
So for the last 3 level~0 gadgets corresponding to $c$ (these are on vertices $a'^c_i,b'^c_i$ for $i = 1,\ldots,6$) do:
\begin{itemize}
\item include $(a'^c_1,b'^c_1),(a'^c_2,b'^c_2)$ from the fourth gadget;
\item include $(a'^c_3,b'^c_4),(a'^c_4,b'^c_3)$ from the fifth gadget.
\item choose either $(a'^c_5,b'^c_5),(a'^c_6,b'^c_6)$ or $(a'^c_5,b'^c_6),(a'^c_6,b'^c_5)$ from the sixth gadget.
\end{itemize}

\paragraph{Level~2.}Recall that there are {\em six} level~2 gadgets that correspond to $c$. For the first 3 gadgets (these are on vertices $p^c_i,q^c_i$ for $i = 0,\ldots,8$) do:
\begin{itemize}
\item include $(p^c_0,q^c_2),(p^c_1,q^c_1),(p^c_2,q^c_0)$ from the first gadget
\item include $(p^c_3,q^c_3),(p^c_4,q^c_5),(p^c_5,q^c_4)$ from the second gadget
\item include $(p^c_6,q^c_6),(p^c_7,q^c_7),(p^c_8,q^c_8)$ from the third gadget
\end{itemize}

In the first three gadgets, because $X_k = \mathsf{true}$, the third one is set to parallel edges, reaching 
the stable state, while the first one is blocked by the top horizontal edge and the second one is blocked by the middle horizontal edge.
Include isomorphic edges (to the above ones) from the last three level~2 gadgets corresponding to $c$, i.e., include 
$(p'^c_0,q'^c_2),(p'^c_1,q'^c_1),(p'^c_2,q'^c_0)$ from the fourth gadget, and so on. On this level, the last three gadgets mimic the matching 
edges from the first three gadgets, unlike in level~0. 
\paragraph{Level~3.} For the first level~3 gadget corresponding to $c$ do:
\begin{itemize}
\item include $(s^c_0,t^c_3),(s^c_1,t^c_1),(s^c_2,t^c_2),(s^c_3,t^c_0)$ in $M$.
\end{itemize}

Since the third variable in $c$ was set to be $\mathsf{true}$, the vertices $s_0^c$ and $t^c_0$ are matched to $t_3^c$ and $s_3^c$,
respectively---thus the bottom horizontal edge $(s_3^c,t_3^c)$ blocks~$M$.
Include isomorphic edges (to the above ones) for the second level~3 gadget corresponding to $c$, i.e., 
include $(s'^c_0,t'^c_3),(s'^c_1,t'^c_1),(s'^c_2,t'^c_2)$, $(s'^c_3,t'^c_0)$ in $M$. Once again, the second gadget mimics the matching edges 
on the first gadget.

\paragraph{$Z$-gadget and $D$-gadget.} Finally include the edges $(z_0,z_1),(z_2,z_3),(z_4,z_5)$ from the $Z$-gadget in $M$. By Lemma~\ref{lem:z_inside},
every popular matching in $G$ has to include these edges. Also include the edges $(d_0,d_1),(d_2,d_3)$ from the $D$-gadget in $M$.

\subsection{The popularity of $M$}
\label{se:reverse_popularity}

We will now prove the popularity of the above matching $M$ via the LP framework of popular matchings initiated in \cite{KMN09} for 
bipartite graphs. This framework generalizes to provide a sufficient condition for popularity in non-bipartite graphs~\cite{FKPZ18}.
This involves showing a witness $\vec{\alpha} \in \mathbb{R}^{n}$ such that $\vec{\alpha}$ is a {\em certificate} of $M$'s popularity.
In order to define the constraints that $\vec{\alpha}$ has to satisfy so as to certify $M$'s popularity,
let us define an edge weight function $w_M$ as follows.

For any edge $(u,v)$ in $G$ do:
\begin{itemize}
  \item if $(u,v)$ is labeled $(-,-)$ then set $w_M(u,v) = -2$;
  \item if $(u,v)$ is labeled $(+,+)$ then set $w_M(u,v) = 2$;
  \item else set  $w_M(u,v) = 0$. \ (So $w_M(e) = 0$ for all $e \in M$.)
\end{itemize}

Let $N$ be any perfect matching in $G$. It is easy to see from the definition of the edge weight function $w_M$ that 
$w_M(N) = \phi(N,M) - \phi(M,N)$.

Let the max-weight perfect fractional matching LP in the graph $G$ with edge weight function $w_M$ be our primal LP. This is \ref{LP0}
defined below. Here $\delta(u)$ denotes the set of edges incident to vertex~$u$.

\begin{linearprogram}
  {
    \label{LP0}
    \maximize{\sum_{e \in E} w_M(e)x_e}
  }
  \textstyle \sum_{e\in\delta(u)}x_e \ = \ 1  \ \ \forall\, u \in V\ & \ \ \ \ \ \ \text{and}\ \ \ \ \ \ \ x_e \ge 0 \ \ \forall\, e \in E\notag
\end{linearprogram}

If the optimal value of \ref{LP0} is at most 0 then $w_M(N) \le 0 $ for all perfect matchings $N$ in $G$, i.e.,
$\phi(N,M) \le \phi(M,N)$. Observe that this means $\phi(M',M) \le \phi(M,M')$ for {\em all} matchings $M'$ in $G$.
This is because $G$ is a complete graph on an even number of vertices, so for any matching $M'$, there is a perfect matching $N$ such that
$M' \subseteq N$. Thus $\phi(M',M) \le \phi(N,M) \le \phi(M,N) \le \phi(M,M')$.
Hence $M$ is a popular matching in~$G$.

Consider the LP that is dual to LP1. This is \ref{LP1} given below in variables $\alpha_u$, where $u \in V$.

\begin{linearprogram}
  {
    \label{LP1}
    \minimize{\sum_{u \in V} \alpha_u}
  }
  \textstyle \alpha_u + \alpha_v  \ & \ge \ \ w_M(u,v) \ \ \forall\, (u,v) \in E\notag
\end{linearprogram}

If we show a dual feasible solution $\vec{\alpha}$ such that $\sum_{u \in V}\alpha_u = 0$
then the primal optimal value is at most 0, i.e., $M$ is a popular matching. 
In order to prove the popularity of $M$, we define $\vec{\alpha}$ as follows. For $r \in \{1,\ldots,\kappa\}$ do: (recall that
$\kappa$ is the number of variables in the formula $B$)
\begin{itemize}
\item if $X_r$ was set to $\mathsf{true}$ then set $\alpha_{x_r} = \alpha_{y_r} = 1$ and $\alpha_{x'_r} = \alpha_{y'_r} = -1$;
\item else set $\alpha_{x_r} = \alpha_{y_r} = \alpha_{x'_r} = \alpha_{y'_r} = 0$.
\end{itemize}

Let clause $c = X_i \vee X_j \vee X_k$. 
Recall that we assumed that $X_i = X_j = \mathsf{false}$ and $X_k = \mathsf{true}$.
For the vertices in clauses corresponding to $c$, we will set $\alpha$-values as follows.
\begin{itemize}
\item For every level~0 vertex $v$ do: set $\alpha_v = 0$.
\item For the first three level~2 gadgets corresponding to $c$ do:
\begin{itemize}
\item set $\alpha_{p^c_0} = \alpha_{q^c_0} = 1$, $\alpha_{p^c_1} = 1, \alpha_{q^c_1} = -1$, and $\alpha_{p^c_2} = \alpha_{q^c_2} = -1$;
\item set $\alpha_{p^c_3} = -1, \alpha_{q^c_3} = 1$, $\alpha_{p^c_4} = \alpha_{q^c_4} = 1$, and $\alpha_{p^c_5} = \alpha_{q^c_5} = -1$;
\item set $\alpha_{p^c_6} = \alpha_{q^c_6} = \alpha_{p^c_7} = \alpha_{q^c_7} = \alpha_{p^c_8} = \alpha_{q^c_8} = 0$.
\end{itemize}
\end{itemize}

The setting of $\alpha$-values is analogous for vertices in the last three level~2 gadgets corresponding to $c$.
For the first level~3 gadget corresponding to $c$ do:
\begin{itemize}
\item set  $\alpha_{s^c_0} = \alpha_{t^c_0} = -1$, $\alpha_{s^c_1} = -1, \alpha_{t^c_1} = 1$, $\alpha_{s^c_2} = -1$, $\alpha_{t^c_2} = 1$, and 
$\alpha_{s^c_3} = \alpha_{t^c_3} = 1$.
\end{itemize}

The setting of $\alpha$-values is analogous for vertices in the other level~3 gadget corresponding to $c$.
For the $z$-vertices do: set $\alpha_u = 0$ for all $u \in \{z_0,\ldots,z_5\}$. For the $d$-vertices do:
\begin{itemize}
\item set $\alpha_{d_0} = \alpha_{d_2} = -1$ and $\alpha_{d_1} = \alpha_{d_3} = 1$.
\end{itemize}


\noindent{\bf Properties of $\vec{\alpha}$.}
For every $(u,v) \in M$, either $\alpha_u = \alpha_v = 0$ or $\{\alpha_u,\alpha_v\} = \{-1,1\}$; so $\alpha_u + \alpha_v = 0$.
Since $M$ is a perfect matching, we have $\sum_{u \in V}\alpha_u = 0$. The claims stated below 
show that $\vec{\alpha}$ is a feasible solution to \ref{LP1}. This will prove the popularity of $M$.

We need to show that every edge $(u,v)$ is {\em covered}, i.e., $\alpha_u + \alpha_v \ge w_M(u,v)$. We have already observed that
for any $(u,v) \in M$, $\alpha_u + \alpha_v = 0 = w_M(u,v)$.

\begin{new-claim}
\label{claim1}
  Let $(u,v)$ be an intra-gadget blocking edge to $M$. Then $\alpha_u + \alpha_v = 2 = w_M(u,v)$.
\end{new-claim}
\begin{proof}
Level~1 gadgets that correspond to variables set to $\mathsf{true}$ have blocking edges. More precisely, for every variable $X_k$ set to $\mathsf{true}$, $(x_k,y_k)$ is a blocking edge to $M$ and we have $\alpha_{x_k} = \alpha_{y_k} = 1$. Thus $\alpha_{x_k} + \alpha_{y_k} = 2 = w_M(x_k,y_k)$. Similarly, consider any level~2 or level~3 gadget that is in unstable state: such a gadget has a blocking edge within it, say $(p^c_0,q^c_0)$ or $(p^c_4,q^c_4)$ or $(s^c_3,t^c_3)$, and both endpoints of such an edge have their $\alpha$-values set to 1. For the $D$-gadget, $(d_1,d_3)$ is a blocking edge and we have $\alpha_{d_1} = \alpha_{d_3} = 1$. There are no blocking edges to $M$ in the $Z$-gadget or in a level~0 gadget. Thus all intra-gadget blocking edges are covered. \qed
\end{proof}

\begin{new-claim}
\label{claim2}
  Let $(u,v)$ be an intra-gadget edge that is non-blocking. Then $\alpha_u + \alpha_v \geq w_M(u,v)$.
\end{new-claim}
\begin{proof}
For any edge $(z_i,z_j)$ where $i,j \in \{0,1,\ldots,5\}$, we have $\alpha_{z_i} + \alpha_{z_j} = 0 = w_M(z_i,z_j)$.
Similarly, all edges within the $D$-gadget are covered. For any variable $X_i$ set to $\mathsf{false}$:
$\alpha_{x_i} + \alpha_{y'_i} = 0 = w_M(x_i,y'_i)$
and similarly, $\alpha_{x'_i} + \alpha_{y_i} = 0 = w_M(x'_i,y_i)$. For any variable $X_k$ set to $\mathsf{true}$:
$\alpha_{x'_k} + \alpha_{y'_k} = -2 = w_M(x'_k,y'_k)$.

For any $(a^c_i,b^c_i)$, we have $\alpha_{a^c_i} + \alpha_{b^c_i} = 0 = w_M(a^c_i,b^c_i)$. Similarly,
$\alpha_{a^c_{2i-1}} + \alpha_{b^c_{2i}} = 0 = w_M(a^c_{2i-1},b^c_{2i})$, also $\alpha_{a^c_{2i}} + \alpha_{b^c_{2i-1}} = 0 = w_M(a^c_{2i},b^c_{2i-1})$
for all $i$ and $c$.

We also have for all $c$: $\alpha_{p^c_1} + \alpha_{q^c_2} = 0 = w_M(p^c_1,q^c_2)$ while
$\alpha_{p^c_2} + \alpha_{q^c_1} = -2 = w_M(p^c_2,q^c_1)$ and $\alpha_{p^c_2} + \alpha_{q^c_2} = -2 = w_M(p^c_2,q^c_2)$.
It is similar for all other edges within level~2 gadgets and also for edges within level~3 gadgets.
Thus it is easy to see that for all intra-gadget non-blocking edges $(u,v)$, we have $\alpha_u + \alpha_v \ge w_M(u,v)$. \qed
\end{proof}

\begin{new-claim}
\label{claim3}
  Let $(u,v)$ be any inter-gadget edge. Then $\alpha_u + \alpha_v \geq w_M(u,v)$.
\end{new-claim}
\begin{proof}
  We show that no inter-gadget edge blocks~$M$. The vertices $z_0$ and $z_1$ prefer some neighbors in levels~0, 1, 2 to each other and the $\alpha$-value of each
  of these neighbors is either $0$ or $1$. In particular, $\alpha_{x_i} \ge 0$ and $\alpha_{y_i} \ge 0$, $\alpha_{p^c_1} \ge 0$ and $\alpha_{q^c_0} \ge 0$, and so on while 
  $\alpha_{a^c_i} = \alpha_{b^c_i} =  \alpha_{a'^c_i} = \alpha_{b'^c_i} = 0$ for all $i$ and $c$. Note that all these vertices prefer their partners in $M$ to $z_0$ or $z_1$; thus for any such edge $e$, we have
  $w_M(e) = 0$. Since $\alpha_{z_i} = 0$ for all $i$, the edges incident to $z_i$ are covered for all $i$.

Consider edges between a level~0 vertex and a level~1 vertex, such as $(a^c_1,y'_j)$ or $(b^c_1,x'_k)$: regarding the former edge, we have
$w_M(a^c_1,y'_j) = 0 = \alpha_{a^c_1} + \alpha_{y'_j}$ and for the latter edge, we have $w_M(b^c_1,x'_k) = -2 < -1 = \alpha_{b^c_1} + \alpha_{x'_k}$.
It can similarly be verified that every edge between a level~0 vertex and a level~1 vertex is covered.

Consider edges between a level~1 vertex and a level~2 vertex, such as $(p^c_2,y_j)$ or $(x_k,q^c_2)$: recall that $(p^c_0,q^c_2)$ and $(x_k,y'_k)$
are in $M$ and so $w_M(x_k,q^c_2) = 0$; we set $\alpha_{q^c_2} = -1$ and $\alpha_{x_k} = 1$, thus $\alpha_{x_k} + \alpha_{q^c_2} = w_M(x_k,q^c_2)$. 
We have $w_M(p^c_2,y_j) = -2$ since $(p^c_2,q^c_0)$ and $(x_j,y_j)$ are in $M$ and so this edge is covered.
It can similarly be verified that every edge between a level~1 vertex and a level~2 vertex is covered.

Consider edges between a level~2 vertex and a level~3 vertex, such as those incident to $s^c_0$ or $t^c_0$: we have
$w_M(s^c_0,q^c_0) = w_M(s^c_0,q^c_3) = 0$ and $\alpha_{s^c_0} = -1$ while $\alpha_{q^c_0} = \alpha_{q^c_3} = 1$.
Similarly, $w_M(p^c_7,t^c_0) = w_M(p^c_4,t^c_0) = -2$ and so these edges are covered.
It is analogous with edges incident to $s'^c_0$ or $t'^c_0$.

Consider any edge $e$ whose one endpoint is in the $D$-gadget and the other endpoint is outside the $D$-gadget. It is easy to see that
$w_M(e) = -2$, hence this edge is covered. Similarly, inter-gadget edges between levels 0 and 2, levels 0 and 3, and levels 1 and 3 all have weight $-2$ and hence they are covered. \qed
\end{proof}

Thus we have shown the following theorem.

\begin{theorem}
  \label{thm2}
  If $B$ has a 1-in-3 satisfying truth assignment then $G$ admits a popular matching.
\end{theorem}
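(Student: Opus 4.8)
The plan is to prove Theorem~\ref{thm2} by exhibiting an explicit feasible dual solution $\vec{\alpha}$ to \ref{LP1} whose coordinates sum to zero, since by the argument preceding \ref{LP0} this certifies that the primal optimum is at most $0$, which is exactly the statement that $M$ is popular. The matching $M$ has already been constructed from the satisfying assignment $S$ in Section~\ref{se:reverse-construction}, and the candidate $\vec{\alpha}\in\{0,\pm1\}^{|V|}$ has been written down gadget by gadget; what remains is to verify the two global facts that make this certificate work: that $\sum_{u\in V}\alpha_u = 0$ and that every edge $(u,v)\in E$ satisfies the dual constraint $\alpha_u+\alpha_v \ge w_M(u,v)$.

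First I would dispense with the sum-to-zero condition. Because $M$ is a perfect matching (as $G$ is complete on an even number of vertices), it suffices to check that each matched pair contributes zero, i.e. $\alpha_u+\alpha_v=0$ for every $(u,v)\in M$. By inspection of the $\alpha$-assignment, each matching edge either joins two vertices with $\alpha=0$ or joins a $+1$ vertex to a $-1$ vertex; summing over the $n/2$ matched pairs then gives $\sum_{u\in V}\alpha_u=0$. This also simultaneously establishes the dual constraint for every edge of $M$, since $w_M(e)=0$ for $e\in M$.

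Next I would verify edge coverage by partitioning the edges into the three classes handled in Claims~\ref{claim1}--\ref{claim3}: blocking edges, non-blocking intra-gadget edges, and inter-gadget edges. For blocking edges (Claim~\ref{claim1}) I would observe that $M$ was built so that the only blocking edges live inside the level~1 gadgets of true variables, inside the level~2 and level~3 gadgets forced into unstable state, and inside the $D$-gadget; in each such case both endpoints were assigned $\alpha=1$, giving $\alpha_u+\alpha_v=2=w_M(u,v)$, and crucially there are no blocking edges in the $Z$-gadget or in any level~0 gadget. For non-blocking intra-gadget edges (Claim~\ref{claim2}) the labels are in $\{(+,-),(-,+),(-,-)\}$, so $w_M(u,v)\in\{0,-2\}$, and the local $\alpha$-values were chosen precisely so that $\alpha_u+\alpha_v\in\{0,-2\}$ dominates this; this is a finite per-gadget check. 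For inter-gadget edges (Claim~\ref{claim3}) the key structural input from Section~\ref{sec:pop-edges} is that no inter-gadget edge blocks $M$, so $w_M(u,v)\le 0$, and one checks level-by-level that the endpoints' $\alpha$-values sum to at least $w_M(u,v)$, using that $\alpha_{z_i}=0$ and that the ``reachable'' second-choice neighbors across levels were assigned the correct signs.

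The main obstacle is the inter-gadget coverage of Claim~\ref{claim3}, and in particular the edges emanating from $z_0,z_1$ and the second-choice cross-level edges (such as $(a^c_1,y'_j)$, $(p^c_2,y_j)$, and the edges incident to $s^c_0,t^c_0$). These are the edges the whole gadget design was engineered to control, so the verification is not merely mechanical: one must use the specific $\alpha$-values dictated by the \emph{unique} true variable $X_k$ in each clause to ensure that a potentially blocking-looking $(+,\cdot)$ edge actually has $w_M=-2$ (because the relevant endpoint is matched to a strictly better partner), after which coverage is immediate. I would organize this as a case analysis driven by which variable of the clause is true, noting that the ``shift'' and ``orientation'' structure of the level~0 and level~2 gadgets makes the remaining cases symmetric to the ones checked explicitly. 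Once all three claims are in place, $\vec{\alpha}$ is dual-feasible with zero objective value, and the theorem follows.
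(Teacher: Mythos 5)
Your proposal is correct and follows essentially the same route as the paper: the same LP-duality certificate, the same gadget-by-gadget assignment of $\vec{\alpha}$, the same observation that perfection of $M$ plus $\alpha_u+\alpha_v=0$ on matching edges gives $\sum_u \alpha_u = 0$, and the same three-way case split (blocking edges, non-blocking intra-gadget edges, inter-gadget edges) that the paper carries out in Claims~\ref{claim1}--\ref{claim3}. Your identification of the inter-gadget edges incident to $z_0,z_1$ and the cross-level second-choice edges as the cases requiring genuine use of the 1-in-3 structure matches where the paper's verification does its real work.
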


Theorem~\ref{thm:main} stated in Section~\ref{sec:intro} follows from Theorems~\ref{thm1} and \ref{thm2}.
Thus the popular matching problem in a roommates instance on $n$ vertices with complete preference lists is $\NP$-complete for even $n$.

\medskip
\medskip

\noindent{\bf Acknowledgments.} Thanks to Chien-Chung Huang for asking us about the complexity of the popular roommates problem
with complete preference lists. We are grateful to the reviewers for their helpful comments and suggestions that improved the presentation of the paper.

\end{document}